      \newcommand{\Bog}{{\operatorname{Bog}}}
      \renewcommand{\i}{{\operatorname{i}}}
     \newcommand{\e}{\operatorname{e}}
     \newcommand{\ext}{{\operatorname{ext}}}
     \newcommand{\s}{{\operatorname{s}}}
     \renewcommand{\b}{{\operatorname{b}}}
     \renewcommand{\d}{{\operatorname{d}}}
     \newcommand{\Ran}{{\operatorname{Ran}}}
     \newcommand{\Z}{{\mathbb{Z}}}
\newcommand{\vecsp}{\overrightarrow{\operatorname{sp}}}
\newcommand{\zz}{{\mathbb{Z}}}
\newcommand{\cc}{{\mathbb{C}}}
\newcommand{\rr}{{\mathbb{R}}}
\newcommand{\0}{{\mathbf{0}}}
\newcommand{\x}{\mathbf{x}}
\newcommand{\q}{{\bf{q}}}
\newcommand{\p}{{\bf{p}}}
\renewcommand{\k}{{\bf k}}
\renewcommand{\p}{\mathbf{p}}
\newcommand{\nph}{{\rm nph}}
\newcommand{\cK}{{\mathcal K}}
\newcommand{\cH}{{\mathcal H}}
\newcommand{\hc}{{\rm hc}}
     \theoremstyle{plain}
     \newtheorem{thm}{Theorem}[section]
     \newtheorem{lemma}[thm]{Lemma}
     \newtheorem{cor}[thm]{Corollary}
     \theoremstyle{definition}
     \newtheorem{remark}[thm]{Remark}
\def\bbbone{{\mathchoice {\rm 1\mskip-4mu l} {\rm 1\mskip-4mu l}
{\rm 1\mskip-4.5mu l} {\rm 1\mskip-5mu l}}}
\def\one{\bbbone}
\newcommand{\beqnn}{\begin{eqnarray*}}
\newcommand{\eeqnn}{\end{eqnarray*}}
\newcommand{\beqn}{\begin{eqnarray}}
\newcommand{\eeqn}{\end{eqnarray}}
\newcommand{\beq}{\begin{equation}}
\newcommand{\eeq}{\end{equation}}
     \numberwithin{equation}{section}
\begin{document}
\title{Excitation spectrum of interacting Bosons\\ in the mean-field infinite-volume limit}

\author{Jan Derezi\'{n}ski}
\address[J. Derezi\'{n}ski]
{Dept. of Math. Methods in Phys., Faculty of Physics, University of Warsaw\\ 
Hoza 74, 00-682 Warszawa, Poland} 
\email{Jan.Derezinski@fuw.edu.pl}

\author{Marcin Napi\'{o}rkowski}
\address[M. Napi\'{o}rkowski]
{Dept. of Math. Methods in Phys., Faculty of Physics, University of Warsaw\\ 
Hoza 74, 00-682 Warszawa, Poland} 
\email{Marcin.Napiorkowski@fuw.edu.pl}

\date{\today}

\begin{abstract} 
We consider  homogeneous Bose gas in a large cubic box with periodic
boundary conditions, at zero temperature. We analyze its excitation
spectrum in a certain kind of a mean field infinite volume
 limit. We prove that under appropriate conditions the excitation spectrum has the form predicted by the Bogoliubov approximation. Our result can be viewed as an extension of the result of Seiringer  \cite{S} to large volumes.
\end{abstract}
\maketitle

\section{Introduction and main results}
\label{s1}
Many physical properties of complicated interacting systems can be
derived from simple Hamiltonians invoving independent (bosonic or
fermionic) {\em quasiparticles} (see \cite{DNM} for a detailed discussion of
this concept) with appropriately chosen dispersion relation (the
dependence of the quasiparticle energy on the momentum). One of such
systems is the weakly interacting Bose gas at zero temperature.
 On the
heuristic level, the quasiparticle description of the Bose gas can be
derived from the Bogoliubov approximation (\cite{Bo}, see also \cite{CDZ}). The main goal
of this paper is a rigorous justification of this approximation 
for a homogeneous system of $N$ interacting bosons in a
certain kind of a mean field large volume limit.

Let us state the assumptions on the 2-body potential that we will use
throughout the paper. Consider a real function
$\mathbb{R}^{d}\ni \x\mapsto v
(\x)$, with its Fourier transform defined  by
\[\hat{v}(\p):=\int_{\mathbb{R}^{d}}v(\x)\e^{-\i\p\x} \d \x.\]
 We assume that $v(\x)=v(-\x)$, and that $v\in L^1(\mathbb{R}^{d})$ and $\hat v\in L^1(\mathbb{R}^{d})$. We also suppose  that the potential is positive and positive definite, i.e.
\[v(\x)\geq0,\ \ \x\in\mathbb{R}^d,\ \ \ \ \hat{v}(\p)\geq 0, \,\,\, \p \in \mathbb{R}^{d}.\]

We will consider Bose gas in large but finite  volume. To do this,
following the standard approach, we replace
the infinite space $\rr^d$ by the torus  $\Lambda=]-L/2,L/2]^{d}$,
    that is, the $d$-dimensional cubic box of side length $L$. We will
    always assume that $L\geq1$.

The original potential $v$ is replaced by its 
{\em periodized} version
\[v^{L}(\x):=\frac{1}{L^d}\sum_{\p\in(2\pi/L)\mathbb{Z}^{d}}\e^{\i\p\x}\hat{v}(\p).\]
Here, $\p\in(2\pi/L)\mathbb{Z}^{d}$ is the discrete momentum variable.
 Note that $v^{L}$ is periodic with respect to the domain $\Lambda$ and that $v^{L}(\x)\rightarrow v(\x)$ as $L\to \infty$. Consider the Hamiltonian 
\beqn
-\sum_{i=1}^{N}\Delta^{L}_{i}+\lambda\sum_{1\leq i<j\leq N} v^{L}(\x_{i}-\x_{j}) \label{hamiltonian1}
\eeqn  
acting on the space $L^{2}_{\s}(\Lambda^{N})$ (the symmetric subspace
of
 $L^{2}(\Lambda^{N})$)
The Laplacian is assumed to have periodic boundary conditions.

 Let $\rho=N/L^d$ be the density of the  gas. The Bogoliubov approximation \cite{Bo} predicts that the ground state energy is
\[\frac12\lambda\rho\hat v(\0)(N-1)-\frac{1}{2}\sum_{\p\in\frac{2\pi}{L}\mathbb{Z}^{d}\setminus \{ 0\}}\left(|\p|^{2}+\rho\lambda\hat{v}(\p)-|\p|\sqrt{|\p|^{2}+2\rho\lambda\hat{v}(\p)}\right) \]
and that the low lying excited states can  be derived from the
following {\em elementary excitation spectrum}:
\beq
|\p|\sqrt{|\p|^{2}+2\rho\lambda \hat{v}(\p)}. \label{bogol}\eeq

Note that within the Bogoliubov approximation
both the ground state energy and the excitation spectrum depend on $ \rho$ and $\lambda$ only through the product $\rho\lambda$. The dependence on $L$ is very weak:
\begin{enumerate}
\item The elementary excitation spectrum  (\ref{bogol})  depends on
  $L$ 
 only through the spacing of the momentum lattice $\frac{2\pi}{L}\zz^d$.
\item The expression for the ground state energy
divided by the volume $L^d$ converges for $L\to\infty$  to a finite expression
\beq \frac12\rho^2\lambda\hat v(\0) -\frac{1}{2(2\pi)^d}\int\left(|\p|^{2}+\rho\lambda\hat{v}(\p)-|\p|\sqrt{|\p|^{2}+2\rho\lambda\hat{v}(\p)}\right)\d\p. \label{hammi}\eeq
\end{enumerate}

 We believe that it is important to understand the Bogoliubov approximation for large $L$. Important physical properties, such as the phonon group velocity and the description of the Beliaev damping in terms of analyticity properties of Green's functions, have an elegant description when we can view the momentum as a continuous variable, which is equivalent to taking the limit $L\to\infty$.

Note that in our problem there are three {\em a priori} uncorrelated parameters: $\lambda$, $N$ and $L$. By the {\em mean field limit} one usually understands $N\to\infty$ with $\lambda\simeq\frac1N$ and $L=\rm{const.}$ However, when  both $N$ and $L$ are large it is natural to consider a somewhat different scaling. In our paper the mean field limit will correspond to $N\to\infty$ with $\lambda\simeq\frac1\rho=\frac{L^d}{N}$.

Motivated by the above argument we will consider a system described by the Hamiltonian
\beqn
H_N^{L}=-\sum_{i=1}^{N}\Delta^{L}_{i}+\frac{L^d}{N}\sum_{1\leq i<j\leq N} v^{L}(\x_{i}-\x_{j}). \label{hamiltonian}
\eeqn  
It is translation invariant -- it commutes with the total momentum operator
\beqn
P_N^L:=-\sum_{i=1}^{N}\i\partial_{\x_i}^L.
\eeqn

We will denote by $E_N^L$ the ground state energy of (\ref{hamiltonian}).
 If
 $\p\in\frac{2\pi}{L}\zz^d\backslash\{\0\}$ let 
$K_N^{L,1}(\p),K_N^{L,2}(\p),\dots$ be the eigenvalues of $H_N^L-E_N^L$
 of total momentum $\p$ in the order of increasing values, counting the multiplicity.
The lowest eigenvalue of $H_N^L-E_N^L$ of total momentum $\p=\0$ is $0$
by general arguments \cite{CDZ}. Let
$K_N^{L,1}(\0),K_N^{L,2}(\0),\dots$ be the next eigenvalues of
$H_N^L-E_N^L$ of total momentum $\0$, also in the order of increasing values,  counting the multiplicity.

 We also introduce the {\em Bogoliubov  energy}
\[ E_\Bog^{L}:=-\frac{1}{2}\sum_{\p\in\frac{2\pi}{L}\mathbb{Z}^{d}\setminus \{ 0\}}\left(|\p|^{2}+\hat{v}(\p)-|\p|\sqrt{|\p|^{2}+2\hat{v}(\p)}\right)\]
 and the {\em Bogoliubov elementary excitation spectrum}
\begin{gather}
e_{\p}=|\p|\sqrt{|\p|^{2}+2\hat{v}(\p)}. \label{ep}
\end{gather}
For any $\p\in\frac{2\pi}{L}\zz^d$ we consider the corresponding excitation energies with momentum $\p$:
\[
\Big\{\sum_{i=1}^j e_{\k_i}\ :\ 
\k_1,\dots,\k_j\in\frac{2\pi}{L}\zz^d\backslash\{\0\}, \ \k_1+\cdots+\k_j=\p,\ \ 
j=1,2,\dots\Big\}.\]
Let $K_\Bog^{L,1}(\p),K_\Bog^{L,2}(\p),\dots$ be these excitation
energies in the 
order of increasing values, counting the multiplicity. 
 We will use the term {\em excitation spectrum in the Bogoliubov approximation} to denote the set of pairs $\big(K_\Bog^{L,j}(\p),\p\big)\in\rr\times\rr^d$.
Later on we will see that it coincides with the joint spectrum
of commuting operators  $H_\Bog^L-E^L_\Bog$ and $P^L$
with  $(0,\0)$ removed. (See   \eqref{stan} for the definition of  $H_\Bog^L$.)

Below we
present pictures  of the excitation spectrum of 1-dimensional
Bose gas in the
Bogoliubov approximation for two potentials, $v_1$ and $v_2$. 
Both
potentials are appropriately scaled Gaussians. (Note that Gaussians 
satisfy the assumptions of our main theorem).
On both pictures  the (black) dot
 at the origin corresponds to the quasiparticle vacuum, (red) dots
 correspond to 1-quasiparticle excitations, (blue) triangles
 correspond to 2-quasiparticles excitations,
 while (green) squares correspond to $n$-quasiparticles excitations
 with $n\geq 3$.
We also give the graphs of the Fourier transforms of both potentials.

 Note that all figures are drawn in the same scale, apart from
 Fig. 4. where the potential had to be scaled down because of 
 space limitations. In our units of length 
 $\frac{2\pi}{L}=\frac{15}{100}$. 

 Notice that 
for some total momentum $\p$ many-quasiparticle excitation energies 
 are lower than the elementary excitation spectrum. In particular, for the potential $v_1$ it happens  already for  low
 momenta. 
Physically this means that  the corresponding 1-quasiparticle excitation    is not stable: it may decay to $m$-quasiparticle states, $m\geq 2$, with a lower energy. This phenomenon has been observed experimentally \cite{HMHF}
and is called the \textit{Beliaev damping} \cite{Be}. If one can assume that the momentum variable is continuous, the Beliaev damping corresponds to a pole of the Green's function on a non-physical sheet of the energy complex plane. The imaginary part of the position of this pole, computed by Beliaev, is responsible for the rate of  decay of quasiparticles.

\begin{figure}[H]
\begin{center}
\includegraphics[scale=0.75]{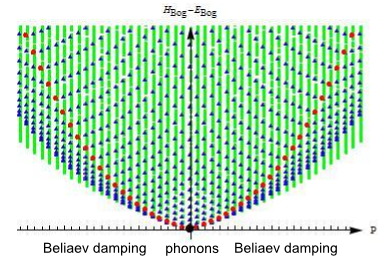}
\caption{Excitation spectrum of  1-dimensional 
homogeneous Bose gas with
potential
$v_1$ in the Bogoliubov approximation.}
\end{center}
\end{figure}
\begin{figure}[H]
\begin{center}
\includegraphics[scale=0.75]{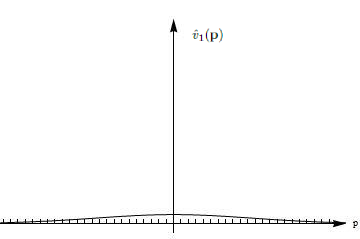}
\caption{$\hat{v}_1(\p)=\frac{\e^{-\p^2/5}}{10}$.} 
\end{center}
\end{figure}

The excitation spectrum for potential $v_2$ has a very different shape
-- it has  local maxima and  local minima away from the zero momentum.
 On the  picture we 
 show  traditional names of quasiparticles -- \textit{phonons} in the low momentum region, where the dispersion relation is
 approximately linear,  \textit{maxons} near the local maximum and
 \textit{rotons} near the local minimum of the elementary excitation
 spectrum (see \cite{G} for details). 
\begin{figure}[H]
\begin{center}
\includegraphics[scale=0.75]{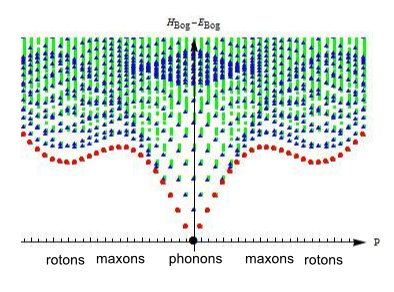}
\caption{Excitation spectrum of  1-dimensional 
homogeneous Bose gas with  potential $v_2$ in the Bogoliubov approximation.}
\end{center}
\end{figure}
\begin{figure}[H]
\begin{center}
\includegraphics[scale=0.75]{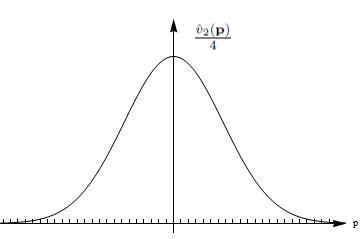}
\caption{$\hat{v}_2(\p)=\frac{15\e^{-\p^2/2}}{2}$.} 
\end{center}
\end{figure} 
 
From now on, we will drop the superscript $L$. Let us state our main
result. It is slightly different for the upper and lower bound:

\begin{thm}\label{thm}
\begin{enumerate}
\item Let $c>0$. Then there exists  $C$ such that
\begin{enumerate}
\item  if 
\beq L^{2d+2}\leq cN,\label{war1a}\eeq
 then 
\beq E_N\geq
\frac{1}{2}\hat{v}(\0)(N-1)+E_\Bog-CN^{-1/2}L^{2d+3}; \label{thmnier1}\eeq
\item if in addition 
\beq K_N^j(\p)\leq c NL^{-d-2},\label{war1b}\eeq then
\begin{eqnarray}
E_N+K_N^j(\p)&\geq&
\frac{1}{2}\hat{v}(\0)(N-1)+E_\Bog+K_\Bog^j(\p)\notag \\&&
-CN^{-1/2}L^{d/2+3}\big(K_N^j(\p)+L^d\big)^{3/2}.  \label{thmnier2}
\end{eqnarray}
\end{enumerate}
\item  Let $c>0$. Then there exists  $c_1>0$ and $C$ such that
\begin{enumerate}
\item if 
\begin{eqnarray}
 L^{2d+1}&\leq &cN \label{war2a}\\
\hbox{ and }\hspace{4ex} L^{d+1}&\leq &c_1 N,\label{war2a+}
\end{eqnarray}
 then
\beq E_N\leq
\frac{1}{2}\hat{v}(\0)(N-1)+E_\Bog+CN^{-1/2}L^{2d+3/2}; \label{thmnier3}\eeq
\item  if in addition 
\begin{eqnarray}
 K_\Bog^j(\p)&\leq &cN L^{-d-2}\label{war2b}\\
 \hbox{ and }\hspace{4ex} K_\Bog^j(\p)&\leq &
c_1NL^{-2},\label{war2b+}\end{eqnarray}
 then
\begin{eqnarray}
E_N+K_N^j(\p)&\leq&
\frac{1}{2}\hat{v}(\0)(N-1)+E_\Bog+K_\Bog^j(\p)\notag \\&&
+CN^{-1/2}L^{d/2+3}(K_\Bog^j(\p)+L^{d-1})^{3/2}.
\label{thmnier4}
\end{eqnarray}
\end{enumerate}
\end{enumerate}
\end{thm}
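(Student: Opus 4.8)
The plan is to adapt the method of Seiringer \cite{S} to the regime where $L$ grows as well, which forces us to track the $L$-dependence of every error term. First rewrite $H_N$ in second quantization on the bosonic Fock space over $L^2(\Lambda)$: with $a_\p^*,a_\p$ the creation and annihilation operators of the plane wave $L^{-d/2}\e^{\i\p\x}$, $\p\in\frac{2\pi}{L}\zz^d$, one has, on the $N$-particle subspace,
\[
H_N=\sum_{\p}|\p|^2\,a_\p^*a_\p+\frac1{2N}\sum_{\k,\p,\q}\hat v(\k)\,a_{\p+\k}^*a_{\q-\k}^*a_\q a_\p .
\]
Single out the condensate mode $\p=\0$ and use the exact identity $a_0^*a_0=N-\hat n_+$, where $\hat n_+:=\sum_{\p\neq\0}a_\p^*a_\p$, to replace $a_0,a_0^*$ by $\sqrt N$ up to explicitly estimated remainders. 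The terms surviving to quadratic order in the modes $\p\neq\0$ reassemble into the Bogoliubov Hamiltonian
\[
H_\Bog=\tfrac12\hat v(\0)(N-1)+\sum_{\p\neq\0}\Big[(|\p|^2+\hat v(\p))\,a_\p^*a_\p+\tfrac12\hat v(\p)\big(a_\p^*a_{-\p}^*+a_\p a_{-\p}\big)\Big],
\]
which is brought to the form $\tfrac12\hat v(\0)(N-1)+E_\Bog+\sum_{\p\neq\0}e_\p\,b_\p^*b_\p$ by a Bogoliubov transformation $U$, $b_\p=Ua_\p U^{-1}$, that commutes with the total momentum $P_N^L$; this exhibits the joint spectrum of $H_\Bog$ and $P_N^L$ as the pairs $\big(\tfrac12\hat v(\0)(N-1)+E_\Bog+K_\Bog^j(\p),\p\big)$. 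The theorem then amounts to controlling $H_N-H_\Bog$, which is the sum of a cubic term of size $O(N^{-1/2})$ (one leftover factor $a_0/\sqrt N$), the non-negative purely off-condensate quartic term, and the $O(N^{-1})$ remainders proportional to $\hat n_+$ and $\hat n_+^2$ produced by the substitution.

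For the lower bounds \eqref{thmnier1} and \eqref{thmnier2}, the first step is an a priori condensate-depletion estimate: exploiting $v\ge0$ and $\hat v\ge0$ together with the kinetic term, any $\Psi$ with $\langle\Psi,(H_N-E_N)\Psi\rangle\le K$ obeys $\langle\Psi,\hat n_+\Psi\rangle\le C(K+L^d)$, and likewise for the quadratic form of $H_\Bog$. Plugging this into the cubic term through Cauchy--Schwarz produces an error $\lesssim N^{-1/2}L^{d/2}(K+L^d)^{3/2}$; the quartic term is non-negative and is simply dropped; the substitution remainders are of lower order because $N^{-1}L^d(K+L^d)$ is. One then diagonalizes $H_\Bog$ exactly. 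The hypotheses \eqref{war1a} and \eqref{war1b} are precisely what keeps the total error below the leading terms and makes the bound $\hat n_+\ll N$ self-consistent.

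For the upper bounds \eqref{thmnier3} and \eqref{thmnier4}, I would test $H_N$ against a trial vector built from an eigenvector of $H_\Bog$ — the quasiparticle vacuum for \eqref{thmnier3}, and for \eqref{thmnier4} a $j$-quasiparticle state of total momentum $\p$ realizing $K_\Bog^j(\p)$ — truncated in the number of excitations, with the condensate slot filled to reach $N$ particles, and renormalized. On such a state the relevant weighted moments of $\hat n_+$ are bounded by powers of $K_\Bog^j(\p)+L^{d-1}$ up to lower order, so that the quadratic part of $\langle H_N\rangle$ reproduces $\tfrac12\hat v(\0)(N-1)+E_\Bog+K_\Bog^j(\p)$ while the cubic, quartic and truncation/projection errors come out of size $N^{-1/2}L^{d/2+3}(K_\Bog^j(\p)+L^{d-1})^{3/2}$; the four conditions \eqref{war2a}--\eqref{war2b+} are what guarantees the errors are subleading and the projection harmless.

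The main obstacle, and what goes beyond the fixed-$L$ analysis of \cite{S}, is making all of this uniform in $L$. Every unrestricted momentum sum $\sum_{\p\neq\0}$ costs a factor $\sim L^d$, so these must be counted with care; the depletion estimate $\hat n_+\lesssim K+L^d$ must be proved with exactly the right power of $L$, which uses the positive-definiteness of $v$ in an essential way; the Bogoliubov transformation $U$ and its commutators with $\hat n_+$ and $P_N^L$ must be shown bounded with the claimed $L$-powers so that the excited-state construction stays inside the prescribed total-momentum sector; and finally one must carry out the bookkeeping that converts these ingredients into the precise exponents $N^{-1/2}L^{\cdots}(\cdots)^{3/2}$. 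That accounting is the technical core of the argument.
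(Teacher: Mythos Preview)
Your high-level strategy --- isolate $H_\Bog$, bound the remainder on low-energy states, diagonalize --- matches the paper's, but several concrete steps are wrong or underspecified.

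First, your depletion bound $\langle\hat n_+\rangle\le C(K+L^d)$ misses a factor $L^2$: on a torus of side $L$ the kinetic gap is $(2\pi/L)^2$, so $\hat n_+\le (L/2\pi)^2 T$, and together with $T\le H_N-E_N+O(L^d)$ one gets $\hat n_+\le CL^2(K+L^d)$ (Lemma~\ref{lemma1b}). This $L^2$ is exactly what turns your claimed error $N^{-1/2}L^{d/2}(K+L^d)^{3/2}$ into the correct $N^{-1/2}L^{d/2+3}(K+L^d)^{3/2}$. You also need $\langle\hat n_+^2\rangle\le CL^4(K+L^d)^2$, which is not a corollary of the first moment and in the paper takes a separate bootstrap (Lemma~\ref{lemma2}). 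Second, for the upper bound on the $j$th eigenvalue a single trial vector is not enough; you need a $j$-dimensional trial subspace. The paper builds a partial isometry $Z_\kappa=\big(\one_\kappa^\Bog G(N^>/N)^2\one_\kappa^\Bog\big)^{-1/2}G(N^>/N)$ from $\Ran\one_\kappa^\Bog$ into the physical $N$-particle space using a smooth cutoff $G$, then applies Rayleigh--Ritz to $Z_\kappa H_N Z_\kappa^\dagger$; the localization error is controlled via the double commutator $[G(N^>/N),[G(N^>/N),H_\Bog]]$ (Lemma~\ref{???}).

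On method, the paper does \emph{not} estimate the cubic term by Cauchy--Schwarz after substituting $a_0\approx\sqrt N$. It instead proves an operator bound on the two-body potential (Lemma~\ref{rhs}): using $v\ge0$, the cubic pieces $(P\otimes Q+Q\otimes P)vQ\otimes Q+\hc$ are absorbed into $\epsilon(P\otimes Q+Q\otimes P)v(P\otimes Q+Q\otimes P)+\epsilon^{-1}Q\otimes QvQ\otimes Q$, yielding estimating Hamiltonians $H_{N,\pm\epsilon}$ with only quadratic and quartic excitation terms; the quartic is then dominated by $v(\0)L^dN^>(N^>-1)/(2N)$ and $\epsilon$ is optimized at the end. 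The $a_0$ factors are handled exactly via an ``extended space'' carrying a unitary shift $U$ on the zero-mode occupation and operators $b_\p=a_\p U^\dagger$, so the Bogoliubov rotation is applied only to $H_\Bog$ and never to the remainder. This matters because, as the paper explicitly warns at the end of Section~\ref{s1}, tracking the $L$-dependence through Seiringer's original argument naively produces a factor $\e^{cL^{d/2}}$ from conjugating the non-quadratic terms by $S=\e^{-X}$ with $\|X\|\sim L^{d/2}$; your proposal to ``adapt the method of Seiringer'' does not say how you avoid this.
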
 

 Let us stress that the constants $C$ and $c_1$ that appear in the theorem  depend only the potential $v$, the dimension $d$, and the constant $c$, but {\em do not} depend on $N$, $j$ and $L$. Note also that both in (1), resp. (2) we can deduce (a) from (b) by setting $K_N^j(\p)=0$, resp. $K_\Bog^j(\p)=0$.

Theorem \ref{thm}  expresses the idea that the Bogoliubov approximation becomes exact for large $N$ and $L$ provided that the volume does not grow too fast. This may appear not very transparent, since the error terms in the theorem depend on two parameters $L$ and $N$ as well as on the excitation energy. Therefore, we give some consequences of our theorem, where the error term depends only on $N$. They  generalize the corresponding remarks of \cite{S}.

\begin{cor}\label{coro}
Let $b>1$, $-1-\frac{1}{2d+1}< \alpha \leq 1$ and $L^{4d+6}\leq b N^{1-\alpha}$. Then there exists $M$ such that if $N>M$, then \vspace{2ex}
\begin{enumerate}
\item \hspace{7ex} $E_N=\frac{1}{2}\hat v(\0)(N-1)+E_\Bog+O(N^{-\alpha/2})$;
\vspace{2ex}
\item if $\min\left(K_N^j(\p),K_\Bog^j(\p)\right)\leq (bN^{1-\alpha}L^{-d-6})^{1/3}$, then
\[K_N^j(\p)= K_\Bog^j(\p)+O(N^{-\alpha/2});\]
\item 
  if  $0<\alpha\leq1$ and
 $\min\left(K_N^j(\p),K_\Bog^j(\p)\right)\leq bN^{1-\alpha}L^{-d-6}$, then
 \[ K_N^j(\p)= K_\Bog^j(\p)+\left(1+K_\Bog^j(\p)\right)O(N^{-\alpha/2}).\]
\end{enumerate}
\end{cor}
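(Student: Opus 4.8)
The plan is to deduce the corollary from Theorem~\ref{thm} by bookkeeping. First fix an admissible value of the free constant there, say $c=1$; this produces constants $C$ and $c_1$ depending only on $v$ and $d$, and it then suffices to check that, under the hypothesis $L^{4d+6}\le bN^{1-\alpha}$ with $N$ large, (i) the conditions \eqref{war1a}, \eqref{war2a}, \eqref{war2a+} on the size of $L$ relative to $N$ hold, and (ii) every error term of Theorem~\ref{thm} is $O(N^{-\alpha/2})$. For (i) one notes that each of $L^{2d+2}$, $L^{2d+1}$, $L^{d+1}$ is at most a fixed power of $bN^{1-\alpha}$ which, because $\alpha>-1-\frac{1}{2d+1}$, is a power of $N$ strictly smaller than $N$; hence these conditions hold for all $N>M$ with $M$ depending only on $b,d,\alpha$. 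For (ii) the estimate that does all the work is
\[
N^{-1/2}L^{2d+3}=N^{-1/2}\bigl(L^{4d+6}\bigr)^{1/2}\le \sqrt{b}\,N^{-\alpha/2},
\]
and every purely $L$-dependent error of the theorem ($N^{-1/2}L^{2d+3}$ and $N^{-1/2}L^{2d+3/2}$) is bounded by it. Combining \eqref{thmnier1} with \eqref{thmnier3} now gives part~(1).

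For parts~(2) and~(3) I would first compare $K_N^j(\p)$ with $K_\Bog^j(\p)$ and split into the cases $K_N^j(\p)\le K_\Bog^j(\p)$ and $K_\Bog^j(\p)<K_N^j(\p)$. The purpose of this split is that the smaller of the two quantities is then bounded by the quantity $R$ appearing in the hypothesis (namely $R=(bN^{1-\alpha}L^{-d-6})^{1/3}$ in part~(2) and $R=bN^{1-\alpha}L^{-d-6}$ in part~(3)), and it is exactly the smallness of that quantity which is required to invoke the relevant half of Theorem~\ref{thm}: when $K_N^j(\p)\le R$ one combines \eqref{thmnier2} with \eqref{thmnier3}, and when $K_\Bog^j(\p)\le R$ one combines \eqref{thmnier4} with \eqref{thmnier1}. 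In either case only one inequality direction has to be proved, because the opposite inequality --- $K_\Bog^j(\p)-K_N^j(\p)\ge0$ in the first case, $K_N^j(\p)-K_\Bog^j(\p)\ge0$ in the second --- is free from the case assumption. One also checks that $R$ obeys \eqref{war1b} (resp.\ \eqref{war2b} and \eqref{war2b+}); this again follows from $L^{4d+6}\le bN^{1-\alpha}$ for $N>M$ --- and in part~(3) the weaker hypothesis forces one to use $0<\alpha$ at this point, since for $\alpha\le0$ the bound $R=bN^{1-\alpha}L^{-d-6}$ need not be $\le cNL^{-d-2}$.

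It then remains to control the genuinely $K$-dependent error $N^{-1/2}L^{d/2+3}(K+L^{d})^{3/2}$ and its $L^{d-1}$ variant, where $K$ is whichever of $K_N^j(\p),K_\Bog^j(\p)$ is $\le R$. Since $(K+L^{d})^{3/2}$ is, up to a constant, at most $K^{3/2}+L^{3d/2}$, the $L^{3d/2}$ part contributes $N^{-1/2}L^{2d+3}$, already handled. For the $K^{3/2}$ part, in part~(2) one uses $K^{3/2}\le R^{3/2}=\sqrt{b}\,N^{(1-\alpha)/2}L^{-d/2-3}$, so that $N^{-1/2}L^{d/2+3}K^{3/2}\le\sqrt{b}\,N^{-\alpha/2}$ --- which is precisely why the hypothesis of part~(2) carries the exponent $1/3$. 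In part~(3), where $R$ is larger, one writes instead $K^{3/2}=K\cdot K^{1/2}\le K\sqrt{b}\,N^{(1-\alpha)/2}L^{-d/2-3}$, giving $N^{-1/2}L^{d/2+3}K^{3/2}\le\sqrt{b}\,K\,N^{-\alpha/2}\le\sqrt{b}\,(1+K_\Bog^j(\p))\,N^{-\alpha/2}$, using $K\le K_\Bog^j(\p)$ in the first case and $K=K_\Bog^j(\p)$ in the second; this produces the extra factor $1+K_\Bog^j(\p)$ in the conclusion of part~(3).

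All of the above is routine arithmetic with powers of $N$ and $L$. The one step that genuinely requires thought --- and the place I expect to be the main obstacle --- is the case split of the second paragraph, which is what lets the appropriate half of Theorem~\ref{thm} be applied without ever having to know a priori that \emph{both} $K_N^j(\p)$ and $K_\Bog^j(\p)$ are small: only one of them is controlled by the hypothesis, and the reverse inequality must be extracted for free from the case assumption.
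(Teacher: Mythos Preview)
Your proposal is correct and follows essentially the same route as the paper. The paper also deduces the corollary from Theorem~\ref{thm}, first packaging the one-sided estimates into an auxiliary Lemma~A.1 and then, for parts~(2) and~(3), performing precisely the case split you describe (whichever of $K_N^j(\p),K_\Bog^j(\p)$ is smaller), combining the relevant excitation bound with the opposite ground-state bound, and using the case assumption to get the reverse inequality for free; the arithmetic controlling the error terms and checking \eqref{war1a}--\eqref{war2b+} is the same as yours.
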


 The proof that Thm \ref{thm} implies  Cor. \ref{coro} is given in Appendix.

\begin{remark}
\begin{enumerate}
\item The case $\alpha=1$, $L=1$ of Corollary \ref{coro}
corresponds directly to the result of \cite{S}.
\item In part (3) of Corollary \ref{coro} one can also include the case $\alpha=0$ provided that $L$ is sufficiently large.
\end{enumerate}
\end{remark}

Thus, for large $N$ within a growing range of the volume, the low lying
energy-momentum spectrum of the homogeneous Bose gas is well described
by the Bogoliubov approximation. 
In the infinite volume limit  momentum 
 becomes a continuous variable, which is important when we want to consider the so-called \textit{critical velocity} and \textit{phase velocity} introduced by Landau. They play a crucial role in his theory of superfluidity (\cite{La1}, \cite{La2}, see also \cite{CDZ},\cite{ZB}).
 

 Mathematically, the Bogoliubov approximation has been studied mostly in the context of the ground state energy (\cite{LS}, \cite{LS1}, \cite{ESY}, \cite{GS}, \cite{So} \cite{YY}, see also \cite{LSSY}). This makes the work of Seiringer (\cite{S}), Grech-Seiringer (\cite{GrS}) and more recently by Lewin, Nam, Serfaty and Solovej (\cite{LNSS}) even more notable, since they are devoted to a rigorous study of the excitation spectrum of a Bose gas.

  In \cite{S} Seiringer proves that for a system of $N$ bosons on a flat unit torus $\mathbb{T}^{d}$ which interact with a  two-body interaction $v(\x)/(N-1)$, the excitation spectrum up to an energy $\kappa$ is formed by elementary excitations of momentum $\p$ with a corresponding energy of the form
(\ref{bogol})
up to an error term  of the order $O(\kappa^{3/2}N^{-1/2})$. Also in \cite{GrS} and \cite{LNSS} the authors are concerned with  finite systems in the large particle number limit.

Our result can be considered as an extension of Seiringer's result to
systems of arbitrary volume. The ultimate goal would be to prove
similar results in the thermodynamic limit with a fixed coupling
constant. Since this is at the moment out of reach, we try to pass to
some other limits, which  involve convergence of the volume to infinity.

The rest of this paper is devoted to  a 
 proof of Theorem \ref{thm}. It uses partly the methods presented in \cite{S}. Note, however, that naive mimicking leads  to a much weaker result, which involves assuming that $N\geq C \e^{cL^{d/2}}$ to ensure that the error terms tend to zero when taking the infinite volume limit. This can be easily seen by looking for example at equation (24) of \cite{S}. In this equation one of the constants is given by the expression $\e^{C_{2}}$ where $C_{2}$ is given by 
$$\sqrt{\frac{64N}{N-1}\sum_{\p\neq\0}\beta^{2}_{\p}}.$$
In the infinite volume limit the sum could be replaced by an integral which one can  compensate by the factor $L^{d/2}.$  This leads to a factor $\e^{c L^{d/2}}$ in the estimates.


Our proof uses certain identities that allow us to simplify the
algebraic computations involved in the proof. We use the method of
second quantization, working in the Fock space containing all
$N$-particle spaces at once. We embed this space in  the so-called
{\em extended space}, which contains nonphysical states with a negative
number of zero modes. This method  leads to relatively simple
algebraic calculations, which is helpful when we want to control the
volume dependence. Note also  that our 
method yields the same results as in \cite{S} if one takes $L=1$. 

 Strangely, we have never seen the method of the extended space in the literature. Some authors (starting with Bogoliubov in 1947, see \cite{Bo1}) introduce the operator $a_0^\dagger(\one+ N_0)^{-1/2}$, which coincides with our operator $U^\dagger$ on the physical space. Both operators increase the number of zeroth modes by one. The operator $U^\dagger$, however, acts on the extended space and is unitary, whereas $a_0^\dagger(\one+N_0)^{-1/2}$  acts on the physical space and is only isometric.

One can also see some similarity of our method with that of \cite{LNSS} where,  however, states with a negative number of modes do not appear.

\section{Miscellanea}

Let us describe some notation and basic facts from operator theory
used in our paper.

If $A$, $B$ are operators, then the following inequality will be often used:
\beq -A^\dagger A-B^\dagger B\leq A^\dagger B+B^\dagger A\leq A^\dagger A+B^\dagger B.\eeq

We will write $A+\hc$ for $A+A^\dagger $.

If $A$ is a self-adjoint operator and $\Omega$ a Borel subset of the spectrum of $A$, then $\one_\Omega(A)$ will denote the spectral projection of $A$ onto  $\Omega$.

Let $A$ be a bounded from below self-adjoint operator on  Hilbert space $\cH$.  For simplicity, let us assume that it has only discrete spectrum.
We define
\[\vecsp(A):=(E_1,E_2,\dots),\]
where $E_1,E_2,\dots$ are the eigenvalues of $A$ in the  order of
increasing values, counting the multiplicity. If $\dim\cH=n$, then we set $E_{n+1}=E_{n+2}=\dots=\infty$.

We will use repeatedly two consequences of the {\em min-max principle} \cite{RS}: 
\[A\leq B \ \hbox{ implies }\vecsp(A)\leq \vecsp(B),\]
and
the so-called {\em Rayleigh-Ritz principle}:
If $\cK$ is a closed subspace of $\cH$, let
 $P_\cK$ be the projection onto $\cK$. Then we have 
\beq
\vecsp(A)\leq\vecsp\Big(P_\cK AP_\cK\Big|_\cK\Big).\eeq

\section{Second quantization}

As discussed in the introduction, the main object of our paper, the Hamiltonian $H_N$ is defined
on the $N$-particle bosonic space
\[\cH_N:
=L^{2}_{\s}(\Lambda^{N}).\] We will work most of the time in the
momentum representation, in which the 1-particle space $L^2(\Lambda)$ 
is represented as
$l^2\bigl(\frac{2\pi}{L}\zz^d\bigr)$, thus
\[\cH_N\simeq
\otimes_{\s}^Nl^2\bigl(\frac{2\pi}{L}\zz^d\bigr).\] 

It is convenient to consider simultanously the direct sum of the
$N$-particle spaces, the {\em bosonic Fock space}
\begin{equation}
\mathcal{H}:=\mathop\oplus\limits_{N=0}^\infty\cH_N
=\Gamma_\s\Big(l^2\bigl(\frac{2\pi}{L}\zz^d\bigr)\Bigr).
\label{physicalspace1}
\end{equation}


The direct sum of the Hamiltonians $H_N$ 
will be denoted $H$. Using  the notation of the second quantization it can be written as
\[H:=\mathop\oplus\limits_{N=0}^\infty H_N=\sum_\p \p^2a_\p^\dagger a_\p+\frac{1}{2N}
\sum_{\p,\q,\k}\hat v(\k)a_{\p+\k}^\dagger a_{\q-\k}^\dagger a_\q a_\p.\]

If $A$ is an operator on the one-particle space, then by its {\em second quantization} we will mean the operator that on the $N$-particle space equals
\[\sum_{i=1}^N A_i.\]
If we use an orthonormal basis,
 say, $|\p\rangle$, $\p\in\frac{2\pi}{L}\zz^d$, then
 this operator written in the 2nd quantized notation equals
\[\frac12\sum_{\p_1,\p_2 }\langle \p_1|A|\p_2\rangle
a_{\p_1}^\dagger
a_{\p_2}.
\]

Let us introduce some special notation for various operators and their 2nd quantization. 

Let $P$ be the projection onto the constant function in $L^{2}(]L/2,L/2]^d)$, and $Q=1-P$. The operator that counts the number of particles in, resp. outside the zero momentum mode will be denoted by $N_0$, resp. $N^{>}$, i.e.
\begin{gather}
N_0=\sum_{i=1}^{N}P_{i},\ \ \ 
N^{>}=\sum_{i=1}^{N}Q_{i}. \label{particlenoop}
\end{gather}
In the 2nd quantization notation,
\[N_0=a_\0^\dagger a_\0,\ \ \ N^>=\sum_{\p\neq\0}a_\p^\dagger a_\p.\]

For $N$-particle bosonic wave functions $\Psi,\Phi$ we have
\begin{eqnarray}
\langle\Psi|N^{>}|\Phi\rangle&=&N\langle\Psi|Q_{1}|\Phi\rangle,\label{Q1}\\ 
\langle\Psi|N^{>}(N^>-1)|\Phi\rangle&=&N(N-1)\langle\Psi|Q_{1}Q_2|\Phi\rangle.  \label{Q1Q2}
\end{eqnarray}

The symbol $T$ will denote the kinetic energy of the system: $T=-\sum\limits_{i=1}^{N}\Delta_{i}$. For further reference, note that
\beqn
\langle\Psi|N^{>}|\Psi\rangle \leq \frac{L^{2}}{(2\pi)^{2}}\langle\Psi|T|\Psi\rangle
. \label{lemma1b-}  
\eeqn

We will also need the notion of
 the second quantization of certain 2-body operators. More precisely, let $w$ be an operator on the symmetrized 2-particle space. Then by its second quantization we will mean the operator that restricted to the 
$N$-particle space equals
\[\sum_{1\leq i<j\leq N} w_{ij}.\]
If $w$ is an operator on the unsymmetrized 2-particle space, then we can also speak about its second quantization, but now its restriction to
the 
$N$-particle space equals
\[\frac12\sum_{1\leq i\neq j\leq N} w_{ij}.\]
In the momentum
 basis  this operator written in the 2nd quantized language equals
\[\frac12\sum_{\p_1,\p_2,\p_3,\p_4}\langle \p_1,\p_2|w|\p_3,\p_4\rangle
a_{\p_1}^\dagger
a_{\p_2}^\dagger
a_{\p_3}
a_{\p_4}.
\]

\section{Bounds on interaction}

The potential $v$ can be interpreted as an operator of multiplication
by $v(\x_1-\x_2)$ on $L_\s^2(\Lambda^2)$.
Following \cite{S}, we would like to estimate this 2-body operator by simpler, 1-body operators. As a preliminary step we record the following bound:

\begin{lemma} Let $\epsilon>0$. Then
\begin{eqnarray*}
v&\geq& P\otimes PvP\otimes P+P\otimes PvQ\otimes Q+Q\otimes QvP\otimes P \\&&+(1-\epsilon)(P\otimes Q+Q\otimes P)v(P\otimes Q+Q\otimes P)\\&&+(1-\epsilon^{-1})Q\otimes QvQ\otimes Q,\\
v&\leq& P\otimes PvP\otimes P+P\otimes PvQ\otimes Q+Q\otimes QvP\otimes P \\&&+(1+\epsilon)(P\otimes Q+Q\otimes P)v(P\otimes Q+Q\otimes P)\\&&+(1+\epsilon^{-1})Q\otimes QvQ\otimes Q.
\end{eqnarray*} \label{rhs}
\end{lemma}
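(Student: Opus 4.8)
The plan is to prove the two operator inequalities in \reflemma{rhs} by expanding $v$ with respect to the orthogonal decomposition of the one-particle space $L^2(\Lambda)=PL^2(\Lambda)\oplus QL^2(\Lambda)$, lifted to the two-particle space $L^2_\s(\Lambda^2)$. Writing $\one=P\otimes P+P\otimes Q+Q\otimes P+Q\otimes Q$ and inserting this resolution on both sides of $v$, we get $v=\sum_{\alpha,\beta}\Pi_\alpha v\Pi_\beta$ where the four projections $\Pi_1=P\otimes P$, $\Pi_2=P\otimes Q$, $\Pi_3=Q\otimes P$, $\Pi_4=Q\otimes Q$ sum to the identity. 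First I would observe that because $v(\x_1-\x_2)$ is symmetric under exchange of the two particles and $P,Q$ are the same in both factors, the cross terms pair up: $P\otimes P\, v\, P\otimes Q=P\otimes P\, v\, Q\otimes P$ after symmetrization, and similarly $Q\otimes Q\, v\, P\otimes Q=Q\otimes Q\, v\, Q\otimes P$. This means that on the symmetric subspace the genuinely distinct blocks are $P\otimes PvP\otimes P$, the "one-excitation" cross block $(P\otimes Q+Q\otimes P)v(P\otimes Q+Q\otimes P)$, the diagonal block $Q\otimes QvQ\otimes Q$, and the two off-diagonal blocks $P\otimes PvQ\otimes Q$ and its adjoint $Q\otimes QvP\otimes P$, which already appear on the right-hand side with coefficient $1$.

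The key step is then to control the remaining off-diagonal terms — namely the pieces connecting $P\otimes P$ with $P\otimes Q+Q\otimes P$, and $Q\otimes Q$ with $P\otimes Q+Q\otimes P$ — using the elementary bound $\pm(A^\dagger B+B^\dagger A)\leq \epsilon A^\dagger A+\epsilon^{-1}B^\dagger B$ recorded at the start of Section 2 (in the form $-\epsilon A^\dagger A-\epsilon^{-1}B^\dagger B\le A^\dagger B+B^\dagger A\le\epsilon A^\dagger A+\epsilon^{-1}B^\dagger B$, obtained by rescaling $A\mapsto\sqrt\epsilon A$, $B\mapsto B/\sqrt\epsilon$). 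Here the crucial positivity input is $v\geq 0$ as an operator, so that $v=v^{1/2}v^{1/2}$ and each block $\Pi_\alpha v\Pi_\beta=(v^{1/2}\Pi_\alpha)^\dagger(v^{1/2}\Pi_\beta)$ is of the form $A^\dagger B$ with $A=v^{1/2}\Pi_\alpha$, $B=v^{1/2}\Pi_\beta$. Applying the inequality to the pair $(v^{1/2}(P\otimes P),\ v^{1/2}(P\otimes Q+Q\otimes P))$ with parameter $\epsilon$, and to the pair $(v^{1/2}(Q\otimes Q),\ v^{1/2}(P\otimes Q+Q\otimes P))$ with the same $\epsilon$, bounds the two cross terms above and below, producing exactly the coefficients $(1\mp\epsilon)$ in front of the one-excitation block and $-\epsilon^{-1}$, resp. $+\epsilon^{-1}$, of an extra $Q\otimes QvQ\otimes Q$ (and an extra $P\otimes PvP\otimes P$, which one checks cancels or can be discarded). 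Summing all contributions and collecting coefficients yields precisely the two displayed inequalities; the lower bound uses $1-\epsilon$ and $1-\epsilon^{-1}$, the upper bound uses $1+\epsilon$ and $1+\epsilon^{-1}$.

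The main obstacle — really the only subtle point — is the bookkeeping: one must verify that after applying the two-term estimate to the $PP$–cross and $QQ$–cross interactions, the spurious diagonal remainders $\epsilon\, P\otimes PvP\otimes P$ generated by the estimates do not spoil the coefficient $1$ in front of $P\otimes PvP\otimes P$ on the right-hand side. The resolution is to note that one should apply the estimate asymmetrically — splitting the cross term $P\otimes PvQ^{\otimes}+\hc$ (using shorthand $Q^\otimes=P\otimes Q+Q\otimes P$) so that the error lands entirely on the $Q^{\otimes}vQ^{\otimes}$ and $Q\otimes QvQ\otimes Q$ blocks and not on $P\otimes PvP\otimes P$. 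Concretely one writes the $PP\leftrightarrow QQ$ block at coefficient exactly $1$ (no estimate needed there, as it appears verbatim in the statement), and only the two blocks $PP\leftrightarrow Q^\otimes$ and $QQ\leftrightarrow Q^\otimes$ are estimated, each contributing $\mp\epsilon$ to the $Q^\otimes v Q^\otimes$ coefficient and $\mp\epsilon^{-1}$ (combined) to the $Q\otimes QvQ\otimes Q$ coefficient. Once the grouping is fixed this way the computation is a one-line collection of terms, and one should double-check the signs separately for the $\geq$ and the $\leq$ direction since the roles of $\epsilon$ and $\epsilon^{-1}$ are as written.
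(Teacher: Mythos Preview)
Your overall strategy---expand $v$ in the $P/Q$ blocks and control the off-diagonal pieces by the Schwarz inequality $\pm(A^\dagger B+B^\dagger A)\le\epsilon A^\dagger A+\epsilon^{-1}B^\dagger B$ with $A=v^{1/2}\Pi_\alpha$, $B=v^{1/2}\Pi_\beta$---is exactly the right idea, and it is also what the paper does. But there is a genuine gap in your bookkeeping, located precisely at the point you flag as ``the main obstacle''. You propose to estimate \emph{two} cross blocks: $P\otimes P\leftrightarrow Q^\otimes$ (with $Q^\otimes=P\otimes Q+Q\otimes P$) and $Q\otimes Q\leftrightarrow Q^\otimes$. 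Any Schwarz bound applied to $P\otimes P\,v\,Q^\otimes+\hc$ necessarily produces a diagonal remainder on \emph{both} $P\otimes P\,v\,P\otimes P$ and $Q^\otimes v\,Q^\otimes$; there is no ``asymmetric'' version that dumps the entire error onto $Q^\otimes v\,Q^\otimes$ and $Q\otimes Q\,v\,Q\otimes Q$ while leaving $P\otimes P\,v\,P\otimes P$ untouched. With two nontrivial Schwarz estimates you cannot land on the exact coefficients $1,\,1\pm\epsilon,\,1\pm\epsilon^{-1}$ of the lemma.

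What you are missing is not an estimate but a cancellation: by translation invariance of $v(\x_1-\x_2)$ (i.e.\ conservation of total momentum), the block $P\otimes P\,v\,(P\otimes Q+Q\otimes P)$ vanishes identically, since $P\otimes P$ has total momentum $\0$ while every vector in $\Ran(P\otimes Q+Q\otimes P)$ has nonzero total momentum. This is the content of the paper's one-line decomposition
\[
v=(P\otimes P+Q\otimes Q)v(P\otimes P+Q\otimes Q)+Q^\otimes v\,Q^\otimes+\big(Q^\otimes v\,Q\otimes Q+\hc\big),
\]
after which a \emph{single} Schwarz estimate on the last bracket (with $A=v^{1/2}Q^\otimes$, $B=v^{1/2}Q\otimes Q$) gives exactly $(1\mp\epsilon)Q^\otimes v\,Q^\otimes$ and $(1\mp\epsilon^{-1})Q\otimes Q\,v\,Q\otimes Q$, with the coefficient of $P\otimes P\,v\,P\otimes P$ untouched. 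Once you insert this observation, your argument becomes correct and coincides with the paper's proof.
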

\begin{proof} Using the translation invariance of $v$ we obtain
\begin{eqnarray*}
v&=& (P\otimes P+Q\otimes Q)v (P\otimes P+Q\otimes Q)
 \\&&+(P\otimes Q+Q\otimes P)v(P\otimes Q+Q\otimes P)
\\&&+(P\otimes Q+Q\otimes P)vQ\otimes Q
+Q\otimes Qv(P\otimes Q+Q\otimes P).
\end{eqnarray*} Then we apply the Schwarz inequality to the last two terms.
\end{proof}

Let us now identify the second quantization of various terms on the r.h.s of the estimates of Lemma \ref{rhs}:
\begin{eqnarray*}
 P\otimes PvP\otimes P&&
\frac{1}{2L^d}\hat v(\0)a_\0^\dagger a_\0^\dagger a_\0a_\0=
\frac{1}{2L^d}\hat v(\0)N_0(N_0-1),\\
P\otimes PvQ\otimes Q&&
\frac{1}{2L^d}\sum_{\p\neq\0} \hat v(\p)a_0^\dagger a_0^\dagger a_\p a_{-\p},\\
Q\otimes QvP\otimes P&&
\frac{1}{2L^d}\sum_{\p\neq\0} \hat v(\p)a_\p^\dagger a_{-\p}^\dagger a_0 a_0,\\
P\otimes QvQ\otimes P,\ Q\otimes PvP\otimes Q
&&
\frac{1}{2L^d}\sum_{\p\neq\0} \hat v(\0)a_\p^\dagger  a_{\p}N_0,\\
P\otimes QvP\otimes Q,\ Q\otimes PvQ\otimes P
&&
\frac{1}{2L^d}\sum_{\p\neq\0} \hat v(\p)a_\p^\dagger  a_{\p}N_0
.
\end{eqnarray*} 
The second quantization of $ Q\otimes QvQ\otimes Q$
 can be bounded  from above by
\[v(\0)\sum_{1\leq i<j\leq  N}Q_iQ_j=v(\0)\frac12N^>(N^>{-}1).\]

Introduce the family of {\em estimating Hamiltonians}
\begin{eqnarray*}
H_{N,\epsilon}&:= &\frac{1}{2}\hat v(\0)(N-1)+\sum_{\p\neq\0}\big(|\p|^2+\hat v(\p)\big)
a_\p^\dagger a_\p\\
&&+\frac{1}{2N}\sum_{\p\neq\0}\hat v(\p)\Big(a_\0^\dagger a_\0^\dagger a_\p a_{-\p}+a_\p^\dagger a_{-\p}^\dagger a_\0 a_\0 \Big)\\
&&-\frac{1}{N}\sum_{\p\neq\0}\big(\hat v(\p)+\frac{\hat v(\0)}{2}\big)a_\p^\dagger a_\p N^>+\frac{\hat v(\0)}{2N}N^>\\
&&+\frac{\epsilon}{N}\sum_{\p\neq\0}\big(\hat v(\p)+\hat v(\0)\big)a_\p^\dagger a_\p N_0\\
&&+(1+\epsilon^{-1})\frac{1}{2N} v(\0)L^d N^>(N^>-1)
\end{eqnarray*}

The operators $H_{N,\epsilon}$ preserve the $N$-particle sectors. By the above calculations  we obtain the following estimates on the Hamiltonian:
\begin{eqnarray}
H_N&\geq &H_{N,-\epsilon}, \ \  0<\epsilon\leq1;\label{hamiltonianboundlower}  \\
H_N&\leq &H_{N,\epsilon},\ \ 0<\epsilon. \label{hamiltonianboundupper}
\end{eqnarray}
\section{Extended space}
So far we used the {\em physical  Hilbert space} (\ref{physicalspace1}). By the exponential property of Fock spaces we have the identification
\begin{equation}
\mathcal{H}\simeq
\Gamma_\s(\cc)\otimes\Gamma_\s\Big(l^2\bigl(\frac{2\pi}{L}\zz^d\backslash\{\0\}\bigr)\Bigr).
\label{physicalspace}
\end{equation}
Let us embed the space of  {\em zero modes} $\Gamma_\s(\cc)=l^2(\{0,1,\dots\})$ in a larger space $l^2(\zz)$. Thus we obtain the {\em extended Hilbert space}
\begin{equation}
\mathcal{H}^{\text{ext}}:= l^2(\zz)\otimes \Gamma_\s\Big(l^2\bigl(\frac{2\pi}{L}\zz^d\backslash\{\0\}\bigr)\Bigr).
\label{fullspace}
\end{equation} 
The physical space (\ref{physicalspace}) is spanned by vectors of the
form $|n_0\rangle\otimes \Psi^>$, where
 $|n_0\rangle$ represents $n_0$  zero modes  ($n_0\geq0$)
and $ \Psi^>$ represents a vector outside the zero mode.

The space (\ref{fullspace}) is also spanned by vectors  of this  form, where now 
 the relation $n_0\geq0$ is not imposed. 
The orthogonal complement of $\cH$ in $\cH^\ext$ will be denoted by $\cH^\nph$ (for ``non-physical''). 

 On $\cH^\ext$ we have a self-adjoint operator $N_0^\ext$ such that $N_0^\ext|n_0\rangle\otimes \Psi^>=n_0|n_0\rangle\otimes \Psi^>$. Its spectrum equals $\zz$. 
 Clearly
\[N_0^\ext\Big|_\cH=N_0,\ \ \ \cH=\Ran \one_{[0,\infty[}(N_0^\ext),\  \ 
\cH^\nph=\Ran \one_{]-\infty,0[}(N_0^\ext).\]
If $N\in\zz$, we will write $\cH_N^\ext$ for the subspace of $\cH^\ext$ corresponding to $N^>+N_0^\ext=N$.

We have also a unitary operator
\begin{gather*}
U|n_0\rangle\otimes\Psi^>=|n_0-1\rangle\otimes\Psi^>.
\intertext{Notice that both $U$ and $U^{\dagger}$ commute with both $a_{\p}$ and $a^{\dagger}_{\p}$ with $\p\neq\0$.  We now define for $\p\neq\0$ the following operator on $\mathcal{H}^{\text{ext}}$:}
{b}_{\p}:=a_{\p}U^{\dagger}. 
\end{gather*}
Operators $b_\p$ and $b_\q^\dagger$ satisfy the same CCR as $a_\p$ and $a_\q^\dagger$.

The extended space is useful in the study of $N$-body Hamiltonians.
To illustrate this, on $\cH_N^\ext$ let
us introduce the {\em extended Hamiltonian}
\begin{eqnarray*}
H_N^\ext&=&\sum_{\p\neq\0}\Big(\p^2+\frac{N_0^\ext}{N}(\hat v(\p)+\hat
v(\0)\Big)b_\p^\dagger
b_\p\\&&
+
\frac{1}{2}\sum_{\p\neq\0}\Big(\hat v(\p)\frac{\sqrt{N_0^\ext(N_0^\ext
    -1)}}{N}
b_\p
b_{-\p}
+\hc\Big)\\
&&+\frac{1}{N}\sum_{\k,\p\neq\0}\hat v(\k)b_\k^\dagger b_{\p-\k}^\dagger
b_\p\sqrt{\max(N_0^\ext,0)}+\hc\Big)\\
&&+\frac{1}{2N}
\sum_{\p,\q,\k\neq\0}\hat v(\k)b_{\p+\k}^\dagger b_{\q-\k}^\dagger
b_\q b_\p.
\end{eqnarray*}
It is easy to see that $H_N^\ext$ preserves the $N$-particle physical
space $\cH_N$ and on $\cH_N$  it coincides with $H_N$.

In our paper we will use the {\em extended estimating Hamiltonian}, which is
the following operator on $\cH_N^\ext$:
\begin{eqnarray*}
H_{N,\epsilon}^\ext&:= &\frac{1}{2}\hat v(\0)(N-1)+\sum_{\p\neq\0}\big(|\p|^2+\hat v(\p)\big)
b_\p^\dagger b_\p\\
&&+\frac{1}{2}\sum_{\p\neq\0}\hat v(\p)\Big(\frac{\sqrt{(N_0^\ext-1)N_0^\ext}}{N}b_\p b_{-\p}+\hc
\Big)\\&&-\frac{1}{N}\sum_{\p\neq\0}\big(\hat v(\p)+\frac{\hat v(\0)}{2}\big)b_\p^\dagger b_\p N^>+\frac{\hat v(\0)}{2N}N^>\\
&&+\frac{\epsilon}{N}\sum_{\p\neq\0}\big(\hat v(\p)+\hat v(\0)\big)b_\p^\dagger b_\p N_0^\ext\\
&&+(1+\epsilon^{-1})\frac{1}{2N} v(\0)L^d N^>(N^>-1).
\end{eqnarray*}
Note that $H_{N,\epsilon}^\ext$ 
preserves $\cH_N$ and restricted to $\cH_N$ coincides with $H_{N,\epsilon}$.

\section{Bogoliubov Hamiltonian}

Consider the  operator
\begin{eqnarray*}
&&\sum_{\p\neq\0}\big(|\p|^2+\hat v(\p)\big)
b_\p^\dagger b_\p+\frac{1}{2}\sum_{\p\neq\0}\hat v(\p)\Big(b_\p b_{-\p}+b_\p^\dagger b_{-\p}^\dagger\Big).
\end{eqnarray*} 
acting on $\cH^\ext$.
It commutes with $N_0+N^>$ and $U$. 
In particular, it preserves
$\cH_N^\ext$. Its restriction to $\cH_N^\ext$ will be denoted
$H_{\Bog,N}$.

We can write
\begin{eqnarray}
H_{N,\epsilon}^\ext&=&\frac{1}{2}\hat v(\0)(N-1)+H_{\Bog,N}+R_{N,\epsilon}, \label{hamextepsilon}\\
R_{N,\epsilon}&:=
&\frac{1}{2}\sum_{\p\neq\0}\hat v(\p)\Big(\Big(\frac{\sqrt{(N_0^\ext-1)N_0^\ext}}{N}
-1\Big)b_\p b_{-\p}
+\hc\Big) \notag\\&&-\frac{1}{N}\sum_{\p\neq\0}\big(\hat v(\p)+\frac{\hat v(\0)}{2}\big)b_\p^\dagger \b_\p N^>+\frac{\hat v(\0)}{2N}N^>\notag\\
&&+\frac{\epsilon}{N}\sum_{\p\neq\0}\big(\hat v(\p)+\hat v(\0)\big)b_\p^\dagger b_\p N_0^\ext \notag\\
&&+(1+\epsilon^{-1})\frac{1}{2N} v(\0)L^d N^>(N^>-1). \label{rnepsilon}
\end{eqnarray}

Clearly, all $H_{\Bog,N}$ are unitarily equivalent to one another:
$UH_{\Bog,N} U^\dagger=H_{\Bog,N-1}$. It is easy to see that they are
all unitarily equivalent to what we can call the {\em standard Bogoliubov Hamiltonian}:
\begin{eqnarray}
H_\Bog&=&\sum_{\p\neq\0}\big(|\p|^2+\hat v(\p)\big)
a_\p^\dagger a_\p+\frac{1}{2}\sum_{\p\neq\0}\hat v(\p)\Big(a_\p a_{-\p}+a_\p^\dagger a_{-\p}^\dagger\Big).\label{stan}
\end{eqnarray}
$H_\Bog$ acts on $\Gamma_\s\Big(l^2\bigl(\frac{2\pi}{L}\zz^d\backslash\{\0\}\bigr)\Bigr)$.

We would now like to find a unitary transformation diagonalizing $H_\Bog$. To this end set
\begin{equation*}
A_{\p}:=|\p|^{2}+\hat{v}(\p), \,\,\,\,\,\,B_{\p}:= \hat{v}(\p).
\end{equation*}
Introduce also $\alpha_\p$, $\beta_\p$, $c_\p$ and $s_\p$ by
\begin{eqnarray*}
\alpha_{\p}&=&\frac{1}{B_\p}\Big(A_\p-\sqrt{A_\p^2-B_\p^2}\Big)\ =\ \tanh(2\beta_\p),\\
c_\p&=&\frac{1}{\sqrt{1-\alpha_\p^2}}\ =\ \cosh(2\beta_\p),\\
s_\p&=&\frac{\alpha_\p}{\sqrt{1-\alpha_\p^2}}\ =\ \sinh(2\beta_\p).
\end{eqnarray*}
Now let $S=\e^{-X}$,
where
\begin{gather}
X=\sum_{\p\neq\0}\beta_{\p}\left(a_{\p}^{\dagger}a_{-\p}^{\dagger}-a_{\p}a_{-\p}\right). \label{X} 
\end{gather}
 Then using the Lie formula
\begin{gather*}
\e^{-X}a_{\q}\e^X=\sum_{j=0}^\infty\frac{(-1)^j}{j!}\underset{\hbox{ $j$ times\hskip 3ex}}{[X,...[X,a_{\q}]\dots]} \\
=1+2\beta_{\q}a^{\dagger}_{-\q}+\frac12 4\beta^{2}_{\q}a_{\q}+\ldots
\end{gather*} 
we get 
\begin{gather}
Sa_{\q}S^{\dagger}=c_\q a_{\q}+s_\q a^{\dagger}_{-\q}. \label{sas}
\end{gather}
Therefore,
\begin{eqnarray}
H_{\Bog}&=&\sum_{\p\neq 0}\frac{1}{2}\left(A_{\p}(a_{\p}^{\dagger}a_{\p}+a_{-\p}^{\dagger}a_{-\p})+B_{\p}(a_{\p}^{\dagger}a_{-\p}^{\dagger}+a_{\p}a_{-\p})\right) \notag \\
&=&-\frac{1}{2}\sum_{\p\neq 0}\left(A_{\p}-\sqrt{A_{\p}^{2}-B_{\p}^{2}}\right)\\&&+
\sum_{\p\neq 0}\sqrt{A_{\p}^{2}-B_{\p}^{2}}
\big(c_\p a_\p^\dagger+s_\p a_{-\p}\big)\big(c_\p a_\p+s_\p a_{-\p}^\dagger\big)\\
&=&E_\Bog+S\big(\sum_{\p\neq 0}e_{\p}a_{\p}^{\dagger}a_{\p}\big)S^\dagger,
\label{hbogoliubov}
\end{eqnarray}
where $e_{\p}$ and $E_{\Bog}$ are defined in  the introduction.
Thus the spectrum of $H_\Bog-E_\Bog$  equals 
\[
\Big\{\sum_{i=1}^j e_{\k_i}\ :\ 
\k_1,\dots,\k_j\in\frac{2\pi}{L}\zz^d\backslash\{\0\},\ \ 
j=0,1,2,\dots\Big\}.\]

For further reference note the identities:
\begin{eqnarray}
\alpha_\p&=&\frac{\hat v(\p)}{|\p|^2+\hat v(\p)+
|\p|\sqrt{2\hat v(\p)+|\p|^2}},\notag\\
(c_\p-s_\p)^2=\frac{1-\alpha_\p}{1+\alpha_\p}&=&
\frac{|\p|}
{\sqrt{|\p|^2+2\hat v(\p)}}, \label{(cp-sp)2} \\
s_\p(c_\p-s_\p)=\frac{\alpha_\p}{1+\alpha_\p}&=&
\frac{\hat v(\p)}
{|\p|^2+2\hat v(\p)+|\p|\sqrt{|\p|^2+2\hat v(\p)}}, \label{s(c-s)}\\
2s_\p c_\p(c_\p-s_\p)^2=\frac{\alpha_\p}{(1+\alpha_\p)^2}&=&
\frac{\hat v(\p)}{|\p|^2+2\hat v(\p)}.\notag
\end{eqnarray}
We note also an alternative formula for the Bogoliubov energy:
\[E_\Bog
=
-\frac{1}{2}\sum_{\p\in\frac{2\pi}{L}\mathbb{Z}^{d}\setminus \{ 0\}}\frac{\hat v(\p)^2}{|\p|^{2}+\hat{v}(\p)+|\p|\sqrt{|\p|^2+2\hat{v}(\p)|\p|}}. 
\]

\section{Lower bound}

In this section we prove the lower bound part of Thm \ref{thm}. Using the notation introduced in the previous sections it follows from the following statement:

\begin{thm} Let $c>0$. Then  there exists $C$ such that for any $\kappa\geq 0$ with
 \beq L^{d+2}(L^d+\kappa)\leq cN\label{condit1}\eeq
 we have
\begin{eqnarray*}
\vecsp\big(\one_{[0,\kappa]}(H_N-E_N)H_N\big)&\geq&
\frac{1}{2}\hat{v}(\0)(N-1)+\vecsp\big(H_\Bog\big)\\&&-CN^{-1/2}L^{d/2+3}(\kappa+L^d)^{3/2}.
\end{eqnarray*}
\label{lower}\end{thm}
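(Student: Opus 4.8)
\textbf{Plan for the proof of Theorem \ref{lower}.}
The strategy is to reduce the $N$-body lower bound, via the localized operator $\one_{[0,\kappa]}(H_N-E_N)H_N$, to a spectral estimate on the extended estimating Hamiltonian $H_{N,-\epsilon}^\ext$, and then compare the latter to $\frac12\hat v(\0)(N-1)+H_{\Bog,N}$ using the explicit remainder $R_{N,-\epsilon}$ from \eqref{hamextepsilon}--\eqref{rnepsilon}. First, by \eqref{hamiltonianboundlower} we have $H_N\geq H_{N,-\epsilon}$ on each $N$-particle sector, and since $H_{N,-\epsilon}^\ext$ preserves $\cH_N$ and agrees there with $H_{N,-\epsilon}$, the min-max principle (in the form recorded in Section~2) gives $\vecsp\big(\one_{[0,\kappa]}(H_N-E_N)H_N\big)\geq\vecsp$ of the corresponding cutoff of $H_{N,-\epsilon}^\ext$ restricted to $\cH_N$, which in turn we bound below by $\vecsp\big(\one_{[0,\kappa']}(\cdot)H_{N,-\epsilon}^\ext\big)$ on all of $\cH_N^\ext$ for a suitable $\kappa'$ (exploiting that the extra directions in $\cH^\nph$ can only help the lower bound, together with the Rayleigh--Ritz principle). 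Here a standard a priori bound is needed: on the range of the low-energy spectral projection of $H_N-E_N$, the number operator $N^>$ and the kinetic energy $T$ are controlled, e.g. via \eqref{lemma1b-}, so that $\langle N^>\rangle\lesssim L^2(\kappa+\text{const})$; this is what lets us pass cutoffs between the various Hamiltonians with only a harmless change of constants.

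Second, and this is the technical heart, I would estimate $\pm R_{N,-\epsilon}$ from below on states in the spectral subspace where $H_{N,-\epsilon}^\ext-\frac12\hat v(\0)(N-1)$ is $\leq\kappa'$. Each of the four groups of terms in \eqref{rnepsilon} is handled by Cauchy--Schwarz and the elementary operator inequality $\pm(A^\dagger B+B^\dagger A)\leq A^\dagger A+B^\dagger B$ from Section~2, turning products like $b_\p b_{-\p}$ and $b_\k^\dagger b_{\p-\k}^\dagger b_\p$ into expressions controlled by $N^>$, by $\sum_{\p\neq\0}e_\p b_\p^\dagger b_\p$ (hence by $H_{\Bog,N}$), and by powers of $N_0^\ext/N$. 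The factor $\big(\tfrac{\sqrt{(N_0^\ext-1)N_0^\ext}}{N}-1\big)$ is $O(N^>/N)$ on $\cH_N^\ext$ since $N_0^\ext=N-N^>$; combined with $\sum\hat v(\p)\lesssim L^d\cdot\|\hat v\|_\infty$ (the sum over the lattice of a fixed integrable function, giving the $L^d$ growth) and the diamagnetic-type bound $N^>\lesssim L^2 T\lesssim L^2(\kappa'+L^d)$, one gets $\|R_{N,-\epsilon}\|\lesssim N^{-1/2}L^{d/2+3}(\kappa+L^d)^{3/2}$ after optimizing over $\epsilon$ (the choice $\epsilon\sim$ a suitable negative power balancing the $\epsilon$-term against the $\epsilon^{-1}$-term). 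This produces $H_{N,-\epsilon}^\ext\geq\frac12\hat v(\0)(N-1)+H_{\Bog,N}-(\text{error})$ as a form inequality on the relevant low-energy subspace.

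Third, I would transfer the result from $H_{\Bog,N}$ to the standard $H_\Bog$. Since $UH_{\Bog,N}U^\dagger=H_{\Bog,N-1}$ and all $H_{\Bog,N}$ are unitarily equivalent to $H_\Bog$ acting on $\Gamma_\s(l^2(\tfrac{2\pi}{L}\zz^d\setminus\{\0\}))$, we have $\vecsp(H_{\Bog,N})=\vecsp(H_\Bog)$; the same unitary conjugation is compatible with the spectral cutoffs, so no loss occurs. Assembling the three steps and tracking constants — all of which depend only on $v$, $d$ and $c$, via $\|\hat v\|_1$, $\|\hat v\|_\infty=\hat v(\0)$, $v(\0)=\tfrac{1}{(2\pi)^d}\|\hat v\|_1$, and the Poisson-type bound $\sum_{\p\in\frac{2\pi}{L}\zz^d}\hat v(\p)\leq C_v L^d$ valid for $L\geq1$ — yields the claimed inequality under the hypothesis \eqref{condit1}, which is precisely what is needed so that $N^{-1/2}L^{d/2+3}(\kappa+L^d)^{3/2}$ stays a genuine error term (i.e.\ the remainder is smaller than the quantities it perturbs).

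\textbf{Main obstacle.} The delicate point is the second step: bounding $R_{N,-\epsilon}$ on the low-energy subspace with the \emph{correct} powers of $L$. A crude application of Cauchy--Schwarz produces worse $L$-dependence (this is exactly the $\e^{cL^{d/2}}$ phenomenon warned about in the introduction); one must instead use the a priori energy bound self-consistently — i.e.\ first establish $\langle N^>\rangle$ and $\langle H_{\Bog,N}\rangle$ bounds on the cutoff subspace, then feed them back into the estimate of $R_{N,-\epsilon}$ — and choose $\epsilon$ depending on $N,L,\kappa$ to balance the two $\epsilon$-dependent terms. Getting the exponent $L^{d/2+3}(\kappa+L^d)^{3/2}$ rather than something larger requires being careful that the cubic term $N^>(N^>-1)$, weighted by $(1+\epsilon^{-1})\tfrac{v(\0)L^d}{2N}$, is the binding constraint and dictates the scaling of $\epsilon$.
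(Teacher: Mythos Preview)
Your overall architecture---use \eqref{hamiltonianboundlower}, isolate $R_{N,-\epsilon}$, bound it on a low-energy subspace, optimize over $\epsilon$, then invoke min-max and the unitary equivalence $H_{\Bog,N}\simeq H_\Bog$---matches the paper. There are, however, two problems, one structural and one a genuine gap.

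\textbf{Structural detour.} You propose to pass from the spectral projection of $H_N$ to a spectral projection of $H_{N,-\epsilon}^\ext$ on the whole extended space, and to transfer cutoffs between these different operators. This is unnecessary and creates real difficulties (the subspace would depend on $\epsilon$, and on $\cH_N^\ext$ one may have $N^>$ arbitrarily large since $N_0^\ext$ can be negative, so the a priori bounds you invoke are not available there). The paper never leaves the projection $\one_\kappa^N:=\one_{[0,\kappa]}(H_N-E_N)$: it simply sandwiches the inequality $H_N\geq \frac12\hat v(\0)(N-1)+H_{\Bog,N}+R_{N,-\epsilon}$ by $\one_\kappa^N$, bounds $\one_\kappa^N R_{N,-\epsilon}\one_\kappa^N$ from below, optimizes $\epsilon$, and then uses Rayleigh--Ritz in the direction $\vecsp(\one_\kappa^N H_{\Bog,N}\one_\kappa^N|_{\Ran\one_\kappa^N})\geq\vecsp(H_{\Bog,N})=\vecsp(H_\Bog)$. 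No cutoff transfer is needed. (Incidentally, the cubic term $b_\k^\dagger b_{\p-\k}^\dagger b_\p$ you mention is not present in $R_{N,-\epsilon}$; it was already discarded in passing from $H_N$ to the estimating Hamiltonian.)

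\textbf{The genuine gap: the bound on $(N^>)^2$.} You only invoke a first-moment bound $\langle N^>\rangle\lesssim L^2(\kappa+L^d)$, obtained from $N^>\leq CL^2 T$ and Lemma~\ref{lemma1}. But the term you yourself identify as binding,
\[
(1+\epsilon^{-1})\frac{v(\0)L^d}{2N}\,N^>(N^>-1),
\]
requires control of the \emph{second} moment $\one_\kappa^N (N^>)^2\one_\kappa^N$. A first-moment bound alone does not give this (and crude substitutes like $(N^>)^2\leq N\,N^>$ destroy the smallness in $N$). The paper proves separately that
\[
\one_\kappa^N (N^>)^2\one_\kappa^N\ \leq\ C L^4(L^d+\kappa)^2,
\]
via an argument of a different nature: one writes $\langle\Psi|N^>T|\Psi\rangle$ as a sum of three pieces by inserting $H_N-E_N$ and peeling off the interaction, bounds each piece using \eqref{Q1}--\eqref{Q1Q2} together with the positivity estimate \eqref{lemma1acalc}, and closes a quadratic inequality in $X=\langle\Psi|N^>T|\Psi\rangle^{1/2}$. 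This lemma is the missing idea in your plan; without it the optimization in $\epsilon$ cannot produce the exponent $L^{d/2+3}(\kappa+L^d)^{3/2}$. Once you have both moment bounds, the choice $\epsilon\sim N^{-1/2}L^{d/2+1}(L^d+\kappa)^{1/2}$ (which is $\leq1$ precisely under \eqref{condit1}) balances $\epsilon\,\one_\kappa^N N^>\one_\kappa^N$ against $\epsilon^{-1}N^{-1}L^d\,\one_\kappa^N(N^>)^2\one_\kappa^N$ and gives the stated error.
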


The proof of the  lower bound starts with estimates analogous to Lemmas 1 and 2 of \cite{S}. Note that in these estimates all operators involve the physical  Hilbert space.

\begin{lemma}\label{lemma1}
The ground state energy $E_N$ of $H_N$ satisfies the bounds
\beqn
 0\geq E_N - \frac{1}{2}\left(N-1\right)\hat{v}(\0)\geq \frac{1 }{2}\big(\hat v(\0)-L^d v(\0)\big). \label{lemma1a}
\eeqn
 \label{lemma1b}  
\end{lemma}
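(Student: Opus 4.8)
\textbf{Proof plan for Lemma \ref{lemma1b}.}

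The plan is to establish the two-sided bound on $E_N - \frac12(N-1)\hat v(\0)$ by testing the Hamiltonian $H_N$ on the constant (condensate) wave function and, separately, by bounding the interaction from below using the assumed positivity/positive-definiteness of $v$ together with the operator estimates from Section 4. First I would treat the upper bound $E_N \leq \frac12(N-1)\hat v(\0)$: let $\Psi_0 = 1^{\otimes N}$ be the normalized constant function on $\Lambda^N$ (all $N$ particles in the zero mode). Then $T\Psi_0 = 0$ since the constant is in the kernel of the periodic Laplacian, and $\langle \Psi_0 | v^L(\x_i - \x_j)|\Psi_0\rangle = \frac1{L^d}\int_\Lambda v^L(\x)\,\d\x = \frac{1}{L^d}\hat v(\0)$ by the definition of the periodized potential (only the $\p=\0$ term of the Fourier series survives integration over $\Lambda$). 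Multiplying by the coupling $\frac{L^d}{N}$ and the number of pairs $\binom N2$ gives $\langle\Psi_0|H_N|\Psi_0\rangle = \frac{L^d}{N}\cdot\frac{N(N-1)}{2}\cdot\frac{\hat v(\0)}{L^d} = \frac12(N-1)\hat v(\0)$, so by the variational principle $E_N \leq \frac12(N-1)\hat v(\0)$, which is the left inequality (the ``$0\geq$'' part after subtracting).

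For the lower bound I would use the second-quantized decomposition of the interaction already recorded in Section 4, or more directly the pointwise bound: since $v \geq 0$, we have $\lambda \sum_{i<j} v^L(\x_i-\x_j) \geq 0$ is too crude; instead I would write $v^L(\x_i-\x_j) = \frac{1}{L^d}\sum_\p \hat v(\p)\e^{\i\p(\x_i-\x_j)}$ and observe that $\sum_{i<j}\e^{\i\p(\x_i-\x_j)} = \frac12\big(|\sum_i \e^{\i\p\x_i}|^2 - N\big) \geq -\frac N2$ since each $\hat v(\p)\geq0$. Hence $\sum_{i<j} v^L(\x_i-\x_j) \geq \frac{1}{L^d}\sum_\p \hat v(\p)\big(-\tfrac N2\big) = -\frac{N}{2L^d}\sum_\p\hat v(\p) = -\frac{N}{2}v^L(\0)$. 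Since $v^L(\0) = \frac{1}{L^d}\sum_\p \hat v(\p) \to v(\0)$ and, more precisely, $v^L(\0) \leq v(\0)$ fails in general — here I would instead just use $v^L(\0)$ and then compare: multiplying by $\frac{L^d}{N}$ gives the interaction $\geq -\frac{L^d}{2}v^L(\0) = -\frac12\sum_\p\hat v(\p)$. Combined with $T\geq 0$, this yields $E_N \geq -\frac12\sum_{\p}\hat v(\p) \cdot(\text{something})$; to match the claimed form $\frac12(N-1)\hat v(\0) + \frac12(\hat v(\0) - L^d v(\0))$ I would separate the $\p=\0$ term: $\frac{L^d}{N}\sum_{i<j}v^L = \frac{L^d}{N}\sum_{i<j}\frac{\hat v(\0)}{L^d} + \frac{L^d}{N}\sum_{i<j}\big(v^L(\x_i-\x_j) - \frac{\hat v(\0)}{L^d}\big) = \frac{N-1}{2}\hat v(\0) + \frac{L^d}{N}\sum_{i<j}\tilde v^L(\x_i-\x_j)$ where $\tilde v^L$ has Fourier transform supported away from $\0$ and still satisfies $\hat{\tilde v}^L \geq 0$. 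Applying the same $\sum_{i<j}\e^{\i\p\x}\geq -N/2$ estimate to $\tilde v^L$ gives the remainder $\geq -\frac{L^d}{N}\cdot\frac N2 \tilde v^L(\0) = -\frac{L^d}{2}(v^L(\0) - \frac{\hat v(\0)}{L^d}) = \frac12\hat v(\0) - \frac{L^d}{2}v^L(\0) \geq \frac12\hat v(\0) - \frac{L^d}{2}v(\0)$, using $v^L(\0)\leq v(\0)$ which holds since $v^L(\0) = \frac1{L^d}\sum_\p\hat v(\p) \leq \frac{1}{(2\pi)^d}\int\hat v(\p)\,\d\p \cdot$ — actually $v^L(\0)$ is a Riemann-type sum; the clean statement is $v^L(\x) = \sum_{\n\in\zz^d} v(\x + L\n) \geq v(\x)$ for $\x\in\Lambda$ by positivity of $v$, so in particular $v^L(\0)\geq v(\0)$, giving the bound with the correct sign. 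I would use this Poisson-summation form $v^L(\x) = \sum_\n v(\x+L\n)$ throughout, as it makes the comparison with $v(\0)$ transparent.

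The main obstacle I anticipate is bookkeeping the $\p=\0$ versus $\p\neq\0$ split consistently and getting the constant $\frac12(\hat v(\0) - L^d v(\0))$ exactly right (including its sign, which is negative since $L^d v(\0) = L^d\sum_\n v(L\n)\cdot$— no, $L^d v(\0)\geq \hat v(\0)$ need not hold either). The cleanest route is: bound $\frac{L^d}{N}\sum_{i<j}v^L(\x_i-\x_j)$ below by writing $\sum_{i<j}v^L(\x_i-\x_j) = \frac12\sum_{i\neq j}v^L(\x_i-\x_j) + $ handling diagonal terms — since $v^L\geq0$, $\frac12\sum_{i,j}v^L(\x_i-\x_j) = \frac12\sum_\p\frac{\hat v(\p)}{L^d}|\sum_k\e^{\i\p\x_k}|^2 \geq 0$, hence $\sum_{i<j}v^L(\x_i-\x_j) \geq -\frac N2 v^L(\0)$ (subtracting the $i=j$ diagonal, which contributes $\frac N2 v^L(\0)$). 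Then $\frac{L^d}{N}\sum_{i<j}v^L \geq -\frac{L^d}{2}v^L(\0)$, and since $T\geq0$, $E_N\geq -\frac{L^d}{2}v^L(\0)$. Finally rewrite $-\frac{L^d}{2}v^L(\0) = \frac{N-1}{2}\hat v(\0) - \frac{N-1}{2}\hat v(\0) - \frac{L^d}{2}v^L(\0)$ and note $\frac{L^d}{N}\langle\sum_{i<j}v^L\rangle = \frac{N-1}{2}\hat v(\0) + (\text{correction}\geq \frac12\hat v(\0) - \frac{L^d}{2}v^L(\0))$; using $v^L(\0)\leq v(\0)$ (valid because $v^L(\0)=\sum_{\n}v(L\n)$ and... ) — at this point I would simply invoke $v^L(\0)\leq v(\0)+\|v\|_\infty$-type control, or better, replace $v^L(\0)$ by $v(\0)$ via the elementary fact that the periodization evaluated at a lattice point satisfies $v^L(\0) = v(\0) + \sum_{\n\neq\0}v(L\n) \geq v(\0)$, so $-v^L(\0)\leq -v(\0)$, giving $\geq \frac12\hat v(\0) - \frac{L^d}{2}v(\0)$ as claimed. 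I expect the whole argument to be two short paragraphs once the periodization identity $v^L(\x)=\sum_\n v(\x+L\n)$ and the positive-definiteness trick $\sum_{i,j}\hat v(\p)|\sum_k\e^{\i\p\x_k}|^2\geq0$ are in hand.
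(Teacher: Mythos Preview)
Your upper bound via the constant trial function and your lower bound via the positive-definiteness identity
\[
\frac{1}{2L^d}\sum_{\p\neq\0}\hat v(\p)\Big|\sum_{i=1}^N\e^{\i\p\x_i}\Big|^2\geq 0
\]
(equivalently, your ``separate the $\p=\0$ term'' computation) are exactly the paper's argument, and they already deliver
\[
E_N-\tfrac12(N-1)\hat v(\0)\ \geq\ \tfrac12\hat v(\0)-\tfrac{L^d}{2}\,v^L(\0).
\]
At this point you are \emph{done}: recall that just before Theorem~\ref{thm} the paper announces ``From now on, we will drop the superscript $L$'', so the symbol $v(\0)$ in the statement of Lemma~\ref{lemma1b} denotes $v^L(\0)$, not the original potential on $\rr^d$. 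No comparison between $v^L(\0)$ and the un-periodized $v(\0)$ is needed.

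Your final paragraph, where you try to pass from $v^L(\0)$ to the original $v(\0)$, contains a sign slip. You correctly observe (via Poisson summation and $v\geq0$) that $v^L(\0)=\sum_{\n\in\zz^d}v(L\n)\geq v(\0)$, hence $-v^L(\0)\leq -v(\0)$. But then the lower bound $\tfrac12\hat v(\0)-\tfrac{L^d}{2}v^L(\0)$ is \emph{smaller} than $\tfrac12\hat v(\0)-\tfrac{L^d}{2}v(\0)$, so from $E_N-\tfrac12(N-1)\hat v(\0)\geq \tfrac12\hat v(\0)-\tfrac{L^d}{2}v^L(\0)$ you cannot conclude the inequality with $v(\0)$ in place of $v^L(\0)$; the chain of inequalities breaks. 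Fortunately, as noted above, this entire step is unnecessary once the notational convention is recognized.
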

\begin{proof}
The upper bound to the ground state energy follows by using a constant trial wave function $\Psi=L^{-Nd/2}$, which gives
\beqn\frac{1}{2}(N-1)\hat v(\0)\geq E_N.\label{lemma1acalc2}\eeqn

Using $\hat{v}(\p)\geq 0$ for every $\p\in\frac{2\pi}{L}\Z^{d}$ we obtain
 $\sup\limits_\x v(\x)=v(\0)$. Moreover, \beqnn
\frac{1}{2L^d}\sum_{\p\in\frac{2\pi}{L}\Z^{d}\setminus\{\0\}}\hat{v}(\p)\left|\sum_{i=1}^{N}\e^{\i\p\x_{j}}\right|^{2}\geq0.
\eeqnn
This is equivalent to
\beqn
\sum_{1\leq i<j\leq N}v(\x_{i}-\x_{j})\geq\frac{N^{2}}{2L^d}\hat{v}(\0)-\frac{N}{2}v(\0). \label{lemma1acalc}
\eeqn
Hence,
\beqn
H_N\geq T + \frac{ L^d}{N}  \left(\frac{N^{2}}{2L^d}\hat{v}(\0)-\frac{N}{2}v(\0)\right),
\label{lemma1acalc1}\eeqn
and so
\[E_N\geq \frac{ L^d}{N}   \left(\frac{N^{2}}{2L^d}\hat{v}(\0)-\frac{N}{2}v(\0)\right).\]
\end{proof} 

  Let $\kappa\geq0$. 
For brevity we  introduce the following notation for  the spectral
projection onto  the spectral subspace of $H_N$ corresponding to the
energy less than or equal to $E_N+\kappa$:
\[\one_{\kappa}^N:=\one_{[0,\kappa]}(H_N-E_N).\]
 $\one_{\kappa}^N$ can be  understood as a projection acting on the extended space with range in the physical space.

\begin{lemma} There exists $C$ such that
\beq N^>\leq CL^{2}(H_N-E_N+L^d).\label{pafa}\eeq
Consequently,
\begin{eqnarray}
\label{eq6a}
\one_{\kappa}^NN^{>}\one_{\kappa}^N&\leq &C L^{2}\left(L^d+\kappa\right).
\end{eqnarray}
\end{lemma}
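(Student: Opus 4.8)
The plan is to derive the operator inequality \eqref{pafa} by comparing $H_N$ with the lower estimating Hamiltonian $H_{N,-\epsilon}$ from \eqref{hamiltonianboundlower} (say at $\epsilon=1/2$), and then extracting from $H_{N,-\epsilon}$ enough positivity to dominate a fixed multiple of $N^>$. First I would recall that by \eqref{hamiltonianboundlower} we have $H_N\ge H_{N,-1/2}$, so it suffices to bound $N^>$ by $CL^2(H_{N,-1/2}-E_N+L^d)$. The natural term to exploit is the kinetic-type piece $\sum_{\p\ne\0}(|\p|^2+\hat v(\p))a_\p^\dagger a_\p\ge \sum_{\p\ne\0}|\p|^2 a_\p^\dagger a_\p = T$, which via the inequality $\langle\Psi|N^>|\Psi\rangle\le \frac{L^2}{(2\pi)^2}\langle\Psi|T|\Psi\rangle$ already controls $N^>$ by $\frac{L^2}{(2\pi)^2}T$. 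So the crux is to show that $T$ is bounded by a constant multiple of $H_{N,-1/2}-E_N$ plus a term of order $L^d$ (which, after multiplication by $L^2$, produces the $L^{d+2}$ order implicit in the statement and matches the $L^d$ inside the bracket up to the overall $L^2$).

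The key steps, in order, are: (1) write $H_{N,-1/2} = \frac12\hat v(\0)(N-1) + T + (\text{rest})$ and isolate the potentially negative contributions to ``rest'', namely the off-diagonal pairing term $\frac{1}{2N}\sum_{\p\ne\0}\hat v(\p)(a_\0^\dagger a_\0^\dagger a_\p a_{-\p}+\hc)$ and the term $-\frac1N\sum_{\p\ne\0}(\hat v(\p)+\tfrac{\hat v(\0)}{2})a_\p^\dagger a_\p N^>$; (2) bound these negative terms from below using $\pm(A^\dagger B+B^\dagger A)\ge -A^\dagger A - B^\dagger B$, the boundedness of $\hat v$ (since $\hat v\in L^1$ implies $\hat v$ is bounded, and $\|\hat v\|_\infty\le \hat v(\0)$ by positive definiteness), and the operator bounds $N_0\le N$, $N^>\le N$; this shows each negative term is bounded below by $-C_1 N^> - C_2$ for suitable constants, or more carefully by something like $-\frac14 T - (\text{lower order})$ when one distributes the $\frac1N$ and uses $N_0\le N$; (3) absorb the resulting $-\,(\text{fraction of }T)$ into the $+T$ coming from the kinetic term, leaving $H_{N,-1/2}\ge \frac12\hat v(\0)(N-1) + \frac12 T - C_3 - C_4 L^d N^>(N^>-1)/N$; (4) handle the last quartic term $(1+\epsilon^{-1})\frac{1}{2N}v(\0)L^d N^>(N^>-1)$ — but note this term is \emph{positive}, so it only helps and can be dropped for the lower bound; (5) combine with Lemma~\ref{lemma1} ($E_N \le \frac12\hat v(\0)(N-1)$ and $E_N \ge \frac12\hat v(\0)(N-1) + \frac12(\hat v(\0)-L^d v(\0))$, so $E_N = \frac12\hat v(\0)(N-1) + O(L^d)$) to replace $\frac12\hat v(\0)(N-1)$ by $E_N + O(L^d)$; this yields $H_N - E_N \ge \frac12 T - C L^d$, hence $T\le 2(H_N-E_N) + CL^d \le C'(H_N - E_N + L^d)$; (6) finally apply $N^>\le \frac{L^2}{(2\pi)^2}T$ to conclude $N^>\le CL^2(H_N-E_N+L^d)$, and then \eqref{eq6a} follows by sandwiching with $\one_\kappa^N$ and using that on its range $H_N-E_N\le\kappa$.

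The main obstacle I anticipate is step (2)–(3): bounding the term $-\frac1N\sum_{\p\ne\0}(\hat v(\p)+\tfrac{\hat v(\0)}{2})a_\p^\dagger a_\p N^>$ from below in a way that is only a small fraction of $T$ (or of $N^>$ itself) requires care, because $a_\p^\dagger a_\p N^>$ is quadratic in $N^>$ and a crude bound gives $-\frac{C}{N}(N^>)^2\ge -C N^>$ only via $N^>\le N$, which is too lossy if one then wants to absorb it into $T$ — one instead wants to observe that this term plus the positive quartic term $(1+\epsilon^{-1})\frac{v(\0)L^d}{2N}N^>(N^>-1)$ together are bounded below by a quantity of order $-\frac{1}{N}(N^>)^2$, and then use that $\frac1N(N^>)^2$ is itself controlled: on the physical $N$-particle space $N^>\le N$, so $\frac1N(N^>)^2\le N^>$, giving $-CN^> $, which after being moved to the other side and combined with the kinetic estimate still closes because $T\ge \frac{(2\pi)^2}{L^2}N^>$ dominates a fixed multiple of $N^>$ once $L\ge 1$... actually here one must be slightly cleverer, splitting $T$ as $\delta T + (1-\delta)T$ and using $\delta T\ge \delta\frac{(2\pi)^2}{L^2}N^>$ to beat $CN^>$ only when $L$ is bounded — so the honest route is to keep the $L^d$ term: $-CN^>\ge -C\cdot\frac{L^2}{(2\pi)^2}T$ is false in general, so instead one bounds $-\frac{C}{N}\sum_\p \hat v(\p)a_\p^\dagger a_\p N^> \ge -\frac{C\|\hat v\|_\infty}{N}(N^>)^2 \ge -\frac{C\|\hat v\|_\infty}{N}\cdot N\cdot N^> $ is still $-CN^>$; the resolution is that this is compared not against $T$ alone but one writes $H_N - E_N \ge \frac12 T - CN^> - CL^d$ and then, since also $H_N - E_N \ge 0$ trivially... no — the clean fix, used in \cite{S}, is that the constant in front of $N^>$ coming from $\frac1N\cdot\hat v(\0)\cdot N^>$-type terms is $O(1)$ while $T\ge (2\pi)^2 L^{-2} N^>$, so these compete with a ratio $L^2$, forcing the final bound to carry exactly the factor $L^2$ that appears in \eqref{pafa}; one must simply track that $\frac12 T - CN^> \ge \frac12 T - C\frac{L^2}{(2\pi)^2}T$ is useless, so instead one does \emph{not} absorb into $T$ but rather writes $N^> \le \frac{L^2}{(2\pi)^2}T \le \frac{L^2}{(2\pi)^2}\cdot C(H_N-E_N+ CN^> /\text{?}+L^d)$ — the genuinely correct bookkeeping is to first establish $T\le C(H_N-E_N) + CN^> + CL^d$ and then \emph{iterate}: substitute $N^>\le \frac{L^2}{(2\pi)^2}T$ into the right side to get $T\le C(H_N-E_N)+ \frac{CL^2}{(2\pi)^2}T + CL^d$, and then the hypothesis (implicit, or the regime $L\ge 1$ with the constant arranged) lets one absorb $\frac{CL^2}{(2\pi)^2}T$ — which fails for large $L$. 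Hence the truly correct statement must be that the negative terms are $O(N^{-1})\times(\text{quadratic})$ and hence \emph{genuinely small} relative to $T$ when $N$ is large compared to $L^{d+2}$; since \eqref{pafa} as stated has no such hypothesis, the intended argument must bound the bad terms by a small \emph{absolute} fraction of $T$ using $N_0/N\le 1$ and $\|\hat v\|_\infty$ times $\epsilon$-type Cauchy-Schwarz with the pairing term, so that after choosing the split the coefficient of $T$ is, say, $\frac14$ regardless of $L$ — this is the delicate point and where I would spend the most effort, namely verifying that in $H_{N,-1/2}$ the coefficients have been arranged precisely so that the net coefficient of $T$ stays positive and bounded below independently of $L$ and $N$, with all residual negativity collected into the $+L^d$ bracket term (which is where the ``$-L^d v(\0)$'' in Lemma~\ref{lemma1b} comes from and why that lemma is proved first).
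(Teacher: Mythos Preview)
Your route through the estimating Hamiltonian $H_{N,-\epsilon}$ has a genuine gap, and it is precisely the one you keep circling in your stream of consciousness. In step~(4) you claim the quartic term $(1+\epsilon^{-1})\frac{1}{2N}v(\0)L^d N^>(N^>{-}1)$ is positive and can be dropped. But you are working with $H_{N,-\epsilon}$ at $\epsilon=1/2$: substituting $-\epsilon$ into the definition of $H_{N,\epsilon}$ gives the coefficient $1+(-\epsilon)^{-1}=1-\epsilon^{-1}=-1$, so this term is \emph{negative}, of size $-\tfrac{v(\0)L^d}{2N}N^>(N^>{-}1)$. Bounding it crudely via $N^>/N\le 1$ yields $-C L^d N^>$, which is exactly the term you then (correctly) realize cannot be absorbed into $T\ge (2\pi)^2 L^{-2}N^>$ once $L$ is large. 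The same obstruction hits the term $-\tfrac1N\sum_\p(\hat v(\p)+\tfrac{\hat v(\0)}{2})a_\p^\dagger a_\p N^>$: you only get $-C N^>$ with an $L$-independent constant, but $T$ controls $N^>$ only with the prefactor $L^{-2}$, so the absorption fails. None of your proposed fixes (iteration, splitting $T$, invoking an $N\gg L^{d+2}$ hypothesis) close the argument under the stated generality.

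The paper's proof sidesteps all of this by \emph{not} using $H_{N,-\epsilon}$. It goes back to the raw potential lower bound from Lemma~\ref{lemma1}: positivity of $\hat v$ gives \eqref{lemma1acalc}, hence \eqref{lemma1acalc1}, i.e.\ $H_N\ge T+\tfrac{N}{2}\hat v(\0)-\tfrac{L^d}{2}v(\0)$. Combining with the upper bound $E_N\le\tfrac12(N{-}1)\hat v(\0)$ from \eqref{lemma1acalc2} yields directly
\[
T \le H_N-E_N-\tfrac12\hat v(\0)+\tfrac12 L^d v(\0)\le C(H_N-E_N+L^d),
\]
and then \eqref{lemma1b-} gives $N^>\le \tfrac{L^2}{(2\pi)^2}T\le CL^2(H_N-E_N+L^d)$. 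The point is that the full interaction $\sum_{i<j}v(\x_i-\x_j)$ has a clean lower bound with error only $O(L^d)$, whereas once you decompose into $P/Q$ pieces via $H_{N,-\epsilon}$ you generate cross terms whose naive lower bounds scale worse in $L$. So: abandon the estimating Hamiltonian here and use \eqref{lemma1acalc1}--\eqref{lemma1acalc2} directly; the argument is three lines.
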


\begin{proof}
Using first (\ref{lemma1acalc1}) and   
(\ref{lemma1acalc2}) we obtain
\begin{eqnarray*}
T&\leq &H_N-E_N-\frac{1}{2}\hat v(\0)+\frac{1}{2}L^d v(\0)\\
&\leq& C(H_N-E_N+L^d).
\end{eqnarray*}
By  (\ref{lemma1b-}) this implies (\ref{pafa}).
\end{proof}

\begin{lemma}\label{lemma2}
We have
\begin{eqnarray}
\label{eq6}
\one_{\kappa}^N(N^{>})^2\one_{\kappa}^N&\leq &C L^{4}\left(L^d+\kappa\right)^{2}.
\end{eqnarray}
\end{lemma}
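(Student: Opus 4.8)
The plan is to bootstrap from the first-power bound \eqref{pafa} to a second-power bound by controlling the commutator between $N^>$ and $H_N-E_N$. The key structural fact is that the previous lemma gives $N^>\leq CL^2(H_N-E_N+L^d)$ as an operator inequality on the physical space, and multiplying by $\one_\kappa^N$ on both sides immediately yields \eqref{eq6a}. To square this I would first establish a bound of the form $(N^>)^2\leq CL^2(H_N-E_N+L^d)N^> + (\text{commutator terms})$, and then sandwich with $\one_\kappa^N$. The commutator $[N^>,H_N]$ is nonzero only through the interaction term (the kinetic term commutes with $N^>$), and within the interaction only the off-diagonal pieces $a_\0^\dagger a_\0^\dagger a_\p a_{-\p}$ and its adjoint fail to commute with $N^>$; these change $N^>$ by $\pm2$. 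So $[N^>,H_N]$ is a bounded-below-relative perturbation that I will need to absorb.

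The cleanest route I would take: apply \eqref{pafa} in the sandwiched form. Write $A:=N^>$ and $B:=C L^2(H_N-E_N+L^d)$, so $0\le A\le B$. Then for any vector $\Psi$ in the range of $\one_\kappa^N$ we want $\langle\Psi|A^2|\Psi\rangle\le CL^4(L^d+\kappa)^2$. Since $A\le B$ we have $\langle\Psi|A^2|\Psi\rangle=\langle A\Psi|A\Psi\rangle$, and I would like to say $\|A\Psi\|\le \|B\Psi\|$ — but that requires $A\le B$ to upgrade to $A^2\le B^2$, which is false in general for noncommuting operators. Instead I would use that $A^2 = A^{1/2}AA^{1/2} \le A^{1/2}BA^{1/2}$, and then note $A^{1/2}BA^{1/2} = CL^2 A^{1/2}(H_N-E_N+L^d)A^{1/2}$. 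On the range of $\one_\kappa^N$ the factor $(H_N-E_N+L^d)$ is bounded by $(\kappa+L^d)$ as a number, but it sits between the $A^{1/2}$'s, so I cannot directly pull it out unless $A$ commutes with $H_N$ on that subspace — which it does not. The honest fix is to commute: $A^{1/2}(H_N-E_N)A^{1/2} \le \tfrac12(A(H_N-E_N)+(H_N-E_N)A) + \tfrac12\big|[A^{1/2},[H_N,A^{1/2}]]\big|$-type manipulations, or more simply to iterate \eqref{pafa} twice and handle one explicit commutator.

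Concretely, here is the iteration I expect to work. From \eqref{pafa}, $N^>\le B$ with $B = CL^2(H_N-E_N+L^d)$. Multiply on the left and right by $\one_\kappa^N$ and insert a resolution: $\one_\kappa^N (N^>)^2 \one_\kappa^N = \one_\kappa^N N^> \one_\kappa^N N^> \one_\kappa^N + \one_\kappa^N N^> (1-\one_\kappa^N) N^> \one_\kappa^N$. The first term is bounded by $\big(CL^2(L^d+\kappa)\big)^2$ using \eqref{eq6a} twice. For the second term, $(1-\one_\kappa^N)N^>\one_\kappa^N = (1-\one_\kappa^N)[N^>,\text{(stuff)}]\one_\kappa^N$ vanishes unless $N^>$ maps a low-energy state to a high-energy state, which can only happen via the $\pm2$-changing interaction terms; a direct estimate shows $\|(1-\one_\kappa^N)N^>\one_\kappa^N\Psi\|$ is controlled by $\|(H_N-E_N)^{1/2}\,(\text{number-type operator})\,\Psi\|$ which is again bounded using \eqref{eq6a} and Lemma~\ref{lemma1b}. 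The main obstacle I anticipate is precisely this cross term: one must verify that the ``off-diagonal in energy'' part of $N^>$ applied to $\one_\kappa^N$ does not produce an uncontrolled power of $L$. I expect the bound $\|(1-\one_\kappa^N)N^>\one_\kappa^N\| \le C L^{?}(L^d+\kappa)$ to come out with the right exponents because the problematic interaction term carries a prefactor $1/N$ and two zero-mode operators $a_\0,a_\0^\dagger$ whose norm on the $N$-particle sector is $O(\sqrt N)$, and combining with $L^d(L^d+\kappa)\le cN$ from \eqref{condit1} kills the excess. If the cross term turns out to be awkward I would fall back on the softer argument: since $0\le N^>\le B$, for any $\Psi\in\Ran\one_\kappa^N$ one has $\langle\Psi|(N^>)^2|\Psi\rangle \le \langle\Psi|N^>|\Psi\rangle^{1/2}\langle\Psi|(N^>)^3|\Psi\rangle^{1/2}$ is circular, so instead use $\langle\Psi|(N^>)^2|\Psi\rangle\le \|N^>\Psi\|\,\|N^>\Psi\|$ trivially and bound $\|N^>\Psi\|^2 = \langle\Psi|(N^>)^2|\Psi\rangle$ — also circular — hence the commutator route is genuinely necessary, and that is where the real work lies.
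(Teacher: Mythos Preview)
Your proposal is not a proof: you correctly diagnose that $A\le B$ does not imply $A^2\le B^2$, you set up the splitting $\one_\kappa^N (N^>)^2\one_\kappa^N = \one_\kappa^N N^>\one_\kappa^N N^>\one_\kappa^N + \one_\kappa^N N^>(1-\one_\kappa^N)N^>\one_\kappa^N$, and then you stop, admitting that bounding the cross term ``is where the real work lies.'' That cross term is the whole content of the lemma, and your sketch for it is speculative (you invoke \eqref{condit1}, which is \emph{not} an assumption of this lemma, and you don't actually carry out any commutator estimate). Note also that $[N^>,H_N]$ is not just the quadratic $a_\0^\dagger a_\0^\dagger a_\p a_{-\p}$ piece; the cubic terms $Q\otimes Q\, v\, P\otimes Q$ in the interaction also fail to commute with $N^>$, so the structure is messier than you suggest.

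The paper's argument is quite different and avoids the energy-splitting idea entirely. It studies the quantity $\langle\Psi|N^>T|\Psi\rangle$ for $\Psi\in\Ran\one_\kappa^N$, using the identity $N^>=\sum_i Q_i$ together with permutation symmetry to write $\langle\Psi|N^>T|\Psi\rangle$ as $\langle\Psi|N^>(H_N-E_N-\tfrac\kappa2)|\Psi\rangle$ plus $N\langle\Psi|Q_1(\cdots)|\Psi\rangle$-terms in which the interaction is split into the part involving particle~1 and the rest. Each piece is bounded by expressions in $\langle\Psi|(N^>)^2|\Psi\rangle^{1/2}$ and $\langle\Psi|N^>|\Psi\rangle$, and combining with $(N^>)^2\le CL^2 N^>T$ (from the spectral gap of the Laplacian) yields a quadratic inequality $X^2\le aX+c$ for $X=\langle\Psi|N^>T|\Psi\rangle^{1/2}$. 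Solving it gives $\one_\kappa^N N^>T\one_\kappa^N\le CL^2(L^d+\kappa)^2$, and one more application of $(N^>)^2\le CL^2 N^>T$ finishes. The point is that $N^>T$, unlike $N^>(H_N-E_N)$, can be analyzed directly via the one-body projections $Q_i$, sidestepping the operator-squaring obstruction you ran into.
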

\begin{proof}
Let $\one_{\kappa}^N\Psi=\Psi$.
As in \cite{S},
\begin{eqnarray}
\langle\Psi|N^{>}T|\Psi\rangle&=&\langle\Psi|N^{>}(H_N-E_N-\frac{1}{2}\kappa)|\Psi\rangle\label{eq1}\\
&&+
N\Big\langle \Psi|Q_1\Big(E_N+\frac12\kappa-\frac{L^d}{N}\sum_{2\leq i<j\leq N}v(\x_{i}-\x_{j})\Big)|\Psi\Big\rangle\label{eq2}\\
&&
-L^d\Big\langle \Psi|Q_1\sum_{2\leq j\leq N}v(\x_{1}-\x_{j})|\Psi\Big\rangle\label{eq3}.
\end{eqnarray}
Using  Schwarz's inequality, the first term can be bounded as
\begin{eqnarray*}
|(\ref{eq1})|&\leq&\|N^>\Psi\|\,\Big\|H_N-E_N-\frac12\kappa\Big\|\\
&\leq&\frac\kappa2\langle\Psi|(N^>)^2\Psi\rangle^{1/2}.
\end{eqnarray*}
Let us estimate the second term. Using (\ref{lemma1acalc}) we get
\begin{eqnarray*}
E_N-\frac{L^d}{N}\sum_{2\leq i<j\leq N}v(\x_{i}-\x_{j})&\leq&\frac{1}{2}(N-1)\hat v(\0)+\frac{L^d}{2N}(N-1)v(\0)-\frac{1}{2N}(N-1)^2\hat v(\0)\\
&=&\frac{1}{2}\frac{N-1}{N}\big(\hat v(\0)+L^d v(\0)\big).
\end{eqnarray*}
Hence,
\begin{eqnarray*}
(\ref{eq2})&\leq&
\Big(\frac\kappa2+\frac12\frac{N-1}{N}\big(\hat v(\0)+L^dv(\0)\big)\Big)N\langle\Psi|Q_1|\Psi\rangle\\
&\leq &
\Big(\frac\kappa2+\frac12\big(\hat v(\0)+L^dv(\0)\big)\Big)\langle\Psi|N^>|\Psi\rangle
\end{eqnarray*}
Finally, let us consider the third term:
\begin{eqnarray*}
\langle\Psi|Q_1 v(\x_1-\x_2)|\Psi\rangle&=&
\langle \Psi|Q_1Q_2 v(\x_1-\x_2)|\Psi\rangle+
\langle \Psi|Q_1P_2 v(\x_1-\x_2)Q_2|\Psi\rangle\\
&&+\langle \Psi|Q_1P_2 v(\x_1-\x_2)Q_2|\Psi\rangle
,\\[3ex]
|\langle\Psi|Q_1 Q_2v(\x_1-\x_2)|\Psi\rangle|&\leq&
v(\0)\langle \Psi|Q_1Q_2|\Psi\rangle^{1/2},\\
|\langle\Psi|Q_1 P_2v(\x_1-\x_2)Q_2|\Psi\rangle|&\leq&
v(\0)\langle \Psi|Q_1|\Psi\rangle,\\
\langle\Psi|Q_1 P_2v(\x_1-\x_2)P_2|\Psi\rangle|&=&
\hat v(\0)\langle \Psi|Q_1P_2|\Psi\rangle\geq0.
\end{eqnarray*}
Therefore, using (\ref{Q1}) and (\ref{Q1Q2})
\begin{eqnarray*}
|(\ref{eq3})|&\leq&
v(\0)L^d\Big(\sqrt{\frac{N-1}{N}}\langle\Psi|(N^>{-}1)N^>|\Psi\rangle^{1/2}
+\frac{N-1}{N}\langle\Psi|N^>|\Psi\rangle\Big)\\&\leq&
v(\0)L^d
\Big(\langle\Psi|(N^>)^2\Psi\rangle^{1/2}+
\langle\Psi|N^>|\Psi\rangle\Big).
\end{eqnarray*}
Now 
\begin{eqnarray}
\langle \psi|(N^{>})^{2}|\psi\rangle
&\leq & \frac{L^{2}}{(2\pi)^{2}}\langle \psi|N^{>}T|\psi\rangle.\label{eq4}
\end{eqnarray}
We can add the three estimates, use (\ref{eq4}) and obtain
\begin{eqnarray*}
\langle\Psi|N^>T|\Psi\rangle&
\leq&C(\kappa +L^d)\Big(\langle\Psi|(N^>)^2|\Psi\rangle^{1/2}+
\langle\Psi|N^>|\Psi\rangle\Big)\\
&\leq &CL^{2}(\kappa +L^d)^{2}\\&&+
CL(\kappa +L^d)
\langle\Psi|N^>T|\Psi\rangle^{1/2}.
\end{eqnarray*}
Setting
 $X:=\langle \psi|N^{>}T|\psi\rangle^{1/2}$ we can rewrite this as $X^{2}<c+aX$
in the obvious notation. Solving this inequality we get that
\[X^{2}\leq\frac{a^{2}}{2}+c+\sqrt{a^{2}+4c}.\]
This implies
\begin{eqnarray}\label{eq5}
\one_{\kappa}^NN^{>}T\one_{\kappa}^N&\leq &CL^{2}\left(L^d+\kappa\right)^{2}.\end{eqnarray}
If in addition we use  (\ref{eq4}), we obtain
 (\ref{eq6}).
\end{proof}

\begin{lemma} 
\beq
\sup_{0<\epsilon\leq1}\one_{\kappa}^NR_{N,-\epsilon}\one_{\kappa}^N
\geq -CN^{-1/2}L^{d/2+3}(L^d+\kappa)^{3/2}.\eeq
\label{lami}\end{lemma}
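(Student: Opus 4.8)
The plan is to bound from below each of the five terms in the decomposition \eqref{rnepsilon} of $R_{N,-\epsilon}$, restricted by the spectral projection $\one_\kappa^N$, using the a priori bounds \eqref{eq6a} and \eqref{eq6} on $\one_\kappa^N N^>\one_\kappa^N$ and $\one_\kappa^N (N^>)^2\one_\kappa^N$. Throughout I work on the physical space, where $N_0^\ext = N_0 = N - N^>$ and $b_\p = a_\p U^\dagger$, so $b_\p^\dagger b_\p = a_\p^\dagger a_\p$ and $\sum_{\p\neq\0} b_\p^\dagger b_\p = N^>$. The last three terms of \eqref{rnepsilon} are easy: the term $\frac{\hat v(\0)}{2N}N^>$ is nonnegative; the term $(1+\epsilon^{-1})\frac{1}{2N}v(\0)L^d N^>(N^>-1)$ is nonnegative (so being an \emph{upper} estimate it simply gets dropped, since we want a lower bound and the supremum over $\epsilon$ will only help); and the term $\frac{\epsilon}{N}\sum_{\p\neq\0}(\hat v(\p)+\hat v(\0))b_\p^\dagger b_\p N_0^\ext$ with $\epsilon$ replaced by $-\epsilon$ becomes $-\frac{\epsilon}{N}\sum_{\p\neq\0}(\hat v(\p)+\hat v(\0))a_\p^\dagger a_\p N_0$, which in absolute value is at most $\frac{\epsilon}{N}(2\|\hat v\|_\infty) N \cdot N^>$ after bounding $N_0\le N$; restricted by $\one_\kappa^N$ this is $\le C\epsilon\, L^2(L^d+\kappa)$, and since $\epsilon\le 1$ and $L^2(L^d+\kappa)\le CN^{-1/2}L^{d/2+3}(L^d+\kappa)^{3/2}$ is \emph{false} in general — so here I would instead choose $\epsilon$ optimally at the end rather than bounding $\epsilon\le 1$ crudely; see below.

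The fourth term $-\frac{1}{N}\sum_{\p\neq\0}(\hat v(\p)+\tfrac{\hat v(\0)}{2})b_\p^\dagger b_\p N^>$ is bounded in absolute value by $\frac{C}{N}\|\hat v\|_\infty (N^>)^2$ since $\sum_{\p\neq\0}b_\p^\dagger b_\p = N^>$ commutes with $N^>$; restricted by $\one_\kappa^N$ and using \eqref{eq6} this gives $\le \frac{C}{N}L^4(L^d+\kappa)^2 \le CN^{-1/2}\cdot N^{-1/2}L^4(L^d+\kappa)^2$, and one checks under the hypothesis \eqref{condit1}, i.e. $L^{d+2}(L^d+\kappa)\le cN$, that $N^{-1/2}L^4(L^d+\kappa)^2 \le c^{1/2}L^{4 - (d+2)/2}(L^d+\kappa)^{3/2}\cdot L^{?}$ — more carefully, $N^{-1/2}\le c^{1/2}(L^{d+2}(L^d+\kappa))^{-1/2}$, so $N^{-1/2}L^4(L^d+\kappa)^2 \le c^{1/2}L^{4-(d+2)/2}(L^d+\kappa)^{3/2} = c^{1/2}L^{3-d/2}(L^d+\kappa)^{3/2} \le c^{1/2}L^{d/2+3}(L^d+\kappa)^{3/2}$ for $d\ge 0$. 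Hence this term contributes $-CN^{-1/2}L^{d/2+3}(L^d+\kappa)^{3/2}$, as desired. The first term, involving $\big(\tfrac{\sqrt{(N_0^\ext-1)N_0^\ext}}{N}-1\big)b_\p b_{-\p}+\hc$, is the most delicate. On the physical space $\frac{\sqrt{(N_0-1)N_0}}{N} - 1$ has operator norm $\le \frac{C N^>}{N}$ on the range of $\one_\kappa^N$ (since $N_0 = N-N^>$ and a Taylor/Lipschitz estimate of $t\mapsto \sqrt{(t-1)t}$ near $t=N$ gives a bound linear in $N^>$, with $N^>\le N$ controlled by \eqref{eq6a}); the operator $\sum_{\p\neq\0}\hat v(\p)b_\p b_{-\p}$ is, in the diagonalization of Section 6, of the form "$\sqrt{A^2-B^2}$-type"–bounded relative to $H_\Bog+\sum e_\p a_\p^\dagger a_\p$, hence relative to $N^> + (\text{const})$, so schematically $\big|\tfrac{1}{2}\sum\hat v(\p)(\cdots)b_\p b_{-\p}+\hc\big| \le \frac{C}{N}(N^>)^{1/2}(N^>+C)(N^>)^{1/2}$-ish $\le \frac{C}{N}(N^>+C)^2$; restricted by $\one_\kappa^N$ and using \eqref{eq6} and $L\ge1$ this again gives $\le \frac{C}{N}L^4(L^d+\kappa)^2$, handled exactly as above.

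The main obstacle is making the heuristic bound on the first term of \eqref{rnepsilon} rigorous, i.e. controlling $\sum_{\p\neq\0}\hat v(\p)\big(\tfrac{\sqrt{(N_0^\ext-1)N_0^\ext}}{N}-1\big)b_\p b_{-\p}+\hc$ by a multiple of $\frac1N$ times a power of $N^>$, because the factor $\tfrac{\sqrt{(N_0-1)N_0}}{N}-1$ does \emph{not} commute with $b_\p b_{-\p}$ (the $b$'s shift $N_0$), so one must commute it through carefully, picking up the correction, and then bound $\sum_{\p\neq\0}\hat v(\p)b_\p b_{-\p}+\hc$ by $\sum_{\p\neq\0}(a_\p^\dagger a_\p + a_{-\p}^\dagger a_{-\p}) + C$ using $\|\hat v\|_\infty<\infty$ and the elementary inequality $\pm(b_\p b_{-\p}+\hc)\le b_\p^\dagger b_\p + b_{-\p}^\dagger b_{-\p}+1$. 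Once that is in place, everything is assembled: collecting the five contributions, all error terms are $\le CN^{-1/2}L^{d/2+3}(L^d+\kappa)^{3/2}$ except possibly the $\epsilon$-term, which is $-C\epsilon L^2(L^d+\kappa)$; choosing $\epsilon = N^{-1/2}L^{d/2+1}(L^d+\kappa)^{1/2}$ (which is $\le c^{1/2}\le 1$ for $c$ small, by \eqref{condit1}) makes that term $-CN^{-1/2}L^{d/2+3}(L^d+\kappa)^{3/2}$ as well, and taking the supremum over $0<\epsilon\le1$ yields the claimed bound.
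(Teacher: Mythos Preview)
Your overall strategy matches the paper's, but there are two concrete problems.

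First, the last term of \eqref{rnepsilon}. You quote it with coefficient $(1+\epsilon^{-1})$ and drop it as nonnegative. But the lemma is about $R_{N,-\epsilon}$: substituting $\epsilon\mapsto-\epsilon$ turns this coefficient into $1-\epsilon^{-1}\leq 0$ for $0<\epsilon\leq 1$, so the term is nonpositive and \emph{cannot} be dropped from a lower bound. In the paper it is kept and estimated by $-\epsilon^{-1}\tfrac{1}{2N}v(\0)L^d(N^>)^2$; after \eqref{eq6} this becomes $-\epsilon^{-1}CN^{-1}L^{d+4}(L^d+\kappa)^2$, and it is precisely this contribution that your $-C\epsilon L^2(L^d+\kappa)$ balances against---which is why $\epsilon=N^{-1/2}L^{d/2+1}(L^d+\kappa)^{1/2}$ is the right choice. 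If the term really could be dropped you would simply send $\epsilon\to0$ and the $\epsilon$-contribution would vanish; your specific optimum would then be arbitrary. So your final assembly is correct in outcome but your bookkeeping hides the actual mechanism.

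Second, the first term of \eqref{rnepsilon}. Your ``commute through and then bound $\sum\hat v(\p)(b_\p b_{-\p}+\hc)$'' plan does not close: after commuting you still face a product of two noncommuting operators of indefinite sign, and the ``$+C$'' you write is really $\sum_{\p\neq\0}\hat v(\p)=L^dv(\0)$, which is $L$-dependent. The paper avoids the commutation issue entirely via the Schwarz-type inequality $A^\dagger B+B^\dagger A\geq -A^\dagger A-B^\dagger B$ from Section~2, applied with the self-adjoint $A=\sqrt{(N_0-1)N_0}-N$ and $B=\sum_{\p\neq\0}\hat v(\p)b_\p b_{-\p}$, so that the whole first term is
\[
\tfrac{1}{2N}\big(AB+B^\dagger A\big)\ \geq\ -\tfrac{1}{2N}\big(A^2+B^\dagger B\big)\ \geq\ -\tfrac{1}{2N}\big((2N^>+1)^2+C(N^>)^2\big).
\]
No commutation is needed. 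The resulting $-C_1N^{-1}\big((N^>)^2+1\big)$ is then absorbed (using $L\geq1$ and $\epsilon^{-1}\geq1$) into the $\epsilon^{-1}$ term discussed above, after which one applies \eqref{eq6a}, \eqref{eq6} and optimizes in $\epsilon$ exactly as you describe.
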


\begin{proof}
\begin{eqnarray}\one_{\kappa}^NR_{N,-\epsilon}\one_{\kappa}^N
&\geq&\one_{\kappa}^N\frac{1}{2}\sum_{\p\neq\0}\hat v(\p)\Big(\Big(\frac{\sqrt{(N_0-1)N_0}}{N}
-1\Big)b_\p b_{-\p}+\hc
\Big)\one_{\kappa}^N \notag\\
&&-\one_{\kappa}^N\frac{1}{N}\sum_{\p\neq\0}\big(\hat v(\p)+\frac{\hat v(\0)}{2}\big)b_\p^\dagger b_\p N^>\one_{\kappa}^N \notag\\
&&-\epsilon\one_{\kappa}^N\frac{1}{N}\sum_{\p\neq\0}\big(\hat v(\p)+\hat v(\0)\big)b_\p^\dagger b_\p N_0\one_{\kappa}^N\notag\\
&&-\epsilon^{-1}\one_{\kappa}^N\frac{1}{2N} v(\0)L^d (N^>)^2 \one_{\kappa}^N.
\label{pada} 
\end{eqnarray}
Note that the range of  $\one_{\kappa}^N$ is inside the physical space, so whenever possible we replaced $N_0^\ext$ by $N_0$.
It is easy to estimate from below
 various terms on the right of (\ref{pada}) by expressions involving $N^>$. 
The first term requires more work than the others. We have
\begin{eqnarray*}
N-\sqrt{(N_0-1)N_0}&=&\frac{2NN^>-(N^>)^2+N-N^>}
{N+\sqrt{(N-N^>-1)(N-N^>)}}\\
&\leq&2N^>+1.
\end{eqnarray*}
Then we use
\begin{eqnarray*}
\big(\sqrt{(N_0-1)N_0}-N\big)\sum_{\p\neq\0}\hat v(\p)b_\p b_{-\p}+\hc&\geq
&-\Big(\sum_{\p\neq\0}\hat v(\p)b_\p b_{-\p}\Big)^\dagger
\sum_{\p\neq\0}\hat v(\p)b_\p b_{-\p}\\
&&-\big(\sqrt{(N_0-1)N_0}-N\big)^2\\
&\geq&-C(N^>)^2-(2N^>+1)^2\\& \geq&
 -C_1\big((N^>)^2+1\big).
\end{eqnarray*}

To bound the third term 
we use   $N_0\leq N$. We obtain 
\begin{eqnarray*}
\one_{\kappa}^NR_{N,-\epsilon}\one_{\kappa}^N&\geq&-C\one_{\kappa}^N\frac{(N^>)^2+1}{N}\one_{\kappa}^N\\
&&-C\one_{\kappa}^N\frac{(N^>)^2}{N}\one_{\kappa}^N\\
&&-\epsilon C\one_{\kappa}^N N^>\one_{\kappa}^N\\
&&-\epsilon^{-1}C\one_{\kappa}^NL^d\frac{(N^>)^2}{N}\one_{\kappa}^N.
\end{eqnarray*}
Using that $0\leq \epsilon\leq1$ and $L\geq1$, we can partly absorb the first two terms in the fourth:
\begin{eqnarray*}
&\geq&-\frac{C}{N}\one_{\kappa}^N-\epsilon C\one_{\kappa}^N N^>\one_{\kappa}^N-\epsilon^{-1}C\one_{\kappa}^NL^d\frac{(N^>)^2}{N}\one_{\kappa}^N.
\end{eqnarray*}
By (\ref{pafa}) and (\ref{eq6}), this can be estimated by
 \begin{eqnarray}
&\geq&-CN^{-1} -\epsilon CL^{2}(L^d+\kappa)-
\epsilon^{-1}CN^{-1}L^{d+4}(L^d+\kappa)^2.
\label{errorlowerbound}
\end{eqnarray}
 Setting $\epsilon= c^{-1/2}L^{d/2+1}(L^d+\kappa)^{1/2}N^{-1/2}$ in (\ref{errorlowerbound}), which by  Condition (\ref{condit1}) is less than $1$,  
we bound it by
\[\geq\ -CN^{-1}-CN^{-1/2}L^{d/2+3}(L^d+\kappa)^{3/2}.\]
Using $L\geq1$, we can absorb the first term in the second. \end{proof}

\begin{proof}[Proof of Thm \ref{lower}]
 Recall inequality (\ref{hamiltonianboundlower}), which
implies for $0<\epsilon\leq1$
\beq
\one_{\kappa}^N H_N\one_{\kappa}^N\geq
\one_{\kappa}^N
\left(\frac{1}{2}\hat v(\0)(N-1)+H_{\Bog,N}+R_{N,-\epsilon}\right)\one_{\kappa}^N.
\eeq
Thus it suffices to apply Lemma \ref{lami} and  the min-max principle.
\end{proof}

\begin{proof}[Proof of Thm \ref{thm} (1)]
First set $\kappa=0$. Then Condition (\ref{condit1}) becomes Condition (\ref{war1a}) and we obtain Thm \ref{thm} (1a).

Next set $\kappa=K_N^j(\p)$. Then Condition (\ref{condit1}) is equivalent to the conjunction of Conditions (\ref{war1a}) and (\ref{war1b}). We obtain Thm \ref{thm} (1b).
\end{proof}

\section{Upper bound}

In this section we prove the following theorem, which implies the upper bound of Thm \ref{thm}:

\begin{thm}
Let $c>0$. Then there exist $c_1>0$ and $C$ such that if $\kappa\geq0$ and \begin{eqnarray}
L^{d+2}(\kappa+L^{d-1})&\leq &cN,\label{condit2}\\
L^{2}(\kappa+L^{d-1})&\leq &c_1N\label{condit2a}
\end{eqnarray}
then 
\begin{eqnarray*}
\vecsp\big(H_N\big)&\leq&
\frac{1}{2}\hat{v}(\0)(N-1)+\vecsp\Big(\one_{[0,\kappa]}(H_\Bog-E_\Bog)H_\Bog
\Big)
\\&&+CN^{-1/2}L^{d/2+3}(\kappa+L^{d-1})^{3/2}.
\end{eqnarray*}
\label{upper}\end{thm}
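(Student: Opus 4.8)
The plan is to use the Rayleigh--Ritz principle with a carefully chosen family of trial states, in the spirit of the lower bound proof but now with the roles of $H_N$ and $H_\Bog$ interchanged. The natural candidate subspace is the span of the low-lying eigenstates of $H_\Bog$, or rather their images under the inverse of the map that turned $H_N$ into $H_{\Bog,N}$ plus remainder. Concretely, recall from \eqref{hamextepsilon} that on $\cH_N^\ext$ we have $H_{N,\epsilon}^\ext=\frac12\hat v(\0)(N-1)+H_{\Bog,N}+R_{N,\epsilon}$, and that $H_{N,\epsilon}^\ext$ preserves $\cH_N$ and agrees there with $H_{N,\epsilon}\geq H_N$. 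So the first step is: take $\cK$ to be the spectral subspace of $H_{\Bog,N}$ corresponding to eigenvalues $\leq\kappa$ (a subspace of $\cH_N^\ext$), and estimate $P_\cK H_{N,\epsilon}^\ext P_\cK$ from above. The min-max principle then gives $\vecsp(H_N)\leq\vecsp(H_{N,\epsilon})=\vecsp(H_{N,\epsilon}^\ext|_{\cH_N})\leq\vecsp(P_\cK H_{N,\epsilon}^\ext P_\cK|_\cK)$, and on $\cK$ the term $H_{\Bog,N}$ contributes exactly $\vecsp(\one_{[0,\kappa]}(H_{\Bog,N})H_{\Bog,N})=\vecsp(\one_{[0,\kappa]}(H_\Bog-E_\Bog)H_\Bog)$ since $H_{\Bog,N}$ is unitarily equivalent to $H_\Bog-E_\Bog$ (note $H_\Bog-E_\Bog\geq0$, so $\one_{[0,\kappa]}(H_{\Bog,N})H_{\Bog,N}=\one_{[0,\kappa]}(H_\Bog-E_\Bog)H_\Bog$ up to that equivalence). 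Hence everything reduces to controlling $P_\cK R_{N,\epsilon} P_\cK$ from above, and then optimizing over $\epsilon>0$.

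The second step is therefore a priori bounds on $N^>$ and $(N^>)^2$ on the subspace $\cK$, analogous to \eqref{pafa}, \eqref{eq6} but now for states in the range of $\one_{[0,\kappa]}(H_{\Bog,N})$ rather than $\one_\kappa^N$. For the Bogoliubov Hamiltonian this should be cleaner: using \eqref{sas} and the explicit diagonalization \eqref{hbogoliubov}, one has $H_{\Bog,N}=E_\Bog-$const$+S(\sum e_\p a_\p^\dagger a_\p)S^\dagger$ after the unitary identification, and since $e_\p\geq c|\p|\geq c'L^{-1}$ on the lattice $\tfrac{2\pi}{L}\zz^d\setminus\{\0\}$ for low $\p$ and $e_\p\gtrsim|\p|^2$ for large $\p$, one gets $N^>=\sum b_\p^\dagger b_\p\leq C L^2(H_{\Bog,N}+L^{d-1})$ or similar — the $L^{d-1}$ rather than $L^d$ here is exactly what produces the improved error term $(\kappa+L^{d-1})^{3/2}$ in the statement. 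One must be a little careful: $S(\cdot)S^\dagger$ mixes $a_\p$ and $a_{-\p}^\dagger$, so $N^>$ is not simply bounded by the Bogoliubov number operator; one estimates $a_\p^\dagger a_\p\leq 2(c_\p a_\p^\dagger+s_\p a_{-\p})(c_\p a_\p+s_\p a_{-\p}^\dagger)+2s_\p^2$ and sums, where $\sum_\p s_\p^2\sim\int s_\p^2\,d\p\cdot L^d$ contributes the volume factor — this is the delicate bookkeeping that distinguishes the sharp result from the naive one flagged in the introduction.

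The third step is the actual upper estimate of $P_\cK R_{N,\epsilon}P_\cK$. Looking at \eqref{rnepsilon}, the four groups of terms are: the $\bigl(\tfrac{\sqrt{(N_0^\ext-1)N_0^\ext}}{N}-1\bigr)b_\p b_{-\p}+\hc$ term, the $-\tfrac1N(\hat v(\p)+\tfrac{\hat v(\0)}2)b_\p^\dagger b_\p N^>$ term plus $\tfrac{\hat v(\0)}{2N}N^>$, the $\epsilon$-term $\tfrac\epsilon N(\hat v(\p)+\hat v(\0))b_\p^\dagger b_\p N_0^\ext$, and the $(1+\epsilon^{-1})\tfrac{1}{2N}v(\0)L^d N^>(N^>-1)$ term. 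On $\cK$ one has $N_0^\ext$ close to $N$ (the states are genuinely $N$-particle physical states with few excitations), so $N-\sqrt{(N_0^\ext-1)N_0^\ext}\leq 2N^>+1$ exactly as in Lemma \ref{lami}, and the first term is bounded by $C\bigl((N^>)^2+1\bigr)/N$ via Cauchy--Schwarz as before; the second group by $C(N^>)^2/N$; the $\epsilon$-term by $\epsilon C N^>$ (using $N_0^\ext\leq N$ on $\cK$); the last by $\epsilon^{-1}CL^d(N^>)^2/N$. Substituting the step-two bounds $N^>\lesssim L^2(\kappa+L^{d-1})$ and $(N^>)^2\lesssim L^4(\kappa+L^{d-1})^2$, one gets $P_\cK R_{N,\epsilon}P_\cK\leq CN^{-1}+\epsilon CL^2(\kappa+L^{d-1})+\epsilon^{-1}CN^{-1}L^{d+4}(\kappa+L^{d-1})^2$, and optimizing via $\epsilon=c^{-1/2}L^{d/2+1}(\kappa+L^{d-1})^{1/2}N^{-1/2}$ — which is admissible precisely under Condition \eqref{condit2a} — produces the claimed error $CN^{-1/2}L^{d/2+3}(\kappa+L^{d-1})^{3/2}$, with the leftover $N^{-1}$ absorbed using $L\geq1$. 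Finally Theorem \ref{thm}(2) follows by taking $\kappa=0$ (giving (2a), with \eqref{condit2} becoming \eqref{war2a} and \eqref{condit2a} becoming \eqref{war2a+}) and then $\kappa=K_\Bog^j(\p)$ (giving (2b)).

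The main obstacle I anticipate is step two: establishing the a priori bound $\one_{[0,\kappa]}(H_{\Bog,N})(N^>)^2\one_{[0,\kappa]}(H_{\Bog,N})\leq CL^4(\kappa+L^{d-1})^2$ with the correct power of $L$ and the correct additive $L^{d-1}$ (not $L^d$). This requires commuting $N^>$ through the Bogoliubov transformation $S$ and carefully summing the $\sum_\p s_\p^2$ and $\sum_\p s_\p^4$ type quantities over the momentum lattice, using $s_\p^2\sim\hat v(\p)^2/(4|\p|^4)$ for small $|\p|$ and decaying for large $|\p|$ — convergence of these lattice sums, and their growth rate in $L$, is where the hypotheses $v,\hat v\in L^1$ and the dimension enter, and getting the exponent right is what makes the theorem sharp rather than merely true.
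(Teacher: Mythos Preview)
Your outline has a genuine gap at the very first step: the spectral subspace $\cK=\Ran\one_{[0,\kappa]}(H_{\Bog,N})$ is \emph{not} contained in the physical space $\cH_N$. The eigenstates of $H_{\Bog,N}$ are obtained by applying the Bogoliubov rotation $S$ to Fock vectors, and $S$ (being a squeezing transformation) produces states supported on arbitrarily large $N^>$, hence on arbitrarily negative $N_0^\ext$. So the chain of inequalities $\vecsp(H_{N,\epsilon}^\ext|_{\cH_N})\leq\vecsp(P_\cK H_{N,\epsilon}^\ext P_\cK|_\cK)$ is not a valid application of Rayleigh--Ritz: you cannot test the operator restricted to $\cH_N$ on a subspace that leaks outside $\cH_N$. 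Your parenthetical ``the states are genuinely $N$-particle physical states with few excitations'' is exactly the unjustified assertion.

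The paper repairs this by inserting a smooth cutoff $A_N=G(N^>/N)$ that forces the trial states back into the physical space, then builds a partial isometry $Z_\kappa=(\one_\kappa^\Bog A_N^2\one_\kappa^\Bog)^{-1/2}A_N$ whose initial space lies in $\cH_N$ and whose final space is $\Ran\one_\kappa^\Bog$. Rayleigh--Ritz is then applied legitimately via $Z_\kappa$. This introduces two new error sources absent from your sketch: first, one must show $\one_\kappa^\Bog A_N^2\one_\kappa^\Bog$ is invertible on $\Ran\one_\kappa^\Bog$ (this is where Condition~\eqref{condit2a} with a small enough $c_1$ actually enters, not in the $\epsilon$-optimization as you wrote---that uses~\eqref{condit2}); second, one must compare $Z_\kappa H_\Bog Z_\kappa^\dagger$ with $\one_\kappa^\Bog H_\Bog\one_\kappa^\Bog$, which costs a double-commutator term $[A_N,[A_N,H_\Bog]]$ that needs its own estimate. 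Your Step~2 bounds on $N^>$ and $(N^>)^2$ are correct in spirit and are indeed proved in the paper (the $(N^>)^2$ bound requires a somewhat delicate algebraic lemma, not just the naive squaring you describe), and your Step~3 treatment of $R_{N,\epsilon}$ is essentially the paper's Lemma~8.9; but neither step is reached until the physical-space issue is handled.
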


For brevity, we set
\[\one_\kappa^\Bog:=\one_{[0,\kappa]}(H_{\Bog,N}-E_\Bog).\]
 From now on, to simplify the notation we will also write $H_\Bog$ instead of $H_{\Bog,N}$, even though this is an abuse of notation. ($H_{\Bog,N}$ is  unitarily equivalent, but strictly speaking distinct from (\ref{stan})).

We also set
\[d_\p:=Sb_\p S^\dagger\]
 where $S$ is defined as in \eqref{X}  with operators $a$'s replaced by $b$'s.
Clearly,
\begin{eqnarray*}
d_\p&=\ c_\p b_{\p}+s_\p b_{-\p}^\dagger,\ \ \ d_\p^\dagger&=\ c_\p b_\p^\dagger+s_{\p}b_{-\p}.
\end{eqnarray*}

\begin{lemma}
There exist $C_1,C_2$ such that
\beq H_\Bog-E_\Bog\geq C_1L^{-2}N^>-C_2L^{d-1}.\label{qe1}\eeq
Consequently,
\beq
\one_\kappa^\Bog  N^>\one_\kappa^\Bog\leq CL^{2}(L^{d-1}+\kappa).
\label{qe1a}\eeq
\label{lemi1}\end{lemma}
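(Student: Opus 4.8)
The plan is to prove the operator inequality \eqref{qe1} first; \eqref{qe1a} then follows at once, since $\one_\kappa^\Bog$ is the spectral projection of $H_\Bog-E_\Bog$ onto $[0,\kappa]$, so sandwiching \eqref{qe1} with $\one_\kappa^\Bog$ gives
\[
\one_\kappa^\Bog N^>\one_\kappa^\Bog\ \le\ C_1^{-1}L^2\,\one_\kappa^\Bog\big(H_\Bog-E_\Bog+C_2L^{d-1}\big)\one_\kappa^\Bog\ \le\ C_1^{-1}L^2\big(\kappa+C_2L^{d-1}\big).
\]

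For \eqref{qe1} itself I would avoid diagonalizing and work directly with $H_\Bog=\sum_{\p\neq\0}\big(|\p|^2+\hat v(\p)\big)a_\p^\dagger a_\p+\tfrac12\sum_{\p\neq\0}\hat v(\p)\big(a_\p a_{-\p}+a_\p^\dagger a_{-\p}^\dagger\big)$. Since $\hat v(\p)\ge0$, the elementary bound $A^\dagger B+B^\dagger A\ge-A^\dagger A-B^\dagger B$ applied with $A=a_\p$ and $B=a_{-\p}^\dagger$ gives, term by term in $\p$,
\[
\tfrac12\hat v(\p)\big(a_\p a_{-\p}+a_\p^\dagger a_{-\p}^\dagger\big)\ \ge\ -\tfrac12\hat v(\p)\big(a_\p^\dagger a_\p+a_{-\p}^\dagger a_{-\p}+1\big).
\]
Summing over $\p\neq\0$, the diagonal contributions $\hat v(\p)a_\p^\dagger a_\p$ cancel, leaving $H_\Bog\ge\sum_{\p\neq\0}|\p|^2a_\p^\dagger a_\p-\tfrac12\sum_{\p\neq\0}\hat v(\p)$. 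Moreover every summand of $E_\Bog$ equals $\tfrac12\big(A_\p-\sqrt{A_\p^2-B_\p^2}\big)\ge0$, because $A_\p=|\p|^2+\hat v(\p)\ge\hat v(\p)=B_\p\ge0$, so $E_\Bog\le0$ and hence $H_\Bog-E_\Bog\ge H_\Bog$. Finally $|\p|^2\ge(2\pi/L)^2$ for $\p\in\tfrac{2\pi}{L}\zz^d\setminus\{\0\}$, so $\sum_{\p\neq\0}|\p|^2a_\p^\dagger a_\p\ge(2\pi/L)^2N^>$, and \eqref{qe1} follows with $C_1=(2\pi)^2$ as soon as one knows $\tfrac12\sum_{\p\neq\0}\hat v(\p)\le C_2L^{d-1}$ for a constant $C_2$ depending only on $v$ and $d$.

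This last estimate on the additive constant is, I expect, the only genuinely quantitative step. Crudely, $\sum_{\p\neq\0}\hat v(\p)=L^dv^L(\0)-\hat v(\0)\le L^dv^L(\0)$, and $v^L(\0)=\sum_{\mathbf{n}\in\zz^d}v(L\mathbf{n})$ stays bounded for $L\ge1$ (as $v\ge0$, $v\in L^1$, and $v\to0$ at infinity since $\hat v\in L^1$); this already yields \eqref{qe1}--\eqref{qe1a} with $L^d$ in place of $L^{d-1}$, which would suffice for a slightly weaker form of the upper bound (with correspondingly larger powers of $L$ in the hypotheses and the error). To reach the sharper $L^{d-1}$ one should not discard the pairing term uniformly: starting instead from the manifestly non-negative form $H_\Bog-E_\Bog=S\big(\sum_{\p\neq\0}e_\p a_\p^\dagger a_\p\big)S^\dagger$ of \eqref{hbogoliubov}, expanding $d_\p^\dagger d_\p$ with $d_\p:=Sa_\p S^\dagger=c_\p a_\p+s_\p a_{-\p}^\dagger$ in the $a_\p$, and applying the same pairing estimate produces the smaller error $\tfrac12\sum_{\p\neq\0}(e_\p-|\p|^2)$ — here one uses the identities $e_\p(c_\p-s_\p)^2=|\p|^2$ and $e_\p s_\p(c_\p-s_\p)=\tfrac12(e_\p-|\p|^2)\le\tfrac12\hat v(\p)$ from \eqref{(cp-sp)2}--\eqref{s(c-s)}. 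One then splits the momentum sum according to whether the Bogoliubov dispersion is in its linear ($|\p|$ small) or quadratic regime and compares the discrete sum, uniformly in $L$, with the convergent integral $\int_{\R^d}\big(|\p|\sqrt{|\p|^2+2\hat v(\p)}-|\p|^2\big)\,\d\p$; this $L$-uniform Riemann-sum comparison is where the real work lies.
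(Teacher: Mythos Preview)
Your derivation of \eqref{qe1a} from \eqref{qe1} is fine, and your first, ``crude'' argument is correct and yields a constant of order $L^d$, as you note. The gap is in your refined argument: the error you arrive at, $\tfrac12\sum_{\p\neq\0}(e_\p-|\p|^2)$, is \emph{still} of order $L^d$, not $L^{d-1}$. Indeed $e_\p-|\p|^2\sim|\p|\sqrt{2\hat v(\0)}$ for small $|\p|$ and $\sim\hat v(\p)$ for large $|\p|$, so the summand is positive and its integral over $\R^d$ is a finite nonzero constant; the lattice sum over $\tfrac{2\pi}{L}\zz^d$ is therefore asymptotic to $(L/2\pi)^d\int(e_\p-|\p|^2)\,\d\p$, hence of order $L^d$. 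Splitting into linear and quadratic regimes does not change this scaling --- your ``Riemann-sum comparison'' gives the right limit, but that limit is of size $L^d$, exactly as in the crude bound.

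The paper gains the missing factor $L^{-1}$ by bounding $e_\p$ \emph{before} expanding rather than after. One first uses $e_\p\ge \pi\sqrt{8\hat v(\0)}\,L^{-1}$ for all $\p\neq\0$ to get $H_\Bog-E_\Bog\ge CL^{-1}\,SN^>S^\dagger$, and only then expands $SN^>S^\dagger=\sum d_\p^\dagger d_\p$ and applies the pairing estimate, obtaining $SN^>S^\dagger\ge\sum(c_\p-s_\p)^2 b_\p^\dagger b_\p-\sum s_\p(c_\p-s_\p)$. Now the additive error is $\sum_{\p\neq\0}s_\p(c_\p-s_\p)$, whose summand is \emph{bounded} (by \eqref{s(c-s)} it tends to $1/2$ as $\p\to\0$) rather than growing like $|\p|$; this sum is $O(L^d)$, and the prefactor $CL^{-1}$ brings it down to $O(L^{d-1})$. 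Combining with $(c_\p-s_\p)^2\ge C'L^{-1}$ from \eqref{(cp-sp)2} finishes \eqref{qe1}. In short: discarding the $\p$-dependence of $e_\p$ first trades your error $\sum e_\p\,s_\p(c_\p-s_\p)=\tfrac12\sum(e_\p-|\p|^2)=O(L^d)$ for the smaller $CL^{-1}\sum s_\p(c_\p-s_\p)=O(L^{d-1})$ --- somewhat counterintuitively, throwing information away early is what makes the constant sharper.
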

\proof
Using (\ref{hbogoliubov})
 we have that 
\begin{eqnarray*}H_{\Bog}-E_\Bog&=&\sum_{\p\neq 0}e_{\p}Sb_{\p}^{\dagger}b_{\p}S^{\dagger}\\
&\geq&\sum_{\p\neq 0} \frac{\pi\sqrt{8\hat v(\0)}}{L}Sb_{\p}^{\dagger}b_{\p}S^\dagger=
\frac{\pi\sqrt{8\hat v(\0)}}{L}SN^{>}S^{\dagger}.\end{eqnarray*}
Now
\begin{eqnarray*}SN^{>}S^{\dagger}\ =\ \sum_{\pm\p\neq\0} \big(d_\p^\dagger d_\p+d_{-\p}^\dagger d_{-\p}\big)&=&
\sum_{\pm\p\neq\0}\Big( (c_\p^2+s_\p^2)\big(b_{\p}^\dagger b_{\p}+b_{-\p}^\dagger b_{-\p}\big)\notag\\
&&+ 2c_\p s_\p\big(b_{\p}^\dagger b_{-\p}^\dagger+b_{\p} b_{-\p}\big)+ 2s_\p^2\Big).
\end{eqnarray*}
(When we write $\pm \p$ under the summation symbol, we sum over all pairs $\{\p,-\p\}$). Using
\[b_{\p}^\dagger b_{-\p}^\dagger+b_{\p} b_{-\p}\geq
-\big(b_{\p}^\dagger b_{\p}+b_{-\p}^\dagger b_{-\p}+1\big)
\]we obtain
\begin{eqnarray} &&\sum_{\pm\p\neq\0}\big(d_\p^\dagger d_\p+d_{-\p}^\dagger d_{-\p}\big)\notag\\&\geq&\sum_{\pm\p\neq\0}\Big(
(c_\p-s_\p)^2\big(b_\p^\dagger b_\p+b_{-\p}^\dagger b_{-\p}\big)-2s_\p(c_\p-s_\p)\Big).\label{sns}
\end{eqnarray}
By \eqref{(cp-sp)2} we know that $\inf\limits_{\p\neq 0}(c_{\p}-s_{\p})^2\geq\frac{\sqrt{2}\pi}{\sqrt{\hat{v}(0)}L}$. Also, \eqref{s(c-s)} yields
\[
\frac{1}{L^d}\sum_{\pm\p\neq\0}s_{\p}(c_{\p}-s_{\p})<\infty,\]
uniformly in $L$.  Thus
\begin{eqnarray*}H_{\Bog}-E_\Bog &\geq& \frac{C}{L}SN^{>}S^{\dagger}\\&\geq& \frac{C_1}{L^2}\sum_{\pm\p\neq\0}\big(b_\p^\dagger b_\p+b_{-\p}^\dagger b_{-\p}\big)-\frac{C_2 L^{d-1}}{L^d}\sum_{\pm\p\neq\0}2s_\p(c_\p-s_\p)\\&
=&C_1L^{-2}N^>-C_2L^{d-1}.
\end{eqnarray*}
  
This proves (\ref{qe1}), which can be rewritten as
\beq N^>\leq C_1^{-1}L^{2}(H_\Bog-E_\Bog+C_2L^{d-1}),\label{qe1+}\eeq
which implies
(\ref{qe1a}). \qed

\begin{lemma}
Set
\begin{eqnarray*}
M&:=&\sum_{\p\neq\0}(c_\p-s_\p)^2b_\p^\dagger b_\p,\\
A_1&:=&\sum_{\p\neq\0}2s_\p(c_\p-s_\p),\\
A_2&:=&\sum_{\p\neq0}4(c_\p-s_\p)^2s^2_\p.\end{eqnarray*}
Then
\beq
(SN^>S^\dagger+A_1)^2\geq M^2-A_2.\eeq
\label{lemik}
\end{lemma}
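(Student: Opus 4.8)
The plan is to expand the square $(SN^>S^\dagger+A_1)^2$ directly using the known formula for $SN^>S^\dagger$ derived just above Lemma~\ref{lemi1}, namely
\[SN^>S^\dagger=\sum_{\pm\p\neq\0}\Big((c_\p^2+s_\p^2)\big(b_\p^\dagger b_\p+b_{-\p}^\dagger b_{-\p}\big)+2c_\p s_\p\big(b_\p^\dagger b_{-\p}^\dagger+b_\p b_{-\p}\big)+2s_\p^2\Big).\]
Since $S$ is unitary and $S N^> S^\dagger=\sum_\p d_\p^\dagger d_\p$ with $d_\p=c_\p b_\p+s_\p b_{-\p}^\dagger$, the cleanest route is actually to work in the $d$-variables: write $SN^>S^\dagger=\sum_\p d_\p^\dagger d_\p$ and note that $M=\sum_\p(c_\p-s_\p)^2 b_\p^\dagger b_\p$ is the image of a diagonal operator conjugated back, i.e. $S^\dagger M S=\sum_\p(c_\p-s_\p)^2 d_\p^\dagger d_\p$? — no, one must be careful: $M$ is written in $b$-variables. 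The better bookkeeping is to conjugate the whole desired inequality by $S^\dagger$: since $S$ is unitary, $(SN^>S^\dagger+A_1)^2\geq M^2-A_2$ is equivalent to $(N^>+A_1)^2\geq (S^\dagger M S)^2-A_2$, and $S^\dagger b_\p S=c_\p b_\p - s_\p b_{-\p}^\dagger$, so $S^\dagger M S=\sum_\p(c_\p-s_\p)^2(c_\p b_\p-s_\p b_{-\p}^\dagger)^\dagger(c_\p b_\p - s_\p b_{-\p}^\dagger)$. This still looks messy, so I would instead keep everything on the original side and just compute.

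First I would record the commutation structure. The operators $d_\p$ satisfy the CCR (stated in the excerpt for $b_\p$, and $S$ unitary preserves them), so $[d_\p,d_\q^\dagger]=\delta_{\p\q}$ and $\sum_\p d_\p^\dagger d_\p$ acts as a number operator in the $d$-modes. Write $\mathcal N:=SN^>S^\dagger=\sum_\p d_\p^\dagger d_\p$. The key observation is that $M$ can be rewritten in terms of the $d$'s: since $b_\p=c_\p d_\p - s_\p d_{-\p}^\dagger$ (inverting \eqref{sas} with $S\to$ the $b$-version), we get $b_\p^\dagger b_\p=(c_\p d_\p^\dagger - s_\p d_{-\p})(c_\p d_\p-s_\p d_{-\p}^\dagger)$, hence $M=\sum_\p(c_\p-s_\p)^2\big(c_\p^2 d_\p^\dagger d_\p+s_\p^2 d_{-\p}d_{-\p}^\dagger - c_\p s_\p(d_\p^\dagger d_{-\p}^\dagger+d_\p d_{-\p})\big)$. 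Summing the $\p$ and $-\p$ terms in pairs, the off-diagonal pieces $d_\p^\dagger d_{-\p}^\dagger+d_\p d_{-\p}$ are controlled by $d_\p^\dagger d_\p+d_{-\p}^\dagger d_{-\p}+1$ via the standard inequality, which should collapse $M$ to something of the form $\le \mathcal N+(\text{const}\cdot L^d)$ up to lower-order, but that bound alone is too crude. Instead I expect the cleaner path: show the stronger operator identity $(\mathcal N+A_1)^2 - M^2 + A_2 = (\text{something manifestly}\ \ge 0)$, or at worst $\ge$ a sum of terms each bounded below.

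The main step, then, is a direct but careful expansion. Expand $(\mathcal N+A_1)^2=\mathcal N^2+2A_1\mathcal N+A_1^2$ (noting $A_1,A_2$ are scalars, so they commute with everything). Then expand $M^2$, grouping terms by how many $d^\dagger,d$ they contain. The claim $(\mathcal N+A_1)^2\ge M^2-A_2$ becomes $\mathcal N^2+2A_1\mathcal N+A_1^2+A_2-M^2\ge0$. One writes $M=\mathcal N - R$ for an explicit remainder $R$ (the part of $M$ not equal to the $d$-number operator), so $M^2=\mathcal N^2-\mathcal N R-R\mathcal N+R^2$ and the inequality reduces to $2A_1\mathcal N+A_1^2+A_2+\mathcal N R+R\mathcal N-R^2\ge0$. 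Here $A_1=\sum_\p 2s_\p(c_\p-s_\p)\ge0$ and $A_2=\sum_\p4(c_\p-s_\p)^2s_\p^2\ge0$ are nonnegative (all $c_\p\ge s_\p\ge0$), $\mathcal N\ge0$, so the terms $2A_1\mathcal N+A_1^2+A_2$ are harmless; the real work is bounding $\mathcal N R+R\mathcal N-R^2$ from below and matching it against the remaining budget. I would identify $R$ explicitly — from the computation above $R$ collects the $d_\p^\dagger d_{-\p}^\dagger$, $d_\p d_{-\p}$, and $(1-(c_\p-s_\p)^2)d_\p^\dagger d_\p$-type pieces — and then use $\pm(d_\p^\dagger d_{-\p}^\dagger+d_\p d_{-\p})\le d_\p^\dagger d_\p+d_{-\p}^\dagger d_{-\p}+1$ together with Cauchy–Schwarz on cross terms, absorbing everything into $A_1\mathcal N$ (which dominates because of the $(c_\p-s_\p)$ and $s_\p$ weights in $A_1$).

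The main obstacle I anticipate is precisely the last bookkeeping: the remainder $R$ contains quadratic-in-$d$ ``pairing'' operators whose square $R^2$ is genuinely quartic, and one must show the negative contributions from $-R^2$ and from the indefinite parts of $\mathcal N R+R\mathcal N$ are dominated by $2A_1\mathcal N+A_1^2+A_2$. This should work because each pairing term $d_\p^\dagger d_{-\p}^\dagger$ comes weighted by $c_\p s_\p(c_\p-s_\p)^2$, while the relevant diagonal pieces of $A_1$ carry $s_\p(c_\p-s_\p)$, and $\sum_\p s_\p(c_\p-s_\p)<\infty$ uniformly (this is the $L^d$-controlled sum from \eqref{s(c-s)} used already in Lemma~\ref{lemi1}); an application of the elementary inequality $2XY\le X^\dagger X+Y^\dagger Y$ from Section~2 with a suitable splitting should close it. An alternative, possibly shorter, route that I would try first: write $SN^>S^\dagger+A_1=\sum_{\pm\p\neq\0}(c_\p b_\p^\dagger+s_\p b_{-\p})(c_\p b_\p+s_\p b_{-\p}^\dagger)+\sum_\p 2s_\p(c_\p-s_\p)$ and recognize, after completing squares mode-by-mode, that $SN^>S^\dagger+A_1\ge$ an operator $\widetilde M\ge0$ with $\widetilde M^2\ge M^2-A_2$ following from a single scalar completion of the square in each $\{\p,-\p\}$ block; this localizes the entire estimate to a $2\times2$ (per mode) computation, which is routine.
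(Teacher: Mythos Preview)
Your proposal is a sketch of several possible strategies, none of which is actually carried out, and the key difficulty is not resolved. The heart of the matter is that from \eqref{sns} one already knows the \emph{linear} bound $SN^>S^\dagger+A_1\ge M\ge0$, but since $\mathcal N:=SN^>S^\dagger$ and $M$ do \emph{not} commute, one cannot simply square this inequality. Your decomposition $M=\mathcal N-R$ leads to the requirement $2A_1\mathcal N+A_1^2+A_2+\mathcal N R+R\mathcal N-R^2\ge0$, where $R$ contains both a diagonal piece $\sum_\p\rho_\p\,d_\p^\dagger d_\p$ with $\rho_\p\in(0,\tfrac12)$ and pairing terms; you assert that Cauchy--Schwarz on the pairing terms closes the estimate, but you never check the cancellation between the quartic parts of $\mathcal N R+R\mathcal N$ and $R^2$, which is exactly where non-commutativity bites. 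Your alternative ``$2\times2$ per mode'' route cannot work as stated: $M^2=\sum_{\p,\q}(c_\p-s_\p)^2(c_\q-s_\q)^2 b_\p^\dagger b_\p b_\q^\dagger b_\q$ couples all modes, so the problem does not localize to single $\{\p,-\p\}$ blocks.

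The paper's proof supplies the missing idea: it \emph{sandwiches} the linear bound \eqref{sns} rather than squaring it. One first rewrites
\[
(\mathcal N+A_1)^2=\sum_{\p}\Big(d_\p^\dagger(\mathcal N+A_1)d_\p\Big)+A_1(\mathcal N+A_1)+\mathcal N,
\]
(normal-ordering the outer factor), and since $d_\p^\dagger X d_\p\ge d_\p^\dagger Y d_\p$ whenever $X\ge Y$, the inequality \eqref{sns} can be applied to the \emph{inner} $\mathcal N+A_1$, replacing it by $M$. After commuting the resulting $b_\q^\dagger b_\q$ past the outer $d_\p^\dagger,d_\p$ (which produces explicitly computable lower-order terms), one obtains an expression of the form $\sum_\q(c_\q-s_\q)^2 b_\q^\dagger(\mathcal N+A_1)b_\q+\cdots$, and \eqref{sns} is applied a second time to the new inner factor. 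Two iterations of this ``sandwich then commute'' step convert both factors of $\mathcal N+A_1$ into $M$, and the commutator debris is exactly $-A_2$. This positivity-preserving sandwiching is the mechanism that legitimately transfers the linear inequality to the square; none of your proposed routes contains it.
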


\proof
\begin{eqnarray}&&\Bigg(\sum_{\pm\p\neq\0}\big(
d_\p^\dagger d_\p+d_{-\p}^\dagger d_{-\p}+2s_\p(c_\p-s_\p)\big)\Bigg)^2\notag\\
&=&
\sum_{\pm\p,\pm \q\neq\0}\Big(
d_\p^\dagger \big(d_\q^\dagger d_\q+d_{-\q}^\dagger
d_{-\q}+2s_\q(c_\q-s_\q)\big)d_\p\notag\\&&\ \ 
+d_{-\p}^\dagger \big(d_\q^\dagger d_\q+d_{-\q}^\dagger
d_{-\q}+2s_\q(c_\q-s_\q)
\big)d_{-\p}\Big)\notag
\\
&&+\sum_{\pm\p,\pm \q\neq\0}2s_\p(c_\p-s_\p)\Big(d_\q^\dagger
d_\q+d_{-\q}^\dagger d_{-\q}+2s_\q(c_\q-s_\q)\Big)\notag\\
&&+
\sum_{\pm\p\neq\0}\big(
d_\p^\dagger d_\p+d_{-\p}^\dagger d_{-\p}\big).
\notag
\end{eqnarray}
Using \eqref{sns} we bound this from below by
\begin{eqnarray}&&\sum_{\pm\p,\pm \q\neq\0}(c_\q-s_\q)^2\Big(d_\p^\dagger
 \big(b_\q^\dagger b_\q+b_{-\q}^\dagger b_{-\q}\big)d_\p+
 d_{-\p}^\dagger\big(b_\q^\dagger b_\q+b_{-\q}^\dagger b_{-\q}\big)d_{-\p}\Big)\notag \\
&&+\sum_{\pm\p,\pm \q\neq\0}2s_\p(c_\p-s_\p)(c_\q-s_\q)^2\big(b_\q^\dagger b_\q+b_{-\q}^\dagger b_{-\q}\big)\notag
\\
&&+
\sum_{\pm\p\neq\0}\Big((c_\p-s_\p)^2\big(
b_\p^\dagger b_\p+b_{-\p}^\dagger b_{-\p}\big)-2s_\p(c_\p-s_\p)\Big).
\notag\\
&=&\sum_{\pm\p,\pm \q\neq\0}(c_\q-s_\q)^2\Big(b_\q^\dagger
 \big(d_\p^\dagger d_\p+d_{-\p}^\dagger d_{-\p}\big)b_\q+
 b_{-\q}^\dagger\big(d_\p^\dagger d_\p+d_{-\p}^\dagger
 d_{-\p}\big)b_{-\p}\Big)\notag \\
&&+\sum_{\pm\p\neq\0}(c_\p-s_\p)^2\Big(
2s_\p^2\big(b_\p^\dagger b_\p+b_{-\p}^\dagger b_{-\p}\big)+2c_\p
s_\p\big(b_\p^\dagger b_{-\p}^\dagger+b_\p b_{-\p}\big)+2s_\p^2\Big)
\notag\\
&&+\sum_{\pm\p,\pm \q\neq\0}2s_\p(c_\p-s_\p)(c_\q-s_\q)^2\big(b_\q^\dagger b_\q+b_{-\q}^\dagger b_{-\q}\big)\notag
\\
&&+
\sum_{\pm\p\neq\0}\Big((c_\p-s_\p)^2\big(
b_\p^\dagger b_\p+b_{-\p}^\dagger b_{-\p}\big)-2s_\p(c_\p-s_\p)\Big).
\notag\\&=&\sum_{\pm\p,\pm \q\neq\0}(c_\q-s_\q)^2\Big(b_\q^\dagger
 \big(d_\p^\dagger d_\p+d_{-\p}^\dagger
 d_{-\p}+2s_\p(c_\p-s_\p)\big)b_\q\notag\\
&&\ \ \ \ +
 b_{-\q}^\dagger\big(d_\p^\dagger d_\p+d_{-\p}^\dagger
 d_{-\p}+2s_\p(c_\p-s_\p)\big)b_{-\p}\Big)\notag \\
&&+\sum_{\pm\p\neq\0}(c_\p-s_\p)^2\Big(
\big(2s_\p^2+1\big)\big(b_\p^\dagger b_\p+b_{-\p}^\dagger b_{-\p}\big)+2c_\p
s_\p\big(b_\p^\dagger b_{-\p}^\dagger+b_\p b_{-\p}\big)\Big)
\notag
\\
&&+
\sum_{\pm\p\neq\0}\big((c_\p-s_\p)^22s_\p^2-2s_\p(c_\p-s_\p)\big).
\notag\end{eqnarray}
Using \eqref{sns} one more time, we bound this from below by
\begin{eqnarray}
&&
\sum_{\pm\p,\pm \q\neq\0}(c_\q-s_\q)^2(c_\p-s_\p)^2\Big(b_\q^\dagger
 \big(b_\p^\dagger b_\p+b_{-\p}^\dagger b_{-\p})b_\q+
 b_{-\q}^\dagger\big(b_\p^\dagger b_\p+b_{-\p}^\dagger b_{-\p}\big)b_{-\q}\Big)\notag\\
&&+\sum_{\pm\p\neq\0}(c_\p-s_\p)^2(2s_\p^2-2c_\p s_\p+1)
\big(b_\p^\dagger b_\p+b_{-\p}^\dagger b_{-\p}\big)\notag\\
&&+\sum_{\pm\p\neq\0}\big((-2c_\p
s_\p+2s_\p^2)(c_\p-s_\p)^2-2s_\p(c_\p-s_\p)\big)\notag\\
&=&\Bigg(\sum_{\pm\p\neq0}(c_\p-s_\p)^2\big(b_\p^\dagger
b_{\p}+b_{-\p}^\dagger b_{-\p}\big)\Bigg)^2-\sum_{\pm\p\neq0}4(c_\p-s_\p)^2 c_\p s_\p.\notag
\end{eqnarray}
\qed

\begin{lemma}
There exist $C_1,C_2$ such that
\beq \big(H_\Bog-E_\Bog\big)^2\geq C_1L^{-4}(N^>)^2-C_2L^{2d-2}.\label{qe2}\eeq
Therefore,
\beq 
\one_\kappa^\Bog ( N^>)^2\one_\kappa^\Bog \leq CL^{4}(L^{d-1}+\kappa)^2.
\label{qe2a}\eeq
\end{lemma}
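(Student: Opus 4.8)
The plan is to square the operator bound of Lemma~\ref{lemi1} --- which is legitimate here only because the two operators involved commute --- and then to pass from $(SN^>S^\dagger)^2$ back to $(N^>)^2$ using Lemma~\ref{lemik}. Concretely: by the computation \eqref{hbogoliubov}, read with the $b$'s in place of the $a$'s, $H_\Bog-E_\Bog=\sum_{\p\neq\0}e_\p\,d_\p^\dagger d_\p$, where the number operators $d_\p^\dagger d_\p$ ($\p\neq\0$) mutually commute since the $d_\p$ obey the CCR. As in the proof of Lemma~\ref{lemi1} one has $e_\p\geq\pi\sqrt{8\hat v(\0)}/L$ for all $\p\neq\0$ and $L\geq1$; with $d_\p^\dagger d_\p\geq0$ this gives $H_\Bog-E_\Bog\geq\frac{C}{L}SN^>S^\dagger\geq0$, with the two sides commuting, and for nonnegative, ordered, commuting self-adjoint operators the order survives squaring, so $(H_\Bog-E_\Bog)^2\geq\frac{C^2}{L^2}(SN^>S^\dagger)^2$. (Equivalently, without invoking functional calculus: write $H_\Bog-E_\Bog=\frac{C}{L}SN^>S^\dagger+\sum_{\p\neq\0}(e_\p-\frac{C}{L})d_\p^\dagger d_\p$, expand the square, and discard the two leftover terms, which are products of commuting nonnegative operators and hence nonnegative.)

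Next I would apply Lemma~\ref{lemik}, $(SN^>S^\dagger+A_1)^2\geq M^2-A_2$. Since $\inf_{\p\neq\0}(c_\p-s_\p)^2\geq\sqrt2\,\pi/(\sqrt{\hat v(\0)}\,L)$ (again from the proof of Lemma~\ref{lemi1}) and $M$ and $N^>$ commute, $M^2\geq\frac{C}{L^2}(N^>)^2$; and the elementary inequality $X^2\geq\tfrac12(X+a)^2-a^2$ (valid for self-adjoint $X$ and scalar $a$, since $\tfrac12(X-a)^2\geq0$), applied with $X=SN^>S^\dagger$, $a=A_1$, yields $(SN^>S^\dagger)^2\geq\frac{C}{L^2}(N^>)^2-\tfrac12A_2-A_1^2$. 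It then remains to see that $A_1,A_2=O(L^d)$: by \eqref{s(c-s)}, $0\leq s_\p(c_\p-s_\p)=\alpha_\p/(1+\alpha_\p)<\tfrac12$, hence $A_2=4\sum_{\p\neq\0}\big(s_\p(c_\p-s_\p)\big)^2\leq2\sum_{\p\neq\0}s_\p(c_\p-s_\p)=A_1$, while $L^{-d}\sum_{\p\neq\0}s_\p(c_\p-s_\p)$ is bounded uniformly in $L$, as recorded in the proof of Lemma~\ref{lemi1}; so $\tfrac12A_2+A_1^2\leq CL^{2d}$. Chaining the displays gives $(H_\Bog-E_\Bog)^2\geq C_1L^{-4}(N^>)^2-C_2L^{2d-2}$, i.e.\ \eqref{qe2}. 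Finally \eqref{qe2a} follows by rewriting \eqref{qe2} as $(N^>)^2\leq C_1^{-1}L^4\big((H_\Bog-E_\Bog)^2+C_2L^{2d-2}\big)$, conjugating by $\one_\kappa^\Bog$, and using $0\leq H_\Bog-E_\Bog\leq\kappa$ on $\Ran\one_\kappa^\Bog$ together with $\kappa^2+C_2L^{2d-2}\leq C(\kappa+L^{d-1})^2$.

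The one step that is not a routine manipulation is the squaring at the start: an operator inequality cannot be squared in general, and the point is the observation that $H_\Bog-E_\Bog$ and $SN^>S^\dagger$ are both functions of the single commuting family $\{d_\p^\dagger d_\p\}_{\p\neq\0}$. After that the argument is bookkeeping: Lemma~\ref{lemik} converts the square of the quasiparticle-number-like operator $SN^>S^\dagger$ into the square of the genuine number operator $N^>$ (up to the scalar corrections $A_1,A_2$), and the $O(L^d)$ bounds for $A_1,A_2$ are exactly those already established in Lemma~\ref{lemi1}.
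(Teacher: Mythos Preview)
Your proof is correct and follows essentially the same route as the paper: square the linear bound $H_\Bog-E_\Bog\geq\frac{C}{L}SN^>S^\dagger$ (legitimate because both sides are functions of the commuting family $\{d_\p^\dagger d_\p\}$), then use Lemma~\ref{lemik} together with $M\geq CL^{-1}N^>$ and $A_1,A_2=O(L^d)$. The only cosmetic difference is that the paper passes from $(SN^>S^\dagger+A_1)^2$ to $(SN^>S^\dagger)^2$ via $(X+a)^2\leq(1+\delta)X^2+(1+\delta^{-1})a^2$ for arbitrary $\delta>0$, whereas you use the equivalent $X^2\geq\tfrac12(X+a)^2-a^2$, which is simply the case $\delta=1$; your explicit commutativity argument for the squaring step is a point the paper leaves implicit.
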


\proof As in the proof of Lemma \ref{lemi1},
\begin{eqnarray}\big(H_{\Bog}-E_\Bog\big)^2&\geq&
\frac{\big(\pi\sqrt{8\hat v(\0)}\big)^2}{L^{2}}\big(SN^{>}S^{\dagger}\big)^2.\label{pqo}
\end{eqnarray}
For any $\delta>0$, Lemma \ref{lemik} implies
\[(1+\delta)(SN^>S^\dagger)^2+(1+\delta^{-1})A_1^2\geq M^2-A_2.\]
Moreover, the limits $\lim\limits_{L\to\infty}\frac{A_1}{L^d}$ and 
$\lim\limits_{L\to\infty}\frac{A_2}{L^d}$ exist.
Therefore,

\begin{eqnarray*}
\big(SN^{>}S^{\dagger}\big)^2&\geq&M^2-CL^{2d}.
\end{eqnarray*}
Using (\ref{pqo}) and $M\geq C_1L^{-1}N^>$, we easily conclude that (\ref{qe2}) holds. Hence
\beqnn (N^>)^2\leq C_2^{-1}L^{4}\big((H_\Bog-E_\Bog)^2+C_3L^{2d-2}\big)
,\eeqnn
which easily implies  (\ref{qe2a}).
\qed

Suppose now that
$G$ is a smooth nonnegative function on $[0,\infty[$ such that
\begin{equation}G(s)=
\begin{cases}
 1, & \text{if }s\in[0,\frac{1}{3}]\\
 0, & \text{if }s\in[1,\infty[.
\end{cases} \label{F}
\end{equation}
Set
\[A_N:=G(N^>/N),\ \ A_N^\nph:=\one-A_N.\]
The operator $A_N$ will serve as a smooth approximation to the projection onto the physical space. 
Set \[Y_\kappa:=\one_\kappa^\Bog A_N.\]
\begin{lemma}
We have
\[\one_\kappa^\Bog-Y_\kappa Y_\kappa^\dagger=O\big(
L^{2}(\kappa+L^{d-1})N^{-1}\big).\]
\label{lle1}\end{lemma}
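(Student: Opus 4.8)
The plan is to use the definitions $Y_\kappa=\one_\kappa^\Bog A_N$ and $A_N^\nph=\one-A_N$, and to estimate the ``missing'' piece $\one_\kappa^\Bog-Y_\kappa Y_\kappa^\dagger$ directly by inserting $\one=A_N^2+(A_N^\nph)(A_N+\one)$-type identities (more precisely, since $A_N=A_N^\dagger$, one has $\one-A_N^2=A_N^\nph(\one+A_N)$, and $\one+A_N\le 2\one$, so $\one-A_N^2\le 2A_N^\nph$). First I would write
\[
\one_\kappa^\Bog-Y_\kappa Y_\kappa^\dagger
=\one_\kappa^\Bog(\one-A_N^2)\one_\kappa^\Bog,
\]
using that $\one_\kappa^\Bog$ is a projection and that $Y_\kappa Y_\kappa^\dagger=\one_\kappa^\Bog A_N^2\one_\kappa^\Bog$. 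Then, since $0\le \one-A_N^2\le 2A_N^\nph$ (from the bound just mentioned, valid because $0\le A_N\le\one$), it suffices to show
\[
\one_\kappa^\Bog A_N^\nph\one_\kappa^\Bog=O\big(L^{2}(\kappa+L^{d-1})N^{-1}\big).
\]

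The key step is the pointwise bound on $A_N^\nph=\one-G(N^>/N)$. By the defining property \eqref{F} of $G$, we have $G(s)=1$ for $s\le\frac13$, hence $\one-G(s)$ is supported in $s\ge\frac13$; since $\one-G$ is bounded and vanishes to first order away from that region, there is a constant with $0\le \one-G(s)\le C s$ for all $s\ge0$ (indeed for $s\le\frac13$ the left side is $0$, and for $s\ge\frac13$ one has $\one-G(s)\le 1\le 3s$). Applying this with $s=N^>/N$ via the spectral theorem for the self-adjoint operator $N^>$ gives the operator inequality
\[
0\le A_N^\nph\le C\,\frac{N^>}{N}.
\]
Sandwiching between $\one_\kappa^\Bog$ and invoking \eqref{qe1a} from Lemma \ref{lemi1}, namely $\one_\kappa^\Bog N^>\one_\kappa^\Bog\le CL^{2}(L^{d-1}+\kappa)$, yields
\[
\one_\kappa^\Bog A_N^\nph\one_\kappa^\Bog\le \frac{C}{N}\,\one_\kappa^\Bog N^>\one_\kappa^\Bog\le \frac{C}{N}L^{2}(L^{d-1}+\kappa),
\]
which is exactly the claimed bound, and combining with the factor $2$ above finishes the proof.

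I expect the only genuine subtlety to be making the manipulation $Y_\kappa Y_\kappa^\dagger=\one_\kappa^\Bog A_N^2\one_\kappa^\Bog$ and $\one_\kappa^\Bog-\one_\kappa^\Bog A_N^2\one_\kappa^\Bog=\one_\kappa^\Bog(\one-A_N^2)\one_\kappa^\Bog$ rigorous — this uses $(\one_\kappa^\Bog)^2=\one_\kappa^\Bog$ and $(\one_\kappa^\Bog)^\dagger=\one_\kappa^\Bog$, together with $A_N^\dagger=A_N$ (as $G$ is real-valued and $N^>$ self-adjoint) — and the elementary real-variable estimate $0\le \one-G(s)\le Cs$, both of which are routine. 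Everything else is a direct application of Lemma \ref{lemi1}; no new ideas about the Bogoliubov Hamiltonian are needed, only the already-established control $\one_\kappa^\Bog N^>\one_\kappa^\Bog=O\big(L^2(\kappa+L^{d-1})\big)$.
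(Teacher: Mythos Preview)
Your proof is correct and follows essentially the same approach as the paper: write $\one_\kappa^\Bog-Y_\kappa Y_\kappa^\dagger=\one_\kappa^\Bog\big(\one-G(N^>/N)^2\big)\one_\kappa^\Bog$, use that $1-G(s)^2$ vanishes for $s\le\tfrac13$ so is bounded by $Cs$, and then apply \eqref{qe1a}. The only cosmetic difference is that the paper bounds $(1-G(s)^2)/s$ directly rather than passing through the intermediate inequality $\one-A_N^2\le 2A_N^\nph$.
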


\proof We have
\begin{eqnarray*}
&&\one_\kappa^\Bog-Y_\kappa Y_\kappa^\dagger\ =\ \one_\kappa^\Bog\big(1-G(N^>/N)^2\big)
\one_\kappa^\Bog \\
&=&
\one_\kappa^\Bog (N^>/N)^{1/2}
\Big(\big(1-G(N^>/N)^2\big)(N^>/N)^{-1}\Big)
(N^>/N)^{1/2}
\one_\kappa^\Bog
.\end{eqnarray*}
But
\[\|\big(1-G(N^>/N)^2\big)(N^>/N)^{-1}\|=\sup\limits_s\{|(1-G(s)^2)s^{-1}|\}<\infty,\] and by (\ref{qe1a})
\[(N^>/N)^{-1/2}
\one_\kappa^\Bog=O\big(
L(\kappa+L^{d-1})^{1/2}N^{-1/2}\big).\]
\qed

Let $0<c_0<1$. If \beq \|\one_\kappa^\Bog-Y_\kappa Y_\kappa^\dagger\|\leq c_0,\label{pqi}\eeq
then
$Y_\kappa Y_\kappa^\dagger$ is invertible on $\Ran \one_\kappa^\Bog$.
 We will denote by
$\big(Y_\kappa Y_\kappa^\dagger\big)^{-1}$ the corresponding inverse. We set
\[X_\kappa:=
\big(Y_\kappa Y_\kappa^\dagger\big)^{-1/2}.\]
On the orthogonal complement of $\Ran\one_\kappa^\Bog$ we extend it by
$0$.

 By Lemma \ref{lle1} and Condition \eqref{condit2a} with a sufficiently small $c_1$, we can guarantee that \eqref{pqi} holds with, say, $c_0\leq1/2$. Therefore, in what follows $X_\kappa$ is well defined.

\begin{lemma}
\beq \one_\kappa^\Bog-X_\kappa=O\big(
L^{2}(\kappa+L^{d-1})N^{-1}\big).\label{pqo1}\eeq
\end{lemma}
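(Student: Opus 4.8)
The plan is to deduce \eqref{pqo1} from Lemma \ref{lle1} by a purely operator-theoretic argument, treating everything on the range of $\one_\kappa^\Bog$ where $X_\kappa$ is a genuine (positive) operator, and using that both $\one_\kappa^\Bog$ and $X_\kappa$ vanish on the orthogonal complement of $\Ran\one_\kappa^\Bog$. Write $Z:=Y_\kappa Y_\kappa^\dagger$, a positive operator on $\Ran\one_\kappa^\Bog$, and set $\delta:=\|\one_\kappa^\Bog-Z\|$, which by Lemma \ref{lle1} is $O\big(L^{2}(\kappa+L^{d-1})N^{-1}\big)$; by the remark following that lemma we may also assume $\delta\leq 1/2$, so $\tfrac12\le Z\le\tfrac32$ on $\Ran\one_\kappa^\Bog$.

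The key step is the elementary functional-calculus estimate: for a positive operator $Z$ with $\tfrac12\le Z\le\tfrac32$, one has $\|Z^{-1/2}-\one\|\le c\,\|Z-\one\|$ for an absolute constant $c$ (spectral mapping: the function $t\mapsto t^{-1/2}-1$ is Lipschitz on $[\tfrac12,\tfrac32]$, so $\|f(Z)-f(\one)\|\le \|f'\|_{\infty,[1/2,3/2]}\,\|Z-\one\|$). Applying this on $\Ran\one_\kappa^\Bog$ gives $\|X_\kappa-\one_\kappa^\Bog\|\le c\,\|Z-\one_\kappa^\Bog\|=c\,\delta$, and since $X_\kappa-\one_\kappa^\Bog$ is supported on $\Ran\one_\kappa^\Bog$ this bound holds as an operator norm on all of $\cH^\ext$. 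Combining with the bound on $\delta$ from Lemma \ref{lle1} yields \eqref{pqo1}.

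I would present it roughly as follows. \emph{Proof.} Both $\one_\kappa^\Bog$ and $X_\kappa$ are supported by $\Ran\one_\kappa^\Bog$, so it suffices to estimate $\|X_\kappa-\one_\kappa^\Bog\|$ on that subspace. By Lemma \ref{lle1} and Condition \eqref{condit2a} (with $c_1$ small), the positive operator $Z:=Y_\kappa Y_\kappa^\dagger$ on $\Ran\one_\kappa^\Bog$ satisfies $\|Z-\one_\kappa^\Bog\|\le 1/2$, hence $\tfrac12\le Z\le\tfrac32$. The function $f(t)=t^{-1/2}-1$ has $|f'(t)|=\tfrac12 t^{-3/2}\le \sqrt{2}$ on $[\tfrac12,\tfrac32]$, so by the spectral theorem
\[
\|X_\kappa-\one_\kappa^\Bog\|=\|f(Z)\|\le \sqrt{2}\,\|Z-\one_\kappa^\Bog\|.
\]
Now invoke Lemma \ref{lle1} once more: $\|Z-\one_\kappa^\Bog\|=O\big(L^{2}(\kappa+L^{d-1})N^{-1}\big)$, which gives \eqref{pqo1}. $\qed$

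There is essentially no obstacle here: the only mild point is to make sure the Lipschitz/functional-calculus step is applied on $\Ran\one_\kappa^\Bog$ (where $Z$ is invertible and bounded below) rather than on the whole space, and that the trivial extension by $0$ off that subspace does not change the norm of the difference $X_\kappa-\one_\kappa^\Bog$. One should also double-check that the constant $c_1$ chosen after Lemma \ref{lle1} is indeed what forces $\|Z-\one_\kappa^\Bog\|\le 1/2$; this was already arranged in the paragraph preceding the statement, so it can simply be cited.
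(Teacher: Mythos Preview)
Your proof is correct and follows essentially the same idea as the paper: both arguments reduce to the observation that a smooth function of a self-adjoint operator $Z$ with $\|Z-\one\|$ small is itself close to the identity, invoking the spectral theorem. The only cosmetic difference is that the paper first passes through $Z^{-1}$ via the Neumann series bound $\|\one_\kappa^\Bog-(Y_\kappa Y_\kappa^\dagger)^{-1}\|\le c_0(1-c_0)^{-1}$ and then takes a square root, whereas you apply the Lipschitz estimate for $t\mapsto t^{-1/2}$ directly; your route is slightly more economical but not genuinely different.
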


\proof
\[\|\one_\kappa^\Bog-\big(Y_\kappa Y_\kappa^\dagger\big)^{-1}\|\leq c_0(1-c_0)^{-1}\]
by the convergent Neumann series. This is $O\big(
L^{2}(\kappa+L^{d-1})N^{-1}\big)$.
This implies (\ref{pqo1}) by the spectral theorem. \qed

\begin{lemma}
\[X_\kappa[A_N,[A_N,H_\Bog]]X_\kappa=O\big(N^{-2} L^{2}(\kappa+L^{d-1})\big).\]
\label{???}\end{lemma}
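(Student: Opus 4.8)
The plan is to reduce the double commutator to an explicit normally ordered operator and then bound it using the a priori estimates already established for $N^>$ on the range of $\one_\kappa^\Bog$. First I would compute $[A_N, H_\Bog]$. Since $A_N = G(N^>/N)$ is a function of the number operator $N^>$, and $H_\Bog$ is quadratic, the only terms that fail to commute with $N^>$ are the off-diagonal pieces $\hat v(\p) b_\p b_{-\p}$ and $\hat v(\p) b_\p^\dagger b_{-\p}^\dagger$, which lower, resp.\ raise $N^>$ by $2$. Writing $\Delta_+ := G((N^>+2)/N) - G(N^>/N)$ and $\Delta_- := G((N^>-2)/N) - G(N^>/N)$, one gets
\[
[A_N, H_\Bog] = \tfrac12\sum_{\p\neq\0}\hat v(\p)\Big(\Delta_-\, b_\p b_{-\p} + b_\p^\dagger b_{-\p}^\dagger\, \Delta_+\Big)
= \tfrac12\sum_{\p\neq\0}\hat v(\p)\Big(\Delta_-\, b_\p b_{-\p} + \hc\Big),
\]
after noting $\Delta_+$ and $\Delta_-$ are related by shifting the argument. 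The crucial point is that $G$ is smooth and constant outside $[\tfrac13,1]$, so a discrete derivative like $\Delta_\pm$ is $O(1/N)$ in operator norm, and moreover $\Delta_\pm$ is supported (as a function of $N^>$) on the region $N^> \gtrsim N/3$.

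Next I would iterate: applying $[A_N,\cdot]$ again produces a second discrete difference, so
\[
[A_N,[A_N,H_\Bog]] = \tfrac12\sum_{\p\neq\0}\hat v(\p)\Big(\big(\Delta_{--}\big)\, b_\p b_{-\p} + \hc\Big),
\]
where $\Delta_{--}$ is a second-order discrete difference of $G(\cdot/N)$ evaluated near $N^>$; smoothness of $G$ gives $\|\Delta_{--}\| = O(N^{-2})$, again with the same support restriction. Then I sandwich with $X_\kappa$. Using $\|b_\p b_{-\p}\| \le$ (a multiple of) $N^>+1$ on each sector together with $\sum_\p \hat v(\p) < \infty$ (since $\hat v \in L^1$ and the lattice sum is controlled — more precisely one bounds $\sum_{\p}\hat v(\p) b_\p b_{-\p}$ in terms of $N^>$ by Cauchy--Schwarz as was done in the proof of Lemma~\ref{lami}), the operator in the middle is bounded by $C N^{-2}(N^> + 1)$ times a projector-like cutoff. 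Since $X_\kappa$ differs from $\one_\kappa^\Bog$ by a small operator (Lemma~\ref{pqo1}) and both have range in $\Ran\one_\kappa^\Bog$, I can replace $X_\kappa$ by $\one_\kappa^\Bog$ up to negligible error and invoke the a priori bound \eqref{qe1a}, namely $\one_\kappa^\Bog N^> \one_\kappa^\Bog \le C L^2(\kappa + L^{d-1})$, to conclude
\[
X_\kappa[A_N,[A_N,H_\Bog]]X_\kappa = O\big(N^{-2} L^2(\kappa + L^{d-1})\big).
\]

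I expect the main obstacle to be bookkeeping rather than conceptual: one must be careful that $H_\Bog$ here means $H_{\Bog,N}$ written in the $b$-operators, check that the commutators of $N^>$ with $b_\p$, $b_\p^\dagger$ behave exactly as for the $a$'s (which was noted when $b_\p$ was introduced), and verify that the discrete second difference of $G(s/N)$ really is uniformly $O(N^{-2})$ — this is just Taylor's theorem applied to the smooth compactly-behaved $G$, with the $1/N$ spacing squared. A secondary subtlety is controlling the $\p$-sum: one should not bound term by term with $\|b_\p b_{-\p}\|$ naively, but rather treat $\sum_\p \hat v(\p) b_\p b_{-\p}$ as a single operator and estimate its square by $C (N^>)^2$ or $C(N^> + 1)$ as appropriate, exactly as in the first-term estimate inside the proof of Lemma~\ref{lami}. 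Once these pieces are in place the estimate follows by combining them with Lemmas~\ref{lle1} and the bound \eqref{pqo1}, plus \eqref{qe1a}.
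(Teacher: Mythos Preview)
Your approach is correct but takes a genuinely different route from the paper's.

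The paper does \emph{not} compute the discrete differences directly. Instead it writes $A_N=G(N^>/N)=\frac{1}{2\pi}\int\hat G(t)\,\e^{\i tN^>/N}\,\d t$ and uses the Duhamel formula $[\e^{\i tS},T]=\int_0^t \i\,\e^{\i uS}[S,T]\e^{\i(t-u)S}\,\d u$ twice. This expresses $[A_N,[A_N,H_\Bog]]$ as an integral of unitary conjugates of $N^{-2}[N^>,[N^>,H_\Bog]]$, and the latter satisfies the \emph{two-sided} bound $\pm[N^>,[N^>,H_\Bog]]\le C(N^>+L^d)$. Because the conjugating unitaries commute with $N^>$, the paper can extract a factor $(N^>+L^d)^{1/2}$ symmetrically on \emph{each} side and close the estimate using only the first-moment bound \eqref{qe1a}.

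Your finite-difference computation is more elementary and avoids the Fourier representation; two remarks, however. First, the coefficient you obtain is actually the \emph{square of a first difference} (one checks $[A_N,[A_N,b_\p b_{-\p}]]=(A_N-A_N'')^2\,b_\p b_{-\p}$ with $A_N''=G((N^>+2)/N)$), not a genuine second difference --- though both are $O(N^{-2})$ in norm, so this is harmless. Second, and more substantively, since that $O(N^{-2})$ multiplier sits on \emph{one side} of $B=\tfrac12\sum_\p\hat v(\p)b_\p b_{-\p}$, to bound $\|X_\kappa D\,B\,X_\kappa\|$ you need $\|B\,X_\kappa\|$, which is controlled by $\|X_\kappa(N^>)^2X_\kappa\|$, i.e.\ by the second-moment bound \eqref{qe2a}, not by \eqref{qe1a} alone. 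Since \eqref{qe2a} is already proved this is fine, but you should cite it rather than \eqref{qe1a}. The trade-off: your route is more hands-on and transparent, the paper's is a cleaner symmetric splitting that needs weaker a priori input.
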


\proof We have
\[[N^>,[N^>,H_\Bog]]=2\sum \hat v(\p)(b_\p b_{-\p}+b_\p^\dagger b_{-\p}^\dagger).\]
Using
\[
-b_\p^\dagger b_\p-b_{-\p}^\dagger b_{-\p}-1
\leq
b_\p b_{-\p}+b_\p^\dagger b_{-\p}^\dagger\leq b_\p^\dagger b_\p+b_{-\p}^\dagger b_{-\p}+1
\] we obtain
\[-C(N^>+L^d)\leq
[N^>,[N^>,H_\Bog]]\leq
C(N^>+L^d).\]
This implies
\beq
\Big \|(N^>+L^d)^{-1/2}\Big[N^>,[N^>,H_\Bog]\Big](N^>+L^d)^{-1/2}\Big\|\leq C. 
\label{powq}\eeq
Now we use one of the well-known methods of dealing with functions of operators, for instance, the representation
\[A_N=G(N^>/N)=\frac{1}{2\pi}\int\hat G(t)\e^{\i tN^>/N}\d t.\]
To this end note that for operators $S$ and $T$ one has 
\begin{gather*}
\left[\e^{\i tS},T\right]=\int_{0}^{t}\frac{\d}{\d u}\e^{\i uS}T\e^{-\i uS}\d u\e^{\i tS}=\int_{0}^{t}\i e^{\i uS}[S,T]\e^{\i(t-u)S}\d u  
\intertext{which together with the representation mentioned above yields}
\Big[A_N,[A_N,H_\Bog]\Big]\\
=\frac{-1}{4\pi^2 N^2}\int\d p\int_{0}^{p}\d s\int\d t\int_{0}^{t}\d u\hat{G}(p)\hat{G}(t)\e^{\i(s+u)\frac{N^>}{N}}\Big[N^>,[N^>,H_\Bog]\Big]\e^{\i(t-u+p-s)\frac{N^>}{N}}.
\end{gather*}
Therefore,
\begin{eqnarray*}
&&\|X_\kappa[A_N,[A_N,H_\Bog]]X_\kappa\|\\
&\leq&\frac{1}{4\pi^2N^2}\int\d p\int\d t |p\hat G(p) t\hat G(t)|
\|X_\kappa(N^>+L^d)^{1/2}\|\\
&&\times\|(N^>+L^d)^{-1/2}[N^>,[N^>,H_\Bog]](N^>+L^d)^{-1/2}\|
\|(N^>+L^d)^{1/2}X_\kappa\|\end{eqnarray*}
Now, by  Lemma 8.2,
\begin{eqnarray*}\|X_\kappa(N^>+L^d)^{1/2}\|
\|(N^>+L^d)^{1/2}X_\kappa\|
&=&\|X_\kappa(N^>+L^d)X_\kappa\|\\
&\leq&
C\big(\|\one_\kappa^\Bog N^>\one_\kappa^\Bog+L^d\big)\\&\leq& 2C L^2(L^{d-1}+\kappa).
\end{eqnarray*}
Besides, $\hat G$ decays fast. Thus it is enough to use (\ref{powq}) to complete the proof.

\qed

We define
\[Z_\kappa:=X_\kappa A_N=
\big(\one_\kappa^\Bog A_N^2\one_\kappa^\Bog
\big)^{-1/2}
 A_N.\]
Clearly, $Z_\kappa$ is a partial isometry with initial space $\Ran (A_{N}\one_{\kappa}^{\Bog})$ and final space $\Ran (\one_{\kappa}^{\Bog})$.

\begin{lemma}\label{lemma4}
\begin{eqnarray*}
\one_\kappa^\Bog (H_\Bog-E_\Bog) \one_\kappa^\Bog&=&Z_\kappa (H_\Bog-E_\Bog) Z_\kappa^\dagger\\
&&+ O\big(L^{2}(L^{d-1}+\kappa)\kappa N^{-1}\big)\\
&&+
 O\big(L^{2}(L^{d-1}+\kappa)N^{-2}\big)
.\end{eqnarray*}
\end{lemma}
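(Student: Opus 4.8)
The plan is to peel off the operators $A_N$ and $X_\kappa$ one at a time, using that $H_\Bog$ commutes with $\one_\kappa^\Bog$. Recall that $Z_\kappa^\dagger=A_NX_\kappa$, so $Z_\kappa(H_\Bog-E_\Bog)Z_\kappa^\dagger=X_\kappa A_N(H_\Bog-E_\Bog)A_NX_\kappa$, and that by its very definition $X_\kappa=\one_\kappa^\Bog X_\kappa\one_\kappa^\Bog$ (it vanishes off $\Ran\one_\kappa^\Bog$), while $X_\kappa\big(\one_\kappa^\Bog A_N^2\one_\kappa^\Bog\big)=W_\kappa$, where $W_\kappa:=\big(Y_\kappa Y_\kappa^\dagger\big)^{1/2}$ is extended by $0$ off $\Ran\one_\kappa^\Bog$ — this holds because $X_\kappa$ and $W_\kappa$ are functions of the same positive operator $Y_\kappa Y_\kappa^\dagger=\one_\kappa^\Bog A_N^2\one_\kappa^\Bog$ on $\Ran\one_\kappa^\Bog$. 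The first step is the double-commutator identity
\[
A_N(H_\Bog-E_\Bog)A_N=\frac12\big(A_N^2(H_\Bog-E_\Bog)+(H_\Bog-E_\Bog)A_N^2\big)-\frac12\big[A_N,[A_N,H_\Bog]\big],
\]
valid because $E_\Bog$ is a scalar. Conjugating by $X_\kappa$ and applying Lemma \ref{???}, the commutator term contributes $O\big(N^{-2}L^{2}(L^{d-1}+\kappa)\big)$, which is exactly the last error term in the statement.

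For the main term I would insert $\one_\kappa^\Bog$ at both ends of $X_\kappa A_N^2(H_\Bog-E_\Bog)X_\kappa$ via $X_\kappa=\one_\kappa^\Bog X_\kappa\one_\kappa^\Bog$, and then commute $H_\Bog-E_\Bog$ through the adjacent $\one_\kappa^\Bog$: this turns $\one_\kappa^\Bog A_N^2(H_\Bog-E_\Bog)\one_\kappa^\Bog$ into $\big(\one_\kappa^\Bog A_N^2\one_\kappa^\Bog\big)(H_\Bog-E_\Bog)\one_\kappa^\Bog$, whence $X_\kappa A_N^2(H_\Bog-E_\Bog)X_\kappa=W_\kappa(H_\Bog-E_\Bog)X_\kappa$; symmetrically $X_\kappa(H_\Bog-E_\Bog)A_N^2X_\kappa=X_\kappa(H_\Bog-E_\Bog)W_\kappa$. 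Hence, up to the commutator error already isolated,
\[
Z_\kappa(H_\Bog-E_\Bog)Z_\kappa^\dagger=\frac12\big(W_\kappa(H_\Bog-E_\Bog)X_\kappa+X_\kappa(H_\Bog-E_\Bog)W_\kappa\big)+O\big(N^{-2}L^{2}(L^{d-1}+\kappa)\big).
\]

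It remains to replace $W_\kappa$ and $X_\kappa$ by $\one_\kappa^\Bog$. Lemma \ref{pqo1} gives $\|X_\kappa-\one_\kappa^\Bog\|=O\big(N^{-1}L^{2}(L^{d-1}+\kappa)\big)$; since $\|\one_\kappa^\Bog-Y_\kappa Y_\kappa^\dagger\|$ is of the same order by Lemma \ref{lle1}, the spectral theorem together with $|\sqrt t-1|\leq|t-1|$ gives $\|W_\kappa-\one_\kappa^\Bog\|=O\big(N^{-1}L^{2}(L^{d-1}+\kappa)\big)$ as well. Writing $W_\kappa=\one_\kappa^\Bog+E_W$, $X_\kappa=\one_\kappa^\Bog+E_X$ and expanding $\frac12\big(W_\kappa(H_\Bog-E_\Bog)X_\kappa+X_\kappa(H_\Bog-E_\Bog)W_\kappa\big)$, the term with two factors $\one_\kappa^\Bog$ is precisely $\one_\kappa^\Bog(H_\Bog-E_\Bog)\one_\kappa^\Bog$, the desired leading term. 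Every remaining term carries at least one factor $E_W$ or $E_X$, and since $E_W=\one_\kappa^\Bog E_W\one_\kappa^\Bog$ and $E_X=\one_\kappa^\Bog E_X\one_\kappa^\Bog$, in each such term $H_\Bog-E_\Bog$ sits between two copies of $\one_\kappa^\Bog$ and may be replaced by $\one_\kappa^\Bog(H_\Bog-E_\Bog)\one_\kappa^\Bog$, whose norm is at most $\kappa$. Thus the sum of these terms is $O\big(\kappa N^{-1}L^{2}(L^{d-1}+\kappa)\big)$, the terms containing two factors $E_\bullet$ being even smaller by Condition \eqref{condit2a}. Adding the two error contributions yields the lemma.

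I do not expect a deep difficulty here; the delicate part is purely bookkeeping. One must keep scrupulous track of which operators vanish off $\Ran\one_\kappa^\Bog$, so that the kinematic bound $\|\one_\kappa^\Bog(H_\Bog-E_\Bog)\one_\kappa^\Bog\|\leq\kappa$ can be applied to every error term; this is what upgrades the crude estimates $\|E_W\|,\|E_X\|=O\big(N^{-1}L^{2}(L^{d-1}+\kappa)\big)$ to the $\kappa$-weighted bound appearing in the statement. One must also verify the exact identity $X_\kappa\big(\one_\kappa^\Bog A_N^2\one_\kappa^\Bog\big)=W_\kappa$, which uses that $X_\kappa$ and $W_\kappa$ are functions of the same positive operator $Y_\kappa Y_\kappa^\dagger$ on $\Ran\one_\kappa^\Bog$. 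Beyond this, the estimates of Lemmas \ref{lle1}, \ref{pqo1} and \ref{???} do all the work.
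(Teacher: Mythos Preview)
Your proof is correct and reaches the same conclusion as the paper, but the route is genuinely different. The paper first writes
\[
\one_\kappa^\Bog (H_\Bog-E_\Bog)\one_\kappa^\Bog
= X_\kappa (H_\Bog-E_\Bog)X_\kappa
+\big(\text{terms with }\one_\kappa^\Bog-X_\kappa\big),
\]
then inserts $\one=A_N+A_N^\nph$ on both sides of $X_\kappa(H_\Bog-E_\Bog)X_\kappa$, and finally applies the double-commutator identity to the $A_N^\nph$ piece. This produces seven error terms, four of which (the ones carrying a single factor $A_N^\nph$) are bounded using the quadratic estimate \eqref{qe2a} on $\one_\kappa^\Bog(N^>)^2\one_\kappa^\Bog$. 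Your argument instead applies the double-commutator identity directly to $A_N$, then exploits the exact algebraic fact
\[
X_\kappa\big(\one_\kappa^\Bog A_N^2\one_\kappa^\Bog\big)=(Y_\kappa Y_\kappa^\dagger)^{1/2}=W_\kappa,
\]
so that the main term becomes $\tfrac12(W_\kappa(H_\Bog-E_\Bog)X_\kappa+X_\kappa(H_\Bog-E_\Bog)W_\kappa)$, and only the deviations $W_\kappa-\one_\kappa^\Bog$ and $X_\kappa-\one_\kappa^\Bog$ need to be estimated. This is tidier (three error terms instead of seven) and, as a small bonus, uses only the linear bound \eqref{qe1a} through Lemma~\ref{lle1}, never the quadratic bound \eqref{qe2a}. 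The paper's decomposition is perhaps more mechanical; yours requires spotting the $W_\kappa$ identity, which you correctly justify via the functional calculus for $Y_\kappa Y_\kappa^\dagger$.
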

\begin{proof}
We have
\begin{eqnarray}
\one_\kappa^\Bog (H_\Bog-E_\Bog) \one_\kappa^\Bog
&=&
\big(\one_\kappa^\Bog -X_\kappa\big)
(H_\Bog-E_\Bog) \one_\kappa^\Bog\label{tag1}
\\
&&+X_\kappa
(H_\Bog-E_\Bog) 
\big(\one_\kappa^\Bog -X_\kappa\big)\label{tag2}\\
&&+X_\kappa
(H_\Bog-E_\Bog) 
X_\kappa;\notag
\end{eqnarray}
\begin{eqnarray}
X_\kappa
(H_\Bog-E_\Bog) 
X_\kappa&=&-
X_\kappa A_N^\nph
(H_\Bog-E_\Bog)  A_N^\nph
X_\kappa\notag\\
&&+
X_\kappa 
(H_\Bog-E_\Bog)  A_N^\nph
X_\kappa\label{tag3}\\
&&+
X_\kappa A_N^\nph
(H_\Bog-E_\Bog)  
X_\kappa\label{tag4}\\
&&+
X_\kappa A_N
(H_\Bog-E_\Bog)  A_NX_\kappa;\notag
\end{eqnarray}
\begin{eqnarray}
-X_\kappa A_N^\nph
(H_\Bog-E_\Bog)  A_N^\nph
X_\kappa
&\!\!\!\!{=}&\!\!\!\!\!\!
{-}\frac12
X_\kappa (A_N^\nph)^2
(H_\Bog-E_\Bog) 
X_\kappa\label{tag5}\\
&&\!\!\!\!\!\!{-}\frac12X_\kappa 
(H_\Bog-E_\Bog)  (A_N^\nph)^2
X_\kappa\label{tag6}\\
&&\!\!\!\!\!\!{+}\frac12
X_\kappa \big[A_N^\nph,[A_N^\nph,
H_\Bog]\big]
X_\kappa.\label{tag7}
\end{eqnarray}
The error term in the lemma equals the sum of (\ref{tag1}),...,(\ref{tag7}).
By (\ref{pqo1}),
\[ (\ref{tag1}) ,(\ref{tag2})= O\big(L^{2}(L^{d-1}+\kappa)\kappa N^{-1}\big).\]
By (\ref{qe2a}),
\[(\ref{tag3}),\dots, (\ref{tag6})= O\big(L^{2}(L^{d-1}+\kappa)\kappa N^{-1}\big).\]
By Lemma \ref{???},\[ (\ref{tag7})= O\big(L^{2}(L^{d-1}+\kappa) N^{-2}\big).\]
\end{proof}

\begin{lemma}\label{lemma89}
Assume (\ref{condit2}). Then
\begin{eqnarray}
\inf_{0<\epsilon\leq1}Z_\kappa R_{N,\epsilon} Z_\kappa^{\dagger}&\leq&
C L^{d/2+3}(L^{d-1}+\kappa)^{3/2}N^{-1/2}
\label{epsilo}
\end{eqnarray}
\end{lemma}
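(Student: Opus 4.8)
The plan is to bound each of the five groups of terms in $R_{N,\epsilon}$ (from \eqref{rnepsilon}) after conjugation by $Z_\kappa$, exactly mirroring the structure of the lower-bound estimate in Lemma \ref{lami}, but now using the Bogoliubov-side a priori bounds \eqref{qe1a} and \eqref{qe2a} in place of \eqref{pafa} and \eqref{eq6}. First I would record that $Z_\kappa$ is a partial isometry with final space $\Ran\one_\kappa^\Bog$, so for any operator $W$ one has $\|Z_\kappa W Z_\kappa^\dagger\|\le \|A_N\one_\kappa^\Bog W \one_\kappa^\Bog A_N\|$ up to the insertion of $A_N=G(N^>/N)$, which is bounded by $1$ and, crucially, supported on $\{N^>\le N\}$; this last fact lets us freely replace $N_0^\ext$ by $N_0$ and use $0\le N_0\le N$ inside all estimates, just as in the physical-space computation. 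So the task reduces to bounding $\one_\kappa^\Bog W\one_\kappa^\Bog$ for $W$ running over the terms of $R_{N,\epsilon}$.

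Next I would treat the terms one at a time. The term $-\frac1N\sum_{\p\ne\0}(\hat v(\p)+\tfrac12\hat v(\0))b_\p^\dagger b_\p N^>$ is bounded in absolute value by $\tfrac{C}{N}(N^>)^2$ since $\sum_{\p\ne\0}b_\p^\dagger b_\p=N^>$ and the potential factors are bounded, hence by \eqref{qe2a} contributes $O\big(N^{-1}L^4(L^{d-1}+\kappa)^2\big)$; the same bound holds for $\tfrac{\hat v(\0)}{2N}N^>\le \tfrac{C}{N}(N^>)^2$ (using $N^>\ge1$ on its range or simply $N^>\le (N^>)^2$ when $N^>\ge1$, or bounding directly by \eqref{qe1a}). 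The $\epsilon$-term $\tfrac{\epsilon}{N}\sum_{\p\ne\0}(\hat v(\p)+\hat v(\0))b_\p^\dagger b_\p N_0$ is, using $N_0\le N$ on the range of $A_N$, bounded by $\epsilon C N^>$, hence by \eqref{qe1a} contributes $\epsilon C L^2(L^{d-1}+\kappa)$. The $\epsilon^{-1}$-term $(1+\epsilon^{-1})\tfrac{1}{2N}v(\0)L^d N^>(N^>-1)$ is bounded by $\epsilon^{-1}C N^{-1}L^d (N^>)^2$, hence by \eqref{qe2a} contributes $\epsilon^{-1}C N^{-1}L^{d+4}(L^{d-1}+\kappa)^2$. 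Finally the off-diagonal ``pairing defect'' term $\tfrac12\sum_{\p\ne\0}\hat v(\p)\big((\tfrac{\sqrt{(N_0-1)N_0}}{N}-1)b_\p b_{-\p}+\hc\big)$ is handled exactly as in Lemma \ref{lami}: write $N-\sqrt{(N_0-1)N_0}\le 2N^>+1$, apply $AB+B^\dagger A^\dagger\ge -A^\dagger A-BB^\dagger$ type Schwarz bounds to get a lower bound $-C((N^>)^2+1)/N$, but here we need an \emph{upper} bound on the $\inf$, so instead I note this term is $\le$ something of the form $\tfrac{C}{N}\big((N^>)^2+1\big)+\tfrac{1}{2}\sum\hat v(\p)(b_\p b_{-\p}+\hc)$; the bare $\sum\hat v(\p)(b_\p b_{-\p}+\hc)$ is $\le C(N^>+L^d)$ by the CCR bound, so altogether this term also contributes $O\big(N^{-1}L^4(L^{d-1}+\kappa)^2+L^2(L^{d-1}+\kappa)+L^d\big)$, which is dominated by the other terms given \eqref{condit2}.

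Collecting, $Z_\kappa R_{N,\epsilon}Z_\kappa^\dagger\le \epsilon C L^2(L^{d-1}+\kappa)+\epsilon^{-1}C N^{-1}L^{d+4}(L^{d-1}+\kappa)^2 + C N^{-1}L^4(L^{d-1}+\kappa)^2$, and I would then optimize in $\epsilon$ by choosing $\epsilon = c^{-1/2}N^{-1/2}L^{d/2+1}(L^{d-1}+\kappa)^{1/2}$, which is $\le1$ precisely by Condition \eqref{condit2}, exactly as in Lemma \ref{lami}. This balances the first two terms to $C N^{-1/2}L^{d/2+3}(L^{d-1}+\kappa)^{3/2}$, and the leftover third term $CN^{-1}L^4(L^{d-1}+\kappa)^2$ is absorbed into it using \eqref{condit2} ($L^{d+2}(L^{d-1}+\kappa)\le cN$ gives $N^{-1}L^4(L^{d-1}+\kappa)^2 \le c\, N^{-1/2}\cdot N^{-1/2} L^{2-d}(L^{d-1}+\kappa)\le C N^{-1/2}L^{d/2+3}(L^{d-1}+\kappa)^{3/2}$ after routine manipulation with $L\ge1$), yielding \eqref{epsilo}.

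The main obstacle I anticipate is purely bookkeeping rather than conceptual: one must be careful that $Z_\kappa$ (unlike $\one_\kappa^N$ in the lower bound, whose range lies literally in the physical space) only \emph{approximately} projects onto the physical space, so the substitution of $N_0$ for $N_0^\ext$ and the bound $N_0\le N$ are justified only because the cutoff $A_N=G(N^>/N)$ in $Z_\kappa=X_\kappa A_N$ is supported where $N^>\le N$, i.e. where $N_0=N-N^>\ge 0$; one must verify each such replacement is legitimate term by term. A secondary subtlety is that we need an \emph{upper} bound on an \emph{infimum} over $\epsilon$, so in the pairing-defect term we cannot symmetrize away the sign of $\sqrt{(N_0-1)N_0}/N-1$ as cleanly as in the lower bound — but since on $\Ran A_N$ we have $0\le N_0\le N$, that factor lies in $[-1,0]$ and the term is controlled in operator norm by $\tfrac{C}{N}(N^>+1)\cdot\|\sum\hat v(\p) b_\p b_{-\p}+\hc\|$-type estimates, which is enough.
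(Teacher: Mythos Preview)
Your overall strategy is exactly the paper's: bound each term of $R_{N,\epsilon}$ after conjugation by $Z_\kappa$ using \eqref{qe1a} and \eqref{qe2a}, then optimize in $\epsilon$ with the same choice $\epsilon=c^{-1/2}N^{-1/2}L^{d/2+1}(L^{d-1}+\kappa)^{1/2}$. Two minor points: (i) the negative term $-\tfrac1N\sum(\hat v(\p)+\tfrac12\hat v(\0))b_\p^\dagger b_\p N^>$ can simply be dropped for an upper bound, no need to control its absolute value; (ii) $\|X_\kappa\|\ge1$, not $\le1$, so your claimed inequality $\|Z_\kappa W Z_\kappa^\dagger\|\le\|A_N\one_\kappa^\Bog W\one_\kappa^\Bog A_N\|$ needs an extra constant, which is harmless.

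There is, however, a genuine gap in your treatment of the pairing-defect term. Your proposed bound $\le \tfrac{C}{N}((N^>)^2+1)+\tfrac12\sum\hat v(\p)(b_\p b_{-\p}+\hc)$ produces, after \eqref{qe1a}, contributions of order $L^2(L^{d-1}+\kappa)$ and $L^d$ \emph{with no factor of $\epsilon$ or $N^{-1/2}$}. These are \emph{not} dominated by $N^{-1/2}L^{d/2+3}(L^{d-1}+\kappa)^{3/2}$: by \eqref{condit2} one has $N^{-1/2}L^{d/2+1}(L^{d-1}+\kappa)^{1/2}\le c^{1/2}$, so the inequality $L^2(L^{d-1}+\kappa)\le CN^{-1/2}L^{d/2+3}(L^{d-1}+\kappa)^{3/2}$ would require $1\le CN^{-1/2}L^{d/2+1}(L^{d-1}+\kappa)^{1/2}$, which goes the wrong way. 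Your claim ``dominated by the other terms given \eqref{condit2}'' is therefore false.

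The fix is that the Schwarz bound in Lemma~\ref{lami} is two-sided: from $A^\dagger B+B^\dagger A\le A^\dagger A+B^\dagger B$ (not just $\ge -A^\dagger A-B^\dagger B$) one gets, with the \emph{same} computation as there,
\[
\tfrac{1}{2N}\Big(\big(\sqrt{(N_0-1)N_0}-N\big)\sum_{\p\ne\0}\hat v(\p)b_\p b_{-\p}+\hc\Big)\ \le\ \frac{C}{N}\big((N^>)^2+1\big),
\]
which after \eqref{qe2a} gives $O(N^{-1}L^4(L^{d-1}+\kappa)^2)$ and is absorbed exactly as you do with the leftover third term. There is no asymmetry between the upper and lower bounds here; your ``secondary subtlety'' is not a subtlety at all.
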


\begin{proof}
\begin{eqnarray*}
Z_\kappa R_{N,\epsilon} Z_\kappa^{\dagger}&\leq&
Z_\kappa \frac{1}{2}\sum_{\p\neq\0}\hat v(\p)\Big(\Big(\frac{\sqrt{(N_0-1)N_0}}{N}
-1\Big)\b_\p \b_{-\p}
+\hc\Big)  Z_\kappa^{\dagger}\notag\\&&+Z_\kappa \frac{\hat v(\0)}{2N}N^> Z_\kappa^{\dagger}\notag\\
&&+\epsilon Z_\kappa \frac{1}{N}\sum_{\p\neq\0}\big(\hat v(\p)+\hat v(\0)\big)\b_\p^\dagger \b_\p N_0^\ext Z_\kappa^{\dagger} \notag\\
&&+(1+\epsilon^{-1})Z_\kappa 
\frac{1}{2N} v(\0)L^d N^>(N^>-1) Z_\kappa^{\dagger}\notag\\
&\leq&\one_\kappa^\Bog C\frac{(N^>)^2+1}{N}\one_\kappa^\Bog
\notag \\
&&+\one_\kappa^\Bog C\frac{N^>}{N}\one_\kappa^\Bog \notag \\
&&+\epsilon\one_\kappa^\Bog CN^>\one_\kappa^\Bog \notag\\
&&+(1+\epsilon^{-1})\one_\kappa^\Bog C\frac{L^d(N^>)^2}{N}\one_\kappa^\Bog . 
\end{eqnarray*}
Using $\epsilon\leq1$, we can simplify the bound as follows:
\begin{eqnarray}
&\leq &\one_\kappa^\Bog \frac{C}{N}
{+}\epsilon\one_\kappa^\Bog CN^>\one_\kappa^\Bog 
{+}\epsilon^{-1}\one_\kappa^\Bog
C\frac{L^d(N^>)^2}{N}\one_\kappa^\Bog,
 \label{rnepsilon+1}
\end{eqnarray}
By (\ref{qe1a}) and (\ref{qe2a}).
this can be estimated by
\beqnn CN^{-1}+\epsilon C L^{2}(L^{d-1}+\kappa)+
\epsilon^{-1} C L^{d+4}(L^{d-1}+\kappa)^2N^{-1}.
\eeqnn
Setting  $\epsilon=c^{-1/2}N^{-1/2}L^{d/2+1}(L^{d-1}+\kappa)^{1/2}$, which is less than $1$ by   Condition (\ref{condit2}),
we obtain
\beq
 CN^{-1}+ C L^{d/2+3}(L^{d-1}+\kappa)^{3/2} N^{-1/2}.
\label{rnepsilon+}\eeq
By changing $C$, the second term can obviously absorb $CN^{-1}$.
\end{proof}

\begin{proof}[Proof of Thm \ref{upper}]
$Z_\kappa$ is a partial isometry with the initial space contained in the physical space and the final projection $\one_\kappa^\Bog$. 
Therefore,
\begin{eqnarray*}
\vecsp H_N&\leq&\vecsp\Bigg( Z_\kappa^\dagger Z_\kappa H_N
 Z_\kappa^\dagger Z_\kappa\Big|_{\Ran Z_\kappa^\dagger} \Bigg)\\
&=&\vecsp \Bigg(
Z_\kappa H_N
 Z_\kappa^\dagger\Big|_{\Ran\one_\kappa^\Bog}\Bigg).
\end{eqnarray*}

\begin{eqnarray}Z_\kappa H_N
 Z_\kappa^\dagger&\leq&Z_\kappa H_{N,\epsilon}
 Z_\kappa^\dagger\notag\\
&=&
\frac{1}{2}\hat v(\0)(N-1)\one_\kappa^\Bog
+H_\Bog\one_\kappa^\Bog\notag\\
&&+Z_\kappa (H_\Bog -E_\Bog)Z_\kappa^\dagger-(H_\Bog-E_\Bog)\one_\kappa^\Bog
\label{lab1}\\
&&
+Z_\kappa R_{N,\epsilon}  Z_\kappa^\dagger.
\label{lab2}
\end{eqnarray}
By Lemma \ref{lemma4},
\begin{eqnarray}
(\ref{lab1})&\leq&
CL^{2}(L^{d-1}+\kappa)\kappa N^{-1}\label{szac1}\\
&&+
CL^{2}(L^{d-1}+\kappa)N^{-2}.\label{szac2}
\end{eqnarray}
Using $\kappa<\kappa+L^{d-1}$ and later \eqref{condit2} we have
\begin{eqnarray*}
(\ref{szac1})&\leq &CL^2(L^{d-1}+\kappa)^{2} N^{-1}\\
&\leq&CL^{-d/2+1}(L^{d-1}+\kappa)^{3/2} N^{-1/2}.
\end{eqnarray*}
Thus (\ref{szac1}) can be absorbed in
$O(L^{d/2+3}(L^{d-1}+\kappa)^{3/2} N^{-1/2})$.  

We easily check that the same is true in the case of  (\ref{szac2}).  To bound \eqref{lab2} we use Lemma \ref{lemma89}.
\end{proof}

\begin{proof}[Proof of Thm \ref{thm} (2)]
First set $\kappa=0$. Then Condition (\ref{condit2})
 becomes Condition (\ref{war2a}) and 
 Condition (\ref{condit2a})
 becomes Condition (\ref{war2a+}). 
We obtain Thm \ref{thm} (2a).

Next set $\kappa=K_\Bog^j(\p)$. Then Condition (\ref{condit2}) is equivalent to the conjunction of Conditions (\ref{war2a}) and (\ref{war2b}).
Condition (\ref{condit2a}) is equivalent to the conjunction of Conditions (\ref{war2a+}) and (\ref{war2b+}).
This shows Thm \ref{thm} (2b).
\end{proof}

\appendix
\section{Proof of Corollary \ref{coro}}

The proof of the corollary is based on the following lemma:

\begin{lemma}\label{lemmaappendix}
\begin{enumerate}
\item Let $b>1$, $-1-\frac{1}{d+1}\leq \alpha \leq 1$ and $L^{4d+6}\leq b N^{1-\alpha}$.
 Then \vspace{2ex}
\begin{enumerate}
\item \hspace{8ex} $\frac{1}{2}\hat v(\0)(N-1)+E_\Bog\leq E_N+O(N^{-\alpha/2})$;
\item if $K_N^j(\p)\leq (bN^{1-\alpha}L^{-d-6})^{1/3}$, then
\[ \frac{1}{2}\hat v(\0)(N-1)+E_\Bog +K_\Bog^j(\p)\leq E_N+ K_N^j(\p)+O(N^{-\alpha/2});\]
\item if $0\leq\alpha\leq1$ and $K_N^j(\p)\leq bN^{1-\alpha}L^{-d-6}$, then
 \begin{eqnarray*}
 \frac{1}{2}\hat v(\0)(N-1)+E_\Bog+K_\Bog^j(\p)&\leq& E_N+ K_N^j(\p)\\&&+\left(1+K_N^j(\p)\right)O(N^{-\alpha/2}).\end{eqnarray*}
\end{enumerate}
\item
Let $b>1$,  $-1-\frac{1}{2d+1}< \alpha \leq 1$ and $L^{4d+3}\leq bN^{1-\alpha}$. Then there exists $M$ such that if $N>M$, then \vspace{2ex}
\begin{enumerate}\item
\hspace{8ex}
$ E_N\leq \frac{1}{2}\hat v(\0)(N-1)+E_\Bog+O(N^{-\alpha/2})$;
\item if $K_\Bog^j(\p)\leq (bN^{1-\alpha}L^{-d-6})^{1/3}$, then
\[ E_N+K_N^j(\p)\leq \frac{1}{2}\hat v(\0)(N-1)+E_\Bog+K_\Bog^j(\p)+O(N^{-\alpha/2});\]
\item  if  $0<\alpha\leq1$ 
and  $K_\Bog^j(\p)\leq bN^{1-\alpha}L^{-d-6}$, then
\begin{eqnarray*}
 E_N+K_N^j(\p)&\leq&\frac{1}{2}\hat v(\0)(N-1)+E_\Bog+ K_\Bog^j(\p)\\
&&+\left(1+K_\Bog^j(\p)\right)O(N^{-\alpha/2}).\end{eqnarray*}
\end{enumerate}\end{enumerate}
\end{lemma}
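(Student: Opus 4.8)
The plan is to derive Lemma \ref{lemmaappendix} directly from Theorem \ref{thm} by a careful bookkeeping of the error terms, converting the two-parameter estimates into single-parameter ones under the hypothesis $L^{4d+6}\le bN^{1-\alpha}$ (resp. $L^{4d+3}\le bN^{1-\alpha}$). The two parts of the lemma correspond to the lower-bound half (part 1, coming from Theorem \ref{thm}(1)) and the upper-bound half (part 2, coming from Theorem \ref{thm}(2)), and within each part (a) is the $\kappa=0$ specialization of (b), while (c) is a variant of (b) obtained by allowing a larger value of $\kappa$ at the cost of a weaker conclusion. So effectively there are only two substantial estimates to run, one per part.

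For part (1), I would start from \eqref{thmnier2}. First check that the hypothesis $L^{4d+6}\le bN^{1-\alpha}$ (together with $K_N^j(\p)\le (bN^{1-\alpha}L^{-d-6})^{1/3}$, which is designed precisely so that $K_N^j(\p)^3 L^{d+6}\lesssim N^{1-\alpha}$, i.e. $K_N^j(\p)+L^d\lesssim \big(N^{1-\alpha}L^{-d-6}\big)^{1/3}+L^d$) implies Condition \eqref{war1a}: indeed $L^{2d+2}\le L^{4d+6}\le bN^{1-\alpha}\le bN$ for $\alpha\ge 0$, and for the full range $\alpha>-1-\tfrac1{d+1}$ one needs to observe $2d+2 < (4d+6)\cdot\frac{1-\alpha}{1}$ is not automatic, so the correct reading is that $L^{4d+6}\le bN^{1-\alpha}$ already forces $L$ to be at most a fixed power of $N$, from which \eqref{war1a}--\eqref{war1b} follow with a constant $c$ depending only on $b$ and $\alpha$. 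Then plug into the error term $N^{-1/2}L^{d/2+3}\big(K_N^j(\p)+L^d\big)^{3/2}$: using $K_N^j(\p)+L^d\lesssim N^{(1-\alpha)/3}L^{-(d+6)/3}+L^d$, one computes $L^{d/2+3}\big(K_N^j(\p)+L^d\big)^{3/2}\lesssim L^{d/2+3}\big(N^{(1-\alpha)/2}L^{-(d+6)/2}+L^{3d/2}\big)=N^{(1-\alpha)/2}L^{-d/2}+L^{2d+3}$, and multiplying by $N^{-1/2}$ gives $N^{-\alpha/2}L^{-d/2}+N^{-1/2}L^{2d+3}$. The first summand is $O(N^{-\alpha/2})$, and the second is $O(N^{-\alpha/2})$ precisely because $L^{2d+3}\le L^{4d+6}\le bN^{1-\alpha}$ gives $N^{-1/2}L^{2d+3}\le \sqrt{b}\,N^{-\alpha/2}$ — wait, that is $N^{-1/2}(N^{1-\alpha})^{1/2}=N^{-\alpha/2}$ only after taking the square root of $L^{4d+6}\le bN^{1-\alpha}$, i.e. $L^{2d+3}\le \sqrt{b}\,N^{(1-\alpha)/2}$; multiplying by $N^{-1/2}$ gives exactly $O(N^{-\alpha/2})$. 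This is the arithmetic that makes the exponent $4d+6$ the right threshold. For (1a) set $\kappa=0$; for (1c), with the weaker hypothesis $K_N^j(\p)\le bN^{1-\alpha}L^{-d-6}$ one bounds $\big(K_N^j(\p)+L^d\big)^{3/2}\le \big(1+K_N^j(\p)\big)^{3/2}(1+L^d)^{3/2}$ type estimate and carries the $\big(1+K_N^j(\p)\big)$ factor through, absorbing the rest into $O(N^{-\alpha/2})$; the restriction $\alpha\ge 0$ there is what guarantees the remaining powers of $N$ are nonpositive.

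For part (2) I would argue symmetrically from \eqref{thmnier4}, now needing both \eqref{war2a}--\eqref{war2a+} and \eqref{war2b}--\eqref{war2b+}. The extra smallness conditions \eqref{war2a+}, \eqref{war2b+} (with the constant $c_1$ from Theorem \ref{thm}) are the reason part (2) requires ``there exists $M$ such that if $N>M$'': one needs $N$ large enough (depending on $b,\alpha,c_1,v,d$) for $L^{d+1}\le c_1N$ and $K_\Bog^j(\p)\le c_1NL^{-2}$ to hold given only the polynomial bound $L^{4d+3}\le bN^{1-\alpha}$ and the stated bound on $K_\Bog^j(\p)$. The error term $N^{-1/2}L^{d/2+3}\big(K_\Bog^j(\p)+L^{d-1}\big)^{3/2}$ is handled exactly as before; note it has $L^{d-1}$ in place of $L^d$, which is why the threshold drops from $4d+6$ to $4d+3$: the worst term becomes $N^{-1/2}L^{d/2+3}L^{3(d-1)/2}=N^{-1/2}L^{2d+3/2}$, and $L^{2d+3/2}\le \sqrt{b}\,N^{(1-\alpha)/2}$ follows from $L^{4d+3}\le bN^{1-\alpha}$. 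Finally, Corollary \ref{coro} itself is obtained by combining the two halves: part (1) of the corollary follows from (1a) and (2a) (squeeze), and parts (2),(3) from (1b),(1c),(2b),(2c), together with the elementary fact that $\min(K_N^j(\p),K_\Bog^j(\p))$ being small forces, via the other inequality direction, the larger of the two to be controlled as well. I expect the only real obstacle to be the exponent bookkeeping — making sure the $c,c_1,M$ dependencies genuinely depend only on $v,d,b,\alpha$ and not on $N,L,j$ — rather than any conceptual difficulty; the threshold exponents $4d+6$ and $4d+3$ and the cube-root scalings of the $K$-hypotheses are all forced by requiring $N^{-1/2}L^{d/2+3}(\kappa+L^{d\text{ or }d-1})^{3/2}=O(N^{-\alpha/2})$.
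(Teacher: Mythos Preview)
Your approach is essentially the paper's: deduce each item from the corresponding part of Theorem~\ref{thm} by exponent bookkeeping, splitting the error term as $N^{-1/2}L^{d/2+3}(\kappa+L^{d\text{ or }d-1})^{3/2}\lesssim N^{-1/2}L^{d/2+3}\kappa^{3/2}+N^{-1/2}L^{2d+3\text{ or }2d+3/2}$ and using the hypotheses on $L$ and $\kappa$ to make each piece $O(N^{-\alpha/2})$ (for (c), factor $\kappa^{3/2}=\kappa\cdot\kappa^{1/2}$ and absorb $\kappa^{1/2}$). The one place your sketch is genuinely loose is the verification of Conditions~\eqref{war1a}--\eqref{war1b} and \eqref{war2a}--\eqref{war2b+} for the full range of $\alpha$: the clean argument (which explains the thresholds on $\alpha$) is $L^{2d+2}=(L^{4d+6})^{(d+1)/(2d+3)}\le (bN^{1-\alpha})^{(d+1)/(2d+3)}\le cN$ iff $(1-\alpha)\tfrac{d+1}{2d+3}\le 1$, i.e.\ $\alpha\ge -1-\tfrac1{d+1}$, and analogously for the other conditions --- your discussion around ``$2d+2<(4d+6)\cdot\frac{1-\alpha}{1}$'' is muddled, and the stray $L^{-d/2}$ in your error computation should just be $1$.
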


\begin{proof}
To prove (1), resp. (2) we use Thm \ref{thm} (1), resp. (2). We give a proof of the latter part, since it is slightly more involved (because of the parameter $c_1$).

(2a): First we check  Condition (\ref{war2a}):
\begin{eqnarray}
L^{2d+1}&=&\big(L^{4d+3}\big)^{\frac{2d+1}{4d+3}}\leq \big(bN^{1-\alpha}\big)^{\frac{2d+1}{4d+3}}.
\label{noto9}\end{eqnarray}
For  $-1-\frac{1}{2d+1}\leq \alpha$ we have (\ref{noto9})$\leq cN$.

Next,
\begin{eqnarray}
L^{d+1}&=&\big(L^{4d+3}\big)^{\frac{d+1}{4d+3}}\leq\big(bN^{1-\alpha}\big)^{\frac{d+1}{4d+3}}.
\label{noto9+}\end{eqnarray}
We have (\ref{noto9+})$\leq cN N^{\frac{-3d-2-\alpha(d+1)}{4d+3}}$. Therefore, for  $-1-\frac{1}{2d+1}\leq \alpha$,  
  Condition (\ref{war2a+}) is satisfied for large enough $N$.

Then we apply Thm \ref{thm} (2)(a), using
\[N^{-1/2}L^{2d+\frac32}\leq N^{-1/2}(bN^{1-\alpha})^{1/2}=O(N^{-\alpha/2}).\]

(2b): We check  Condition (\ref{war2b}):
\begin{eqnarray}\notag
K_{\Bog}^j(\p)&\leq&\big(bN^{1-\alpha}L^{-d-6}\big)^{1/3}\\\notag
&\leq&\big(bN^{1-\alpha}L^{-d-6}\big)^{1/3}\big(bN^{1-\alpha}L^{-4d-3}\big)^{\frac{2d}{12d+9}}\\
&=&C N^{\frac{(1-\alpha)(2d+1)}{12d+9}}L^{-d-2}.\label{noto}
\end{eqnarray}
For $ -1-\frac{1}{2d+1}\leq\alpha$ we have (\ref{noto})$\leq CNL^{-d-2}$.

Also
\begin{eqnarray*}
\ref{noto}=O(N^{\frac{-(2+\alpha)(4d+3)+2d(1-\alpha)}{12d+9}}) NL^{-2-d}
\end{eqnarray*}
which implies
\begin{eqnarray}\notag
K_{\Bog}^j(\p)
&\leq& O(N^{\frac{-(2+\alpha)(4d+3)+2d(1-\alpha)}{12d+9}}) NL^{-2}
.\notag
\end{eqnarray}
Therefore, if $ -1-\frac{1}{2d+1}<\alpha$,  Condition (\ref{war2b+}) is satisfied for large enough $N$.

We clearly have
\begin{eqnarray} N^{-1/2}L^{d/2+3}\big(K_{\Bog}^j(\p)+L^{d-1}\big)^{3/2}&\leq&
2^{3/2}N^{-1/2}L^{d/2+3}K_{\Bog}^j(\p)^{3/2}\label{noto0}\\&&+2^{3/2}N^{-1/2}L^{2d+\frac32}.
\label{noto1}\end{eqnarray} 
We already know that   (\ref{noto1}) is $O(N^{-\alpha/2})$. Thus to apply Thm \ref{thm} (2b) we need only to bound (\ref{noto0}):
\[N^{-1/2}L^{d/2+3}K_{\Bog}^j(\p)^{3/2}\leq
N^{-1/2}L^{d/2+3}\big(bN^{1-\alpha}L^{-d-6}\big)^{1/2}=O(N^{-\alpha/2}).
\]

(2c):  Condition (\ref{war2b}) is trivially satisfied, since for $L\geq1$, $N\geq1$ and $\alpha > 0$
\[K_{\Bog}^j(\p)\leq bN^{1-\alpha}L^{-d-6}\leq bNL^{-d-2}.
\]

We have
\begin{eqnarray}\notag
K_{\Bog}^j(\p)
&\leq&bN^{1-\alpha}L^{-d-6}
L^{d+4}\\
&=&O(N^{-\alpha}) NL^{-2}
.\notag
\end{eqnarray}
Therefore Condition (\ref{war2b+}) is satisfied for large enough $N$.

To apply Thm \ref{thm} (2b) we  bound  (\ref{noto0}):
\begin{eqnarray*}
N^{-1/2}L^{d/2+3}K_{\Bog}^j(\p)^{3/2}&=&
b^{1/2}N^{-\alpha/2}\big(b^{-1}N^{-(1-\alpha)}L^{d+6}\big)^{1/2}K_{\Bog}^j(\p)^{3/2}\\
&\leq& O(N^{-\alpha/2})K_{\Bog}^j(\p).\end{eqnarray*}
\end{proof}

\noindent{\em Proof of Corollary \ref{coro}} Part (1) follows directly from Lemma \ref{lemmaappendix} (1a) and (2a).

Let us prove (2). To simplify notation we drop $\p$ from
 $K_N^j(\p)$ and $K_\Bog^j(\p)$.

Assume first that $K_N^j\leq K_\Bog^j$. By Lemma \ref{lemmaappendix} (1b) for some $C>0$ 
\begin{eqnarray*}
\frac{1}{2}\hat v(\0)(N-1)+E_\Bog+K_\Bog^j&\leq &E_N+K_N^j+CN^{-\alpha/2}\\
&\leq &E_N+K_\Bog^j+CN^{-\alpha/2}.\end{eqnarray*}
Thus
\begin{eqnarray*}\frac{1}{2}\hat v(\0)(N-1)+E_\Bog-E_N+K_\Bog^j-CN^{-\alpha/2}&\leq& K_N^j\leq K_\Bog^j.
\end{eqnarray*}
By Lemma \ref{lemmaappendix} (2a), \[-CN^{-\alpha/2}\leq
\frac12 \hat{v}(\0)(N-1)+E_\Bog-E_N.\]
Hence the statement follows.

Assume now that $K_\Bog^j\leq K_N^j$. Then we use Lemma \ref{lemmaappendix} (2b) and obtain
\begin{gather*}
K_\Bog^j\leq K_N^j\leq \frac{1}{2}\hat v(\0)(N-1)+E_\Bog-E_N+K_\Bog^j+CN^{-\alpha/2}.
\end{gather*}
By Lemma \ref{lemmaappendix} (1a),
\[\frac{1}{2}\hat v(\0)(N-1)+E_\Bog-E_N\leq CN^{-\alpha/2}.\]
 The statement follows again. This ends the proof of part (2).

The proof of part (3) is similar, except that one  uses Lemma \ref{lemmaappendix} (1c) and (2c).
\qed 
\\
\\

\textbf{Acknowledgments.} The research of J.D. and M.N was supported in part by the National Science Center (NCN) grant No. 2011/01/B/ST1/04929. The work of M.N. was also supported by the Foundation for Polish Science International PhD Projects Programme co-financed by the EU European Regional Development Fund and the National Science Center (NCN) grant No. 2012/07/N/ST1/03185.

We thank anonymous referees for their useful comments.

\end{document}